\documentclass[12pt,draft]{article}
\usepackage[final]{graphicx} 
 
 %
\pdfoutput=1

 \usepackage{tikz}
\usetikzlibrary{trees}
\usepackage[english]{babel}
\usepackage[utf8]{inputenc}
\usepackage[T1]{fontenc}
\usepackage[pdftex, pdftitle={Article}, pdfauthor={Author}]{hyperref} 
\usepackage[margin=1in]{geometry} 

\usetikzlibrary {positioning}
\definecolor {processblue}{cmyk}{0.96,0,0,0}

\setlength{\marginparwidth}{2.5cm}
 
\usepackage{comment}

\usepackage{fancyhdr}
\pagestyle{plain}
\fancyhead[L,C]{}
\fancyhead[R]{\bf \today\ \currenttime}
\fancyfoot[C]{\thepage}

\usepackage{braket}
\usepackage{amssymb,bbm}
\usepackage{amsmath}
\usepackage{mathtools}




\usepackage{latexsym}
\usepackage{amsthm}
\usepackage[capitalise]{cleveref}

\usepackage{comment}
\usepackage{mathdots}
\usepackage{url}
\usepackage{algorithm}
\usepackage{algorithmic}
\usepackage{appendix}

\newtheorem{theorem}{Theorem}[section]
\newtheorem{lemma}[theorem]{Lemma}
\newtheorem*{lemma-non}{Lemma}
\newtheorem*{theorem-non}{Theorem}

\newtheorem{corollary}[theorem]{Corollary}
\newtheorem{definition}[theorem]{Definition}

\newtheorem{problem}[theorem]{Problem}

\newtheorem{example}[theorem]{Example}

\theoremstyle{definition}
{

}


\mathchardef\mhyphen="2D

\newcommand{\R}{\mathbb{R}}

\def\({\left(}
\def\){\right)}

\usepackage{graphicx}
\definecolor{greenn}{rgb}{0,0.8,0.2}
\definecolor{bluue}{rgb}{0.3,0,0.7}

\usepackage{ifdraft}
\ifdraft{\newcommand{\authnote}[3]{{\color{#3}{\text{ \bf #1:}} #2}}}{\newcommand{\authnote}[3]{}}


\newcommand{\Or}{\mathcal{O}}

\newcommand\polylog[1]{\mathrm{polylog}\left( #1 \right)}
\newcommand{\poly}{\operatorname{poly}}
\newcommand{\N}{\Gamma}

\usepackage{authblk}

\newcommand{\st}{$s$-$t$ }
\newcommand{\SG}{\mathcal{G}}
\newcommand{\SV}{\mathcal{V}} 
\newcommand{\SE}{\mathcal{E}}

\title{Exponential Quantum Advantage for Pathfinding \\ in Regular Sunflower Graphs}
\author[1,5]{Jianqiang Li}
\author[2,3,4]{Yu Tong}
\affil[1]{Department of Computer Science, Pennsylvania State University, State College, PA, USA}
\affil[5]{Department of Computer Science, Rice University, Huston, TX, USA}
\affil[2]{Institute for Quantum Information and Matter, California Institute of Technology, Pasadena, CA, USA}
 \affil[3]{Department of Mathematics, Duke University, Durham, NC, USA}
 \affil[4]{Department of Electrical and Computer Engineering, Duke University, Durham, NC, USA}

\begin{document}

\maketitle
\begin{abstract}
    Finding problems that allow for superpolynomial quantum speedup is one of the most important tasks in quantum computation. A key challenge is identifying problem structures that can only be exploited by quantum mechanics. In this paper, we find a class of graphs that allows for exponential quantum-classical separation for the pathfinding problem with the adjacency list oracle and name it the regular sunflower graph.  We prove that, with high probability, a regular sunflower graph of degree at least $7$ is a mild expander graph, that is, the spectral gap of the graph Laplacian is at least inverse polylogarithmic in the graph size.
    
    
    We provide an efficient quantum algorithm to find an \st path in the regular sunflower graph while any classical algorithm takes exponential time. This quantum advantage is achieved by efficiently preparing a $0$-eigenstate of the adjacency matrix of the regular sunflower graph as a quantum superposition state over the vertices and this quantum state contains enough information to efficiently find an \st path in the regular sunflower graph.  

    Because the security of an isogeny-based cryptosystem depends on the hardness of finding an \st path on an expander graph \cite{Charles2009}, a quantum speedup of the pathfinding problem on an expander graph is of significance. Our result represents a step towards this goal as the first provable exponential speedup for pathfinding in a mild expander graph.

\end{abstract}

\section{Introduction} \label{sec:intro}

Quantum algorithms running on a fault-tolerant quantum computer have demonstrated an exponential advantage over classical algorithms in solving certain problems, such as simulating quantum physics~\cite{feynman1982SimQPhysWithComputers}, integer factoring, and Pell’s equation~\cite{shor1994Factoring,hallgren2007polynomial}. Since the invention of Shor's algorithm, there have been significant advances in quantum algorithms and tools. These include the adiabatic quantum computation algorithm \cite{farhi2000QCompAdiabatic}, the quantum approximate optimization algorithm \cite{farhi2014QAOA}, the quantum linear system algorithm \cite{harrow2009QLinSysSolver}, and the more recent quantum singular value transformation framework \cite{gilyen2018QSingValTransfArXiv}. Despite these advancements, only a handful of problems have been discovered that admit superpolynomial advantages \cite{aaronson2022much}.



Finding problems that enable exponential quantum speedup remains one of the biggest challenges in the area of quantum computation. The key challenge is to identify specific problem structures that can be uniquely exploited by quantum mechanics. Graph theory is one of the most promising areas for investigating these structures, as graphs of exponential size exhibit a wide range of complex structures. 


The most well-known graph structure that allows for exponential quantum-classical separations is the welded tree graph \cite{childs2003ExpSpeedupQW}. It consists of two binary trees of height $n$ connected by a cycle that alternates between the leaves of the two trees, making its size exponentially large as ($2^{n+2}-2$). With access to the graph through an adjacency list oracle, the welded tree problem starts at a root vertex $s$ to find the other root 
$t$, both of which are distinguished from the other vertices by having degree 2. The inherent structure of the welded tree graph creates conditions in which quantum algorithms can provide an exponential speedup over classical algorithms, making it an ideal candidate for exploring quantum computational advantages.


The welded tree graph structure is crucial in demonstrating exponential quantum speedups in various related problems. For example, the exponential advantage in adiabatic quantum computation with the no sign problem has been achieved by efficiently finding a marked vertex in a graph that is similar to the welded tree graph \cite{gilyen2020ExpAdvAdiabStoqQC}. The exponential quantum speedup of the graph property testing problem is obtained in the welded tree candy graph \cite{benDavid2020SymmetriesGraphPropertiesQSpeedups}.
More recently, the exponential quantum advantage of finding a marked vertex in the welded tree graph has been generalized to more general types of graphs, such as families of random hierarchy graphs \cite{babbush2023exponential}.

In this work, we study the problem of computing an \st path in a graph given the vertex $s$, the indicator function of $t$, and the adjacency list oracle (\cref{defn:classical_adjacency_list_oracle}) with access to the graph. Although quantum algorithms \cite{childs2003ExpSpeedupQW,jeffery2023multidimensional,babbush2023exponential,balasubramanian2023exponential} achieve exponential speedup in the welded tree graph for finding a marked vertex, finding an \st path in the welded tree graph is still one of the top open problems in the field of quantum query complexity \cite{aaronson2021open}. 
Recent results indicate that a certain class of quantum algorithms cannot efficiently find an \st path in the welded tree graph \cite{childs2022quantum}. However, embedding the welded tree graphs into other graph structures has achieved exponential quantum speedups for the pathfinding problem \cite{li2023exponential,li2023multidimensional}. Notably, the quantum algorithms that realize these exponential speedups \cite{li2023exponential,li2023multidimensional} are not included in the class of algorithms considered in \cite{childs2022quantum}. The adjacency list oracle of an exponential-size graph for the pathfinding problem can be instantiated. For example, in an exponentially large supersingular isogeny graph with supersingular elliptic curves as vertices and isogenies between them as edges, the adjacency list oracle can be instantiated by computing isogenies between supersingular elliptic curves.

Pathfinding problem in exponentially large expander graphs is important because the security of isogeny-based cryptography is based on the hardness of finding an \st path supersingular isogeny graphs \cite{Charles2009,eisentrager2018supersingular,wesolowski2022supersingular}, which is a class of expander graphs. 
Although a specific isogeny based scheme, namely the supersingular isogeny Diffie–Hellman key exchange (SIDH) \cite{jao2011towards}, has been broken recently \cite{castryck2023efficient}, the general problem of finding an \st path in the supersingular isogeny graph (which is an expander graph) remains unaffected. Thus, many other isogeny-based cryptosystems such as CSIDH \cite{castryck2018csidh} and SQISign \cite{deFeo2020sqisign} are immune to this attack \cite{Galbraith2022}. 
In addition, the welded tree path graph \cite{li2023exponential} and the welded tree circuit graph \cite{li2023multidimensional} that allow for an exponential speedup for the pathfinding problem are far from being an expander graph.

In this paper, we find a graph that is close to the expander graph and allows for exponential quantum speedup for the pathfinding problem. This graph deviates from the welded tree structure and we name it the \emph{regular sunflower graph}. A \emph{regular sunflower graph} consists of $n$ trees of height $m$ with $m=\Theta(n)$. The roots $s_1,s_2,\ldots, s_n$ of the $n$ trees $\mathcal{T}_1,\mathcal{T}_2,\ldots, \mathcal{T}_n$ are connected by the edges $\{s_1,s_2\} ,\{s_2,s_3\}, \ldots,\{s_{n-1},s_{n}\}, \{s_{n},s_{1}\}$ as a cycle. The leaves of trees $\mathcal{T}_i,\mathcal{T}_{i+1}$ for $i=1,2\ldots,n-1$ and $\mathcal{T}_1,\mathcal{T}_{n}$ are connected by random perfect matching such that the graph is regular.
An example of the $3$-regular graph is shown in \cref{fig:sunflowergraphd=3}.
The formal definition of a regular sunflower graph is in \cref{defn:randomG_build}.

\begin{figure}
    \centering
    \includegraphics[width=0.6\textwidth]{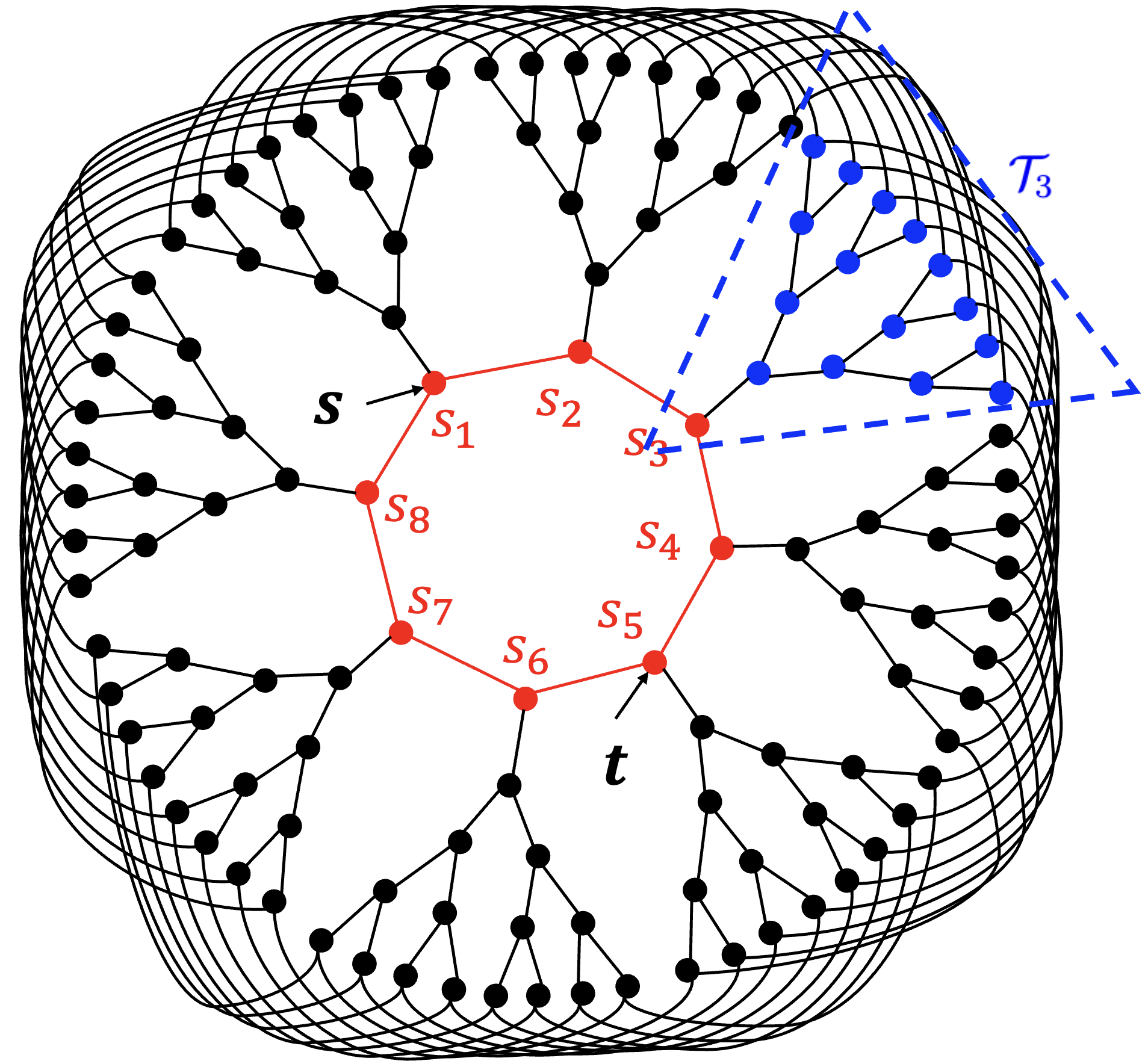}
    \caption{An example of the regular sunflower graph $\SG$ with $d=3, m=5, n=8$. The  $s$ and $t$ vertices are marked out. The tree within the dashed rectangle is the subtree $\mathcal{T}_i$ (in this instance $i=3$). The leaves of the trees $\mathcal{T}_i$ are connected via $(d-1)/2$ random perfect matchings. We note that we only prove the expansion property for $d\geq 7$, and $d=3$ is chosen here for visual clarity. }
   \label{fig:sunflowergraphd=3}
\end{figure}



Specifically, we show that, with high probability, the regular sunflower graph of degree at least $7$ is a mild expander graph. A mild expander graph is close to an expander graph in the sense that a random walk on a mild expander graph converges to a uniform distribution in $\polylog{|\SV|}$ time while the time needed for an expander graph is $O(\log |\SV|)$. The spectral gap of the adjacency matrix of the mild expander graph is $1/\poly(\log |\SV|)$ while the spectral gap of an expander graph is constant. Here, the spectral gap is the difference between the largest eigenvalue and the second largest eigenvalue of the adjacency matrix, which is equal to the spectral gap around the $0$ eigenvalue of the graph Laplacian. For comparison, the spectral gap of the adjacency matrix of the welded tree path graph \cite{li2023exponential} and the welded tree circuit graph \cite{li2023multidimensional} is $1/\poly(|V|)$, as removing a constant number of edges would disconnect both graphs.

The pathfinding problem in this paper is defined as follows.

\begin{problem}[Pathfinding Problem] \label{pro:pathfinding}
    Given the adjacency list oracle with access to the regular sunflower graph, the starting vertex $s=s_1$, and an indicator function $f_t(v)$ such that:
    \begin{equation}
    f_t(v)=
    \begin{cases}
      1, & \text{if}\ v = t \\
      0, & \text{otherwise}
    \end{cases}
  \end{equation}
the pathfinding problem is to compute an \st path. We fix $t$ as $t=s_{n/2+1}$ in this paper.
\end{problem}


Although many quantum algorithms achieve both polynomial and exponential speedups for the pathfinding problem in various graph structures, it remains unclear how to adapt these algorithms to demonstrate an exponential quantum advantage for the pathfinding problem in expander graphs or mild expander graphs. For example, the amplitude amplification technique achieves polynomial speedup for various graph problems, including the pathfinding problem in a general graph with oracle access \cite{durr2006quantum}. Quantum walk-based algorithms demonstrate polynomial quantum advantages in special graphs such as regular tree graphs, chains of star graphs \cite{reitzner2017finding,hillery2021finding,koch2018finding}, and supersingular isogeny graphs \cite{jaques2019quantum,tani2009claw}. The speedups resulting from these works are polynomial and therefore fall short of the exponential speedup that we aim for.

Recently, two distinct quantum algorithms leverage the quantum electrical flow state to show quantum advantages for finding an \st path in graphs with a unique \st path \cite{jeffery2023quantum} and graphs composed of welded Bethe trees \cite{Seaneletr}. \cite{jeffery2023quantum} also presents a quantum algorithm that uses a quantum subroutine for path detection to find an \st path. This quantum algorithm is faster than the quantum algorithm in \cite{durr2006quantum} for graphs with all \st paths being short. 
Again, both works mentioned above achieved polynomial, rather than exponential, speedups.

The first quantum algorithm that demonstrated an exponential quantum speedup for the pathfinding problem on the welded tree path graph \cite{li2023exponential} relies heavily on the graph being non-regular because vertices with varying degrees are used as markers to help find the path. Therefore, it cannot be extended to find an \st path in regular expander graphs. The second such quantum algorithm uses a multidimensional electrical network framework to generate a quantum superposition state on the edges of the welded tree circuit graph \cite{li2023multidimensional}, which has a large overlap with an \st path. Although this multidimensional electrical network framework has the potential to produce an exponential quantum speedup for the pathfinding problem, analyzing the multidimensional electrical network in the regular sunflower graph is challenging to demonstrate an exponential quantum advantage.

In this paper, we develop a quantum algorithm for the pathfinding problem that achieves an exponential speedup over classical algorithms in the regular sunflower graph.
Our quantum algorithm is complementary to the existing two quantum algorithms that achieve exponential speedups for the pathfinding problem. We compare the strategies used in these three algorithms below:

\begin{itemize}
    \item Problem Reduction \cite{li2023exponential}: Reduce the \st pathfinding problem in the welded tree path graph to vertex finding problem in the welded tree graph.

    \item Edge Superposition \cite{li2023multidimensional}:  Use the multidimensional electrical network framework to generate a quantum superposition state over edges. This quantum state has significant overlap with edges along an \st path in the welded tree circuit graph.
    
    \item Vertex Superposition {\textbf{(This work)}}: Use quantum eigenstate filtering technique to prepare a $0$-eigenstate of the adjacency matrix as a quantum superposition state over vertices. This quantum state has a large overlap with the vertices of an \st path in regular sunflower graphs. 
    
\end{itemize}

Specifically, we combine the quantum singular value transformation (QSVT) framework \cite{gilyen2018QSingValTransfArXiv} with the minimax filtering polynomial \cite{lin2019OptimalQEigenstateFiltering} to prepare a 0-eigenstate of the adjacency matrix $A$ as a quantum state, which
is a superposition over the vertices of the regular sunflower graph and has a large overlap with the vertices of an \st path. 
The existence of a 0-eigenvector of the adjacency matrix for certain special tree graphs has previously been used to demonstrate quantum advantages in formula evaluation \cite{ambainis2010any,farhi2007quantum}. More recently, the 0-eigenvector of the adjacency matrix for some random hierarchical graphs has been employed to find a marked vertex \cite{balasubramanian2023exponential}. To the best of our knowledge, this paper is the first to use the information contained in the $0$-eigenspace of an adjacency matrix to show an exponential quantum advantage for the pathfinding problem. 

While our results build upon prior work such as \cite{childs2002quantum,balasubramanian2023exponential}, which use Hamiltonian simulation for a random time followed by measurement to locate a special vertex, directly applying their techniques to our setting is nontrivial for several reasons.
First, the method in \cite{childs2003ExpSpeedupQW} requires full knowledge of the eigenspaces and eigenvalues of the underlying Hamiltonian (as stated in \cite[Lemma 1]{childs2003ExpSpeedupQW}), along with the ability to compute overlaps between these eigenstates and individual basis states. In our setting, this information is not readily accessible, making the approach cumbersome. Second, the approach in \cite{balasubramanian2023exponential} requires that the zero eigenspace is one-dimensional (see the end of Section 2 of \cite{balasubramanian2023exponential}), which simplifies the analysis of post-measurement behavior. In contrast, our setting involves a two-dimensional zero eigenspace within the relevant invariant subspace (\cref{cor:spectral_properties}), which complicates the dynamics under random-time evolution and introduces ambiguity in the resulting state. This makes it difficult to argue that the measurement yields sufficient support on the desired s-t path.



\subsection{Summary of the Main Results}

 We construct a family of $d$-regular ($d\geq 3$ is an odd integer) regular sunflower graphs $\SG=(\SV,\SE)$ (See Figure~\ref{fig:sunflowergraphd=3} for an example), with $|\SV|=\exp(\Or(n))$, and prove the following:
\begin{enumerate}
    \item With probability at least $1-\exp(-\Omega(n))$, the graph $\SG$ is a mild expander graph with spectral gap at least $1/\poly(n)$ when $d\geq 7$ (Theorem~\ref{thm:graph_expansion}).
    \item A quantum algorithm can find the \st path for a specific pair of vertices $s$ and $t$ (given $s$ but not $t$ at the beginning) in time $\poly(n)$ (Theorem~\ref{thm:the_algorithm}).
    \item Any classical algorithm requires $\exp(\Omega(n))$ time to find $t$ starting from $s$ with large probability (Theorem~\ref{thm:classical_lower_bound}).
\end{enumerate}

\subsection{Overview of the algorithm}


Our quantum algorithm, described in detail in \cref{alg:mildexpanderfinding}, Section~\ref{sec:Algorithm}, first prepares a $0$-eigenstate of the adjacency matrix $A$ of the regular sunflower graph as a quantum state. Then we show that having access to this quantum state enables us to efficiently find the \st path in the regular sunflower graph.

There are two standard quantum algorithmic tools that can be used to obtain a $0$-eigenstate of the adjacency matrix $A$. One is quantum phase estimation; the other is using quantum singular value transformation (QSVT) \cite{gilyen2018QSingValTransfArXiv} together with a filtering polynomial \cite{lin2019OptimalQEigenstateFiltering}. 
The input of the quantum phase estimation algorithm is $U=e^{-iAt}$ and $\ket{s}$. By post-selecting the estimated phase, one can approximately prepare a 0-eigenstate as a quantum state \cite[Proposition 3]{ge2017FasterGroundStatePrep}. In this work, we will take the second approach and prepare the eigenstate using QSVT, which has the advantage of achieving a better dependence on the approximation error \cite[Theorem 6]{tong2022quantum}.

In using QSVT, we will first need to construct a \emph{block encoding} of the adjacency matrix $A$, which is a unitary circuit encoding $A$ as part of the unitary matrix \cite{gilyen2018QSingValTransfArXiv}. A precise definition of the block encoding is given in \cref{defn:block_encoding}, and in Section~\ref{sec:the_oracle_models} we show that the adjacency list oracle can be used to construct a block encoding with constant overhead. 

With the block encoding of the adjacency matrix $A$, which is a Hermitian matrix, QSVT allows us to implement a matrix polynomial in the following sense: starting from a state $\ket{\phi}$, it yields a quantum circuit $\mathcal{V}_{\mathrm{circ}}$ such that
\[
\mathcal{V}_{\mathrm{circ}} \ket{0}_{\gamma}\ket{\phi}_{\beta} =\ket{0}_{\gamma}  f(A/\alpha)  \ket{\phi}_{\beta} + \ket{\perp}_{\gamma\beta},
\]
where $(\bra{0}_{\gamma}\otimes I_{\beta})\ket{\perp}_{\gamma\beta}=0$.
Here $f(x)$ is a degree $2\ell$ even polynomial satisfying $|f(x)|\leq 1$ for all $-1\leq x\leq 1$. $\alpha$ is a normalization factor from block encoding that ensures $\|A/\alpha\|\leq 1$. In our setting $\alpha=d^2$, as will be shown in \cref{lem:block_encoding_adjacency_matrix_new}. After applying $\mathcal{V}_{\mathrm{circ}}$ we can measure the ancilla qubits in register $\gamma$, and from the above equation we can see that, upon obtaining the all-0 state, we will have the state $f(A/\alpha)\ket{\phi}$ in the register $\beta$. Therefore the matrix polynomial $f(A/\alpha)$ can be implemented with QSVT, which succeeds with a certain probability.
In our algorithm, we will let $\ket{\phi}=\ket{s}$ where $s$ is the starting vertex given in \cref{pro:pathfinding}. Therefore, the state we prepare through QSVT is $f(A/\alpha)\ket{s}$.

\paragraph{From adjacency matrix $A$ to effective Hamiltonian $H$}
A very important fact about the adjacency matrix of the regular sunflower graph, similar to the welded tree graphs in \cite{childs2003ExpSpeedupQW} and the random hierarchical graphs in \cite{balasubramanian2023exponential}, is that $\ket{s}$ lives in a low-dimensional invariant subspace of $A$. In our scenario, even though $A$ acts on a Hilbert space of dimension $\exp(\Theta(n))$ ($n$ is the number of trees we use to construct the regular sunflower graph), this invariant subspace is only $\Or(n^2)$-dimensional (assuming $m=\Theta(n)$ as mentioned above). This invariant subspace, which we call the \emph{symmetric subspace} and denote by $\mathcal{S}$, is defined in \cref{def:SubspaceS} (we prove that it is indeed invariant in \cref{lem:invariant_subspace}). $\mathcal{S}$ is spanned by a set of $\Or(n^2)$ supervertex states defined in \cref{eq:supervertex_state_defn}, which include $\ket{s}$ as one of them.

Because $\mathcal{S}$ is an invariant subspace of $A$, we can consider a restriction of $A$ to $\mathcal{S}$, and denote it by $H:\mathcal{S}\to\mathcal{S}$. $H$ is therefore a linear operator acting on an $\Or(n^2)$-dimensional Hilbert space, and thus much easier to study than $A$. We call $H$ the \emph{effective Hamiltonian} because if we consider the Hermitian matrix $A$ to be a Hamiltonian, then the dynamics it generates will be completely determined by $H$ within the subspace $\mathcal{S}$. A rigorous definition of the effective Hamiltonian of $H$ can be found in \cref{defn:effective_hamiltonian}.

Because $H$ is the restriction of $A$ to $\mathcal{S}$, one can readily prove that for any polynomial $p(x)$, $p(A)\ket{\phi}=p(H)\ket{\phi}$ for any $\ket{\phi}\in\mathcal{S}$ (\cref{lem:effective_ham_polynomial}). In particular, because $\ket{s}\in\mathcal{S}$, we have
\[
f(A/\alpha)\ket{s} = f(H/\alpha)\ket{s}.
\]
In other words, QSVT gives us the ability to implement matrix functions of the effective Hamiltonian $H$.

\paragraph{Prepare a $0$-eigenstate of the adjacency matrix $A$ as a quantum state}

Recall that our goal is to prepare a 0-eigenstate of $A$. We will in fact prepare a 0-eigenstate of $H$, which is guaranteed to also be a 0-eigenstate of $A$ because $H$ is the restriction of $A$ to an invariant subspace. This is done by choosing $f$ so that it maps all non-zero eigenvalues of $H/\alpha$ to a small number (in fact exponentially small in $\ell$), and thereby filtering out the unwanted eigenstates. At the same time, we want $f(0)=1$ so that the 0-eigenstate is preserved. Because the cost of QSVT depends on $2\ell$, the degree of $f$, we also want the degree to be as small as possible. In \cite{lin2019OptimalQEigenstateFiltering}, a minimax filtering polynomial was constructed with these considerations in mind, which we will use here. The specific form of the polynomial $f(x):=R_{\ell}(x;\Delta/\alpha)$ is given in \eqref{eq:minimax_filtering_polynomial_appendix}. This polynomial allows us to approximate the 0-eigenspace projection operator of $H$, which we denote by $\Pi_0$:
\[
\|f(H/\alpha)-\Pi_0\|\leq 2e^{-\sqrt{2}\ell \Delta/\alpha},
\]
where $\Delta$ is the spectral gap around the $0$ eigenvalue of $H$, which is the smallest absolute value of the nonzero eigenvalues of $H$. The above bound of the approximation error is proved in \cref{lem:subspace_eigenstate_filtering_exact}. Applying $f(A/\alpha)$ to the initial state $\ket{s}$ therefore approximately prepares a 0-eigenstate of $H$:
\begin{equation}
\label{eq:state_prepared_through_qsvt}
    f(A/\alpha)\ket{s} = f(H/\alpha)\ket{s} \approx \Pi_0\ket{s}.
\end{equation}
Here $\Pi_0\ket{s}$ is an (unnormalized) $0$-eigenstate of $H$, and also is an eigenstate of $A$.

The above eigenstate filtering technique is described in detail in \cref{thm:robust_subspace_eigenstate_filtering}, with proof provided in Appendix~\ref{sec:robust_subspace_eigenstate_filtering}. Besides what was discussed above, we also take into account the error coming from inexact block encoding in \cref{thm:robust_subspace_eigenstate_filtering}.


In order to prepare the $0$-eigenstate of the adjacency matrix $A$ of the regular sunflower graph as the quantum state $\frac{\Pi_0\ket{s}}{\|\Pi_0\ket{s}\|}$ in polynomial time, the following two conditions must be satisfied:

\begin{enumerate}
    \item The spectral gap $\Delta$ of $H$ around the $0$ eigenvalue should be at least $1/\poly(n)$, that is, $$\Delta = \Omega(1/n^3).$$
    \item The probability of getting a $0$-eigenstate as a quantum state $\frac{\Pi_0\ket{s}}{\|\Pi_0\ket{s}\|}$ is at least $1/\poly(n)$, that is, \[
\|(\bra{0}_{\gamma}\otimes I_{\beta})\mathcal{V}_{\mathrm{circ}} \ket{0}_{\gamma}\ket{s}_{\beta}\|^2 = \|\Pi_0\ket{s}\|^2 = \Omega(1/n^2).
\].
\end{enumerate}
These two lower bounds are provided through the spectral properties of $H$ stated in \cref{cor:spectral_properties}, and the success probability lower bound further uses \eqref{eq:success_amplitude_lower_bound} and \eqref{eq:overlap_for_state_prep}. 

The state we prepare through the above procedure approximates the quantum state $\frac{\Pi_0\ket{s}}{\|\Pi_0\ket{s}\|}$, which is a 0-eigenstate of $H$. However, the 0-eigenspace of $H$ is two-dimensional (\cref{cor:spectral_properties}), being a 0-eigenstate does not uniquely determine $\Pi_0\ket{s}$. We show that the normalized state $\frac{\Pi_0\ket{s}}{\|\Pi_0\ket{s}\|}=\ket{\eta^{\mathrm{odd}}}$ in \eqref{eq:state_prepared_after_amplitude_amplification}, where $\ket{\eta^{\mathrm{odd}}}$ is a specific 0-eigenstate of $H$ defined in \cref{cor:spectral_properties}.
We summarize the above into the following lemma. A formal statement and a rigorous proof can be found in \cref{lem:generate_eta_odd}.
\begin{lemma-non}[0-eigenstate preparation (informal)]
The 0-eigenstate $\ket{\eta^{\mathrm{odd}}}$ of the effective Hamiltonian $H$ (and also of the adjacency matrix $A$) defined in \cref{cor:spectral_properties} can be prepared to $1/\poly(n)$ precision with $\poly(n)$ queries to the adjacency list oracle.
\end{lemma-non}

\paragraph{Pathfinding in regular sunflower graph and $0$-eigenstate of its adjacency matrix}

The 0-eigenstate $\ket{\eta^{\mathrm{odd}}}$ prepared in the above procedure is extremely useful in helping us find an \st path. In fact, one \st path consists of the roots of each tree $\mathcal{T}_i$ used to construct the regular sunflower graph $\SG$. These roots are $s_i$ (see Figure~\ref{fig:sunflowergraphd=3}) for $i=1,2,\cdots,n$.. According to \cref{cor:spectral_properties} (iii) the 0-eigenstate $\ket{\eta^{\mathrm{odd}}}$ has at least $\Omega(1/n)$ overlap with each $s_i$ for odd $i$ (in \cref{cor:spectral_properties} we write $s_i=S_{i,1}$ to be consistent with \cref{defn:supervertex}). Therefore measuring $\ket{\eta^{\mathrm{odd}}}$ in the computational basis yields a bit-string that represents each $s_i$ for odd $i$ with probability at least $\Omega(1/n^2)$. With $\Or(n^2)$ repetitions we can with large probability obtain all $s_i$ for odd $i$.

Collecting all these sample vertices and their neighbors, which can be obtained through the adjacency list oracle, we then have a $\Or(n^2)$-sized subgraph that contains the \st path, and therefore also the end vertex $t$, with large probability. $t$ can be identified by the indicator function mentioned in \cref{pro:pathfinding}. The path can then be found using a Breadth First Search which runs on a classical computer in time $\poly(n)$. This provides us with the main algorithmic result:
\begin{theorem-non}[Finding the \st path (informal)]
    Using $\poly(n)$ queries to the adjacency list oracle, we can find an \st path in the regular sunflower graph for the $s$ and $t$ vertices described in \cref{pro:pathfinding} with probability at least $2/3$.
\end{theorem-non}
A formal statement with rigorous proof can be found in \cref{thm:the_algorithm}.
The above algorithm and analysis rely heavily on understanding the properties of the eigenvalues and eigenvectors of the effective Hamiltonian $H$, which are described in \cref{cor:spectral_properties}. We obtain these results in Section~\ref{sec:spectral_properties} utilizing a special structure of $H$: it can be represented by a block tridiagonal matrix in \eqref{eq:Htilde_explicit}. Moreover, all eigenvectors of this block tridiagonal matrix can be factorized into tensor products of eigenvectors of even simpler matrices, as discussed in \cref{lem:spectrum_of_H_general_structure}. These facts help us find accurate estimates for the spectral gap and overlaps between the eigenstates and vertices, thus providing the guarantee that the proposed algorithm can find an \st path in $\poly(n)$ time.

\subsection{Future outlook} 

In addition to introducing a new graph structure that deviates from the traditional welded tree graph that enables exponential quantum speedups for pathfinding problems, our result also presents two additional categories of contributions.

First, our results shed some light on investigating the isogeny pathfinding problem, which is the underlying hard problem of isogeny-based cryptography. The regular sunflower graph is close to being an expander graph and shares structural features with the isogeny volcano graph, which is closely related to supersingular isogeny graphs \cite{arpin2024orientations}. Specifically, removing the outermost edges of the regular sunflower graph transforms it into a volcano graph, and our quantum algorithm is also able to demonstrate an exponential separation for finding the cycle ${s_1, s_2, \ldots, s_n}$ within the regular sunflower graph \cref{defn:randomG_build}. 
This connection provides the hope that this work may be useful for developing new algorithms for non-oracle problems. A starting point along this direction could be to investigate the properties of the $0$-eigenspace of the adjacency matrix of supersingular isogeny graphs. In particular, a modification of the current algorithm might help us find a non-oracular quantum superpolynomial speedup for the pathfinding problem in supersingular isogeny graphs, where the adjacency list oracle can be instantiated by computing isogenies between two elliptic curves.  
We summarize this in the following \cref{fig:pathfinding}

\begin{figure} [h!]
    \centering
    \includegraphics[width=0.6\textwidth]{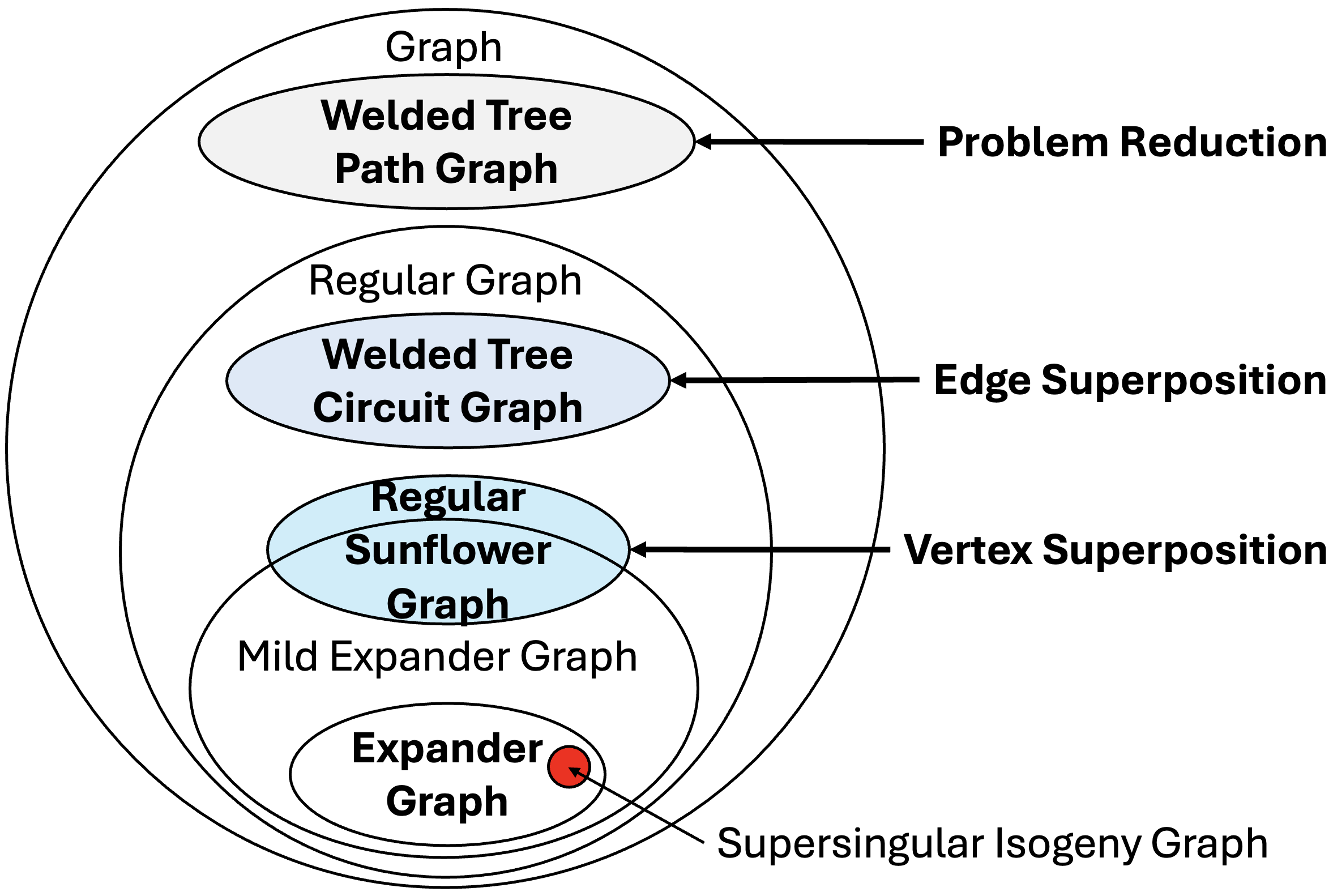}
\caption{Exponential Speedup Landscape of Pathfinding Problem}
  \label{fig:pathfinding}
\end{figure}

Second, our new quantum algorithm, which leverages the $0$-eigenspace of the adjacency matrix, provides insights to attack a wider range of problems, including the pathfinding problem in the welded tree graph. 
In \cite{childs2022quantum} the authors studied a class of quantum algorithms that are ``genuine'' and ``rooted'', and showed that this class of quantum algorithms cannot efficiently find the \st path in the welded tree graph.
Our new quantum algorithm is not ``rooted'' as defined in \cite{childs2022quantum}. This means that our algorithm also has the surprising property that it returns to us the \st path even though it does not remember the paths that it explores. 
This indicates that such algorithms may be more useful for the \st pathfinding problem than previously thought.
Whether our algorithm can be modified to to attack the welded tree pathfinding problem or not, it leads to two intriguing possibilities.
On the positive side, existing results on spectral graph theory, including those for graphs related to the welded tree graph \cite[Figure 2]{ganguly2021non}, may be combined with our algorithm for wider applications, including providing a potential approach to attack the welded tree pathfinding problem. On the other hand, if our algorithm cannot be modified to tackle the welded tree pathfinding problem, what is the essential feature that makes it work for the regular sunflower graph but not for the welded tree graph? Can we use this comparison to definitively rule out the class of algorithms excluded in \cite{childs2022quantum}? These are all questions that should be explored in future works.

\paragraph{Organization:} The rest of the paper is organized as follows: in Section~\ref{sec:background} we will introduce the notion of a (mild) expander graph, present the adjacency list oracle model, and provide other background information that will be useful later. In Section~\ref{sec:randomgraph} we will construct the regular sunflower graph and prove that with large probability it is a mild expander graph. In Section~\ref{sec:spectral_properties} we will analyze the spectral properties of the effective Hamiltonian of this graph. In Section~\ref{sec:Algorithm} we will provide the quantum algorithm to efficiently find an \st path in the graph. In Section~\ref{sec:the_classical_lower_bound} we will provide a query complexity lower bound result showing that any classical algorithm requires exponential time to find the \st path, thus establishing an exponential quantum-classical separation.

\section{Background}
\label{sec:background}
In this section, we will first introduce the basic notation for the graph problem that we are going to study. In particular, we will introduce the definition of mild expander graphs. 
Because we are studying the path-finding problem in an oracular setting, it is important to specify the classical and quantum oracles that provide information about the graph. We will also discuss the way to use the quantum oracle to build a unitary circuit encoding the adjacency matrix of the graph, thus enabling the quantum algorithm to process this matrix.
Certain spectral properties will feature prominently in our analysis of the algorithm, and we will give examples in this section to help readers gain an intuitive understanding.

\subsection{Graph expansion}
\label{sec:graph_expansion}

 Following \cite[Chapter 4]{vadhan2012pseudorandomness}, we define the \emph{neighborhood} and \emph{edge boundary} of a set of vertices and \emph{$(K,\epsilon)$-vertex expander graph} as follows:
\begin{definition}[Neighborhood in graph]
\label{defn:neighborhood}
    For a graph $G=(V,E)$, we define the neighborhood of a vertex $u\in V$ to be $\N (u):=\{v\in V:(u,v)\in E\}$.
    Similarly we define the neighborhood of a set $S\subset V$ to be $\N(S) = \bigcup_{u\in S}\N(u)$.
\end{definition}
The \emph{edge boundary} of a set of vertices is defined in the following way:
\begin{definition}[Edge boundary]
\label{defn:edge_boundary}
    For a graph $G=(V,E)$, we define the edge boundary of a subset $S\subset V$ to be $\partial (S) = \{(u,v):u\in S, v\notin S\}$.
\end{definition}


With the above, we are ready to define what an expander graph is:
\begin{definition} [$(K,\epsilon)$-vertex expander graph] 
\label{defn:graph_expansion}
  A $d$-regular, unweighted multigraph $G$ is a \emph{$(K,\epsilon)$-vertex expander} if for every set $S \subset V$ such that $|S|\leq K \leq |V|/2$, 
  we have
 \[
  |\N(S)\backslash S| \geq \epsilon d  |S|.
 \]
\end{definition}
There are usually three equivalent definitions of expanders based on vertex expansion, edge expansion, and spectral expansion. A $(K,\epsilon)$-vertex expander graph according to \cref{defn:graph_expansion} also satisfies
\[
|\partial(S)|\geq |\N(S)\backslash S| \geq \epsilon d  |S|,
\]
where the first inequality is because each element in $\N(S)\backslash S$ must be connected to $S$ by at least one edge. Through Cheeger's inequalities, if $K=|V|/2$, this also implies spectral expansion, i.e., the spectral gap of the graph Laplacian (for a definition see \cite[Chapter~1]{chung1997spectral}) is lower bounded by at least $\epsilon^2 d/2$ \cite[Theorem 4.9]{vadhan2012pseudorandomness}.


Based on the definition of $(K,\epsilon)$-vertex expander in \cref{defn:graph_expansion}, we define \emph{mild expander graph} as following.
\begin{definition} [Mild expander graph] 
\label{defn:mildexpandergraph}
  A $d$-regular, unweighted multigraph $G$ is a \emph{$(K,\epsilon)$-vertex mild expander graph} if for every set $S \subset V$ such that $|S|\leq K \leq |V|/2$ and $\epsilon d= 1/\polylog {|V|}$, 
  we have
 \[
  |\N(S)\backslash S| \geq \epsilon d  |S|.
 \]
\end{definition}




\subsection{The oracle models}
\label{sec:the_oracle_models}

We consider a multigraph $G=(V,E)$ containing $\mathsf{N}$ vertices, with degree $d=\Or(1)$. Here $E$ is a multiset containing all the edges and it allows multiplicity. Each vertex is labelled by a bit-string of length $\mathsf{n}=\lceil\log_2(\mathsf{N})\rceil$. Below we will not distinguish between a vertex $v$ and its bit-string label or the integer the bit-string represents.

In the classical setting, we will access this multigraph through a classical oracle 
\begin{definition}[The adjacency list oracle]
\label{defn:classical_adjacency_list_oracle}
    Let $G=(V,E)$ be an undirected multigraph of degree $d=\Or(1)$, $V=\{0,1,2\cdots,\mathsf{N}-1\}$, $\mathsf{n}=\lceil\log_2(\mathsf{N})\rceil$.
    For any $v\in V$ and $k\in\{1,2,\cdots,d\}$, if the $k$th neighbor (using the natural ordering of bit-strings) of the vertex $v\in\mathcal{V}$ (not counting multiplicity) exists, then
    $O_{G,1}(v,k)$ returns the this neighbor. If the $k$th neighbor does not exist, then $O_{G,1}(v,k)=k+2^{\mathsf{n}}$. For any $v,v'\in V$, $O_{G,2}(v,v')$ returns the multiplicity of the edge $(v,v')$.
\end{definition}
There are several other oracle models that can be considered. For example, we can let $O_{G}(v,k)$ return the $k$th neighbor counting multiplicity, or let $O_{G}$ return a list of all neighbors at the same time. These alternative models are equivalent to the one we are considering up to $\poly(d)=\Or(1)$ overhead. 

In the quantum setting, in order to obtain a speedup, we need to query the above adjacency list oracle in superposition. This necessitates having access to black-box unitaries implementing the adjacency list oracle. We define these unitaries as follows:
\begin{definition}
\label{defn:quantum_adjacency_list_oracle}
    Let $G=(V,E)$ be an undirected multigraph of degree $d=\Or(1)$, $V=\{0,1,\cdots,\mathsf{N}-1\}$, $\mathsf{n}=\lceil\log_2(\mathsf{N})\rceil$.
    We define unitaries $O^{Q}_{G,1}$ and $O^{Q}_{G,2}$ to satisfy, for any $v\in V$ and $k\in\{0,1,\cdots,d-1\}$,
    \[
    O^{Q}_{G,1}\ket{v,k,c} = \ket{v,k,c\oplus O_{G,1}(v,k)},\quad O^{Q}_{G,2}\ket{v,v',c}=\ket{v,v',c\oplus O_{G,2}(v,v')}.
    \]
    where $O_{G,1}$, $O_{G,2}$ are defined in \cref{defn:classical_adjacency_list_oracle}, $c$ is any bit-string, and $\oplus$ denotes bit-wise addition modulo 2.
\end{definition}
There are other ways of defining the oracles. Most notably, in \cite[Footnote 4]{gilyen2020ExpAdvAdiabStoqQC}, where the oracle maps $\ket{v,k,c}$ to the $(k+1)$th neighbor of $v$ counting multiplicity, and an additional index function that makes the mapping invertible. We remark that the definition adopted here is equivalent to the one in \cite{gilyen2020ExpAdvAdiabStoqQC} up to a $\poly(d)=\Or(1)$ overhead, because both of them can be queried $d$ times to obtain all the neighbors, and then can be converted to each other with $\poly(d)$ logical operations on the list of neighbors. Redundant information can then be uncomputed using $d$ queries to the inverses of these oracles. 

All information of the multigraph is represented in its adjacency matrix, as defined below:
\begin{definition}
    \label{defn:adjacency_matrix_of_multigraph}
    Let $G=(V,E)$ be an undirected multigraph of degree $d=\Or(1)$, $\mathsf{N}=|V|$. 
    Then its adjacency matrix $A=(A_{v,v'})_{v,v'\in V}$ is an $\mathsf{N}\times\mathsf{N}$ matrix in which $A_{v,v'}$ is equal to the multiplicity of the edge $(v,v')$ for any $v,v'\in V$.
\end{definition}


The quantum algorithm that we are going to describe in Section~\ref{sec:Algorithm} will be centered on the adjacency matrix, and we therefore need some way to encode the adjacency matrix into a quantum circuit. We will do so through a technique called block encoding as defined below:
 
\begin{definition}[Block encoding {\cite[Definition 43]{gilyen2018QSingValTransf}}]
\label{defn:block_encoding}
Suppose that $A$ is an $s$-qubit operator and let $\alpha, \epsilon\in \mathbb{R}_{+}$ and $a\in \mathbb{N}$. We say that the $(s+a)$-qubit unitary $U$ is an $(\alpha,a,\epsilon)$ block encoding of A on registers $\beta_1$ and $\beta_2$, if 
\[
\|A-\alpha \left(\bra{0}_{\beta_1}^{\otimes a}\otimes I_{\beta_2}\right)U \left(\ket{0}_{\beta_1}^{\otimes a}\otimes I_{\beta_2}\right) \| \leq \epsilon.
\]    
\end{definition}

Essentially, the above definition means that we can construct a unitary circuit in which a certain sub-matrix corresponds to the matrix $A$ that we want to process in our algorithm. 
We will construct such a block encoding by treating the adjacency matrix as a sparse matrix and thereby applying \cite[Lemma~48]{gilyen2018QSingValTransfArXiv}. The sparse matrix block encoding in this lemma requires access to the \emph{sparse access oracle}, which we define below:
\begin{definition}[Sparse access oracle {\cite[Lemma~48]{gilyen2018QSingValTransfArXiv}}]
\label{defn:sparse_access_oracle}
Let $A\in C^{2^n\times 2^n}$ be a Hermitian matrix that is $d$ sparse, i.e., each row and each column have at most $d$ nonzero entries.

Let $A_{ij}$ be a $b$-bit binary description of the $ij$-matrix element of $A$. The sparse-access oracle $O^Q_{\mathrm{val}}$ is defined as follows.
\[
 O^Q_{\mathrm{val}}: \ket{v}\ket{v'}\ket{0}^{\otimes b} \longrightarrow \ket{v}\ket{v'}\ket{A_{ij}} \quad\quad \forall
 v, v'\in [2^n]-1.
\] 

Let $r_{vk}$ be the index for the $k$-th non-zero entry of the $v$-th row of $A$. If there are fewer than $k$ nonzero entries, then $r_{vk}=k+2^n$. The sparse access oracle $O^Q_{\mathrm{loc}}$ is defined as follows.
\[ O^Q_{\mathrm{loc}} : \ket{v}\ket{k} \longrightarrow \ket{v}\ket{r_{vk}} \quad\quad \forall v \in [2^n]-1, k \in[s_r].\]
\end{definition}

In the above we assume only one oracle to access the location of non-zero entries in each row, rather than two separate oracles for rows and columns respectively as in \cite[Lemma~48]{gilyen2018QSingValTransfArXiv}. This is because we focus on a Hermitian matrix, and therefore the row and column oracles are identical.
We note that the sparse access oracle defined above can be simulated by the the adjacency list oracle, as stated in the lemma below:
\begin{lemma} 
\label{lem:adj-simu-sparse}
Let $A$ be the adjacency matrix 
The sparse access oracle $O^Q_{\mathrm{val}}$, $O^Q_{\mathrm{loc}}$ (\cref{defn:sparse_access_oracle}) of the adjacency matrix $A$ of a $d$-regular multigraph can be simulated by $O(d)$ queries of the adjacency list oracle $O^Q_{G,1}$, $O^Q_{G,2}$ (\cref{defn:quantum_adjacency_list_oracle}). 
\end{lemma}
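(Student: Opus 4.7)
The plan is to handle the two sparse access oracles separately. The entry $A_{v,v'}$ of the adjacency matrix is by definition the multiplicity of the edge $(v,v')$, which is precisely the quantity that $O^Q_{G,2}$ XORs into its third register. Thus a single invocation of $O^Q_{G,2}$ on $\ket{v}\ket{v'}\ket{0}^{\otimes b}$, with $b=O(\log d)$, implements $O^Q_{\mathrm{val}}$ exactly, using one query.

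The nontrivial step is to simulate $O^Q_{\mathrm{loc}}:\ket{v}\ket{k}\mapsto\ket{v}\ket{r_{vk}}$, which is an \emph{in-place} transformation, whereas $O^Q_{G,1}$ performs a coherent lookup that preserves $k$ and writes the neighbor into an auxiliary register. I would proceed by first introducing a fresh ancilla register and applying $O^Q_{G,1}$ to produce $\ket{v}\ket{k}\ket{r_{vk}}$, then swapping the second and third registers (padding $k$ to the width of $r_{vk}$) to obtain $\ket{v}\ket{r_{vk}}\ket{k}$, and finally uncomputing the ancilla that still holds $k$. Because both conventions enumerate neighbors in the natural bit-string order and encode non-existent neighbors by the distinct placeholders $k+2^{\mathsf{n}}$, the map $k\mapsto r_{vk}$ is injective for each fixed $v$, so $k$ is information-theoretically recoverable from the pair $(v,r_{vk})$.

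To carry out this uncomputation reversibly, I would loop over $k'\in\{1,2,\ldots,d\}$: for each $k'$, query $O^Q_{G,1}$ on $\ket{v}\ket{k'}\ket{0}$ to materialize $O_{G,1}(v,k')$ in a scratch register, use a controlled-equality check against $r_{vk}$ to XOR the unique matching $k'$ into the register still carrying $k$ (thereby zeroing it), and then apply $(O^Q_{G,1})^{\dagger}$ to clear the scratch. This yields $\ket{v}\ket{r_{vk}}\ket{0}$ and costs $2d+1$ calls to $O^Q_{G,1}$ and its inverse, establishing the advertised $O(d)$ bound.

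The main obstacle, and the source of the $O(d)$ rather than $O(1)$ overhead, is the asymmetry between the forward lookup $k\mapsto r_{vk}$ provided by $O^Q_{G,1}$ and the absence of a direct inverse lookup $r_{vk}\mapsto k$. Unitarity forces us to reconstruct $k$ from $(v,r_{vk})$, and with only the forward oracle available the most direct way to do so is by exhaustive enumeration over the at most $d$ neighbor indices. This is exactly the $\poly(d)$ conversion cost between oracle conventions remarked upon after \cref{defn:quantum_adjacency_list_oracle}, and I do not expect to beat it without extra structural assumptions on the graph.
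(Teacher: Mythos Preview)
Your proposal is correct and follows essentially the same approach as the paper: both use $O^Q_{G,2}$ directly for $O^Q_{\mathrm{val}}$, and for $O^Q_{\mathrm{loc}}$ both apply $O^Q_{G,1}$ once, swap, and then uncompute $k$ by enumerating all $d$ neighbor indices and exploiting the injectivity of $k\mapsto O_{G,1}(v,k)$. The only cosmetic difference is that the paper loads all $d$ neighbors into parallel ancilla registers and applies a single ``find the match'' primitive, whereas you iterate sequentially with compute/compare/uncompute per index; both arrive at $2d+1$ oracle calls.
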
 

A proof of this lemma is provided in Appendix~\ref{sec:block_encoding_construction_from_adjacency_list}. One can easily see that the adjacency list oracle contains all the information we need for building the sparse matrix $A$, so such a result is in no way surprising. However, we need to make sure that no redundant information remains in the simulation of the sparse access oracle, and this is done through uncomputation as described in Appendix~\ref{sec:block_encoding_construction_from_adjacency_list}.

We then restate \cite[Lemma~48]{gilyen2018QSingValTransfArXiv} here, while specializing to the case of a Hermitian matrix to simplify the notation:
\begin{lemma} [Block-encoding of sparse-access matrices {\cite[Lemma~48]{gilyen2018QSingValTransfArXiv}}] 
\label{lem:block_encoding_sparse_matrix_restatement}
 \label{lem:block-encoding-sparse}Let $A\in C^{2^n\times 2^n}$ be a Hermitian matrix that is $d$ sparse, and each element $A_{v,v'}$ of $A$ has absolute value at most 1. We also assume that the $b$-bit binary description of $A_{v,v'}$ is exact. We can implement a $(d, n+3,\epsilon)$-block-encoding of $A$ with a single use of $O^Q_{\mathrm{loc}}$, two uses of $O^Q_{\mathrm{val}}$ and additionally using $O(n+\log^{2.5}(\frac{d}{\epsilon}) )$ one and two qubit gates while using $O(b,\log^{2.5}(\frac{d}{\epsilon}))$ ancilla qubits.
\end{lemma}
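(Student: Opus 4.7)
The plan is to follow the construction of Gilyén, Su, Low, and Wiebe in their Lemma 48 essentially verbatim, with a small simplification that exploits the Hermiticity of $A$. At a high level, the block encoding will be built from three stages: (i) use $O^Q_{\mathrm{loc}}$ once on an auxiliary $n$-qubit register indexed by $k\in\{0,1,\ldots,d-1\}$ together with a Hadamard tower to prepare, on top of the row label $\ket{v}$, a uniform superposition $\tfrac{1}{\sqrt{d}}\sum_{k=0}^{d-1}\ket{r_{v,k}}$ over the (at most) $d$ column indices in which row $v$ of $A$ is non-zero; (ii) use $O^Q_{\mathrm{val}}$ to coherently load the $b$-bit entry $A_{v,r_{v,k}}$ into an ancilla register, and then apply a sequence of controlled rotations on a flag qubit whose resulting amplitude equals exactly $A_{v,r_{v,k}}$ (this is where $|A_{v,v'}|\le 1$ is used and where the $b$-bit binary description is consumed); (iii) invoke $O^Q_{\mathrm{val}}$ a second time to uncompute the loaded value, then perform a diffusion/SWAP step symmetrizing the row and column roles so that the post-selection on the all-zero state of the ancillae implements $A/d$.

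Next, I would argue the specialization to Hermitian $A$ that is responsible for the fact that only $O^Q_{\mathrm{val}}$ and $O^Q_{\mathrm{loc}}$ (one oracle each for location and value) appear in the statement, rather than the four oracles $O^Q_{\mathrm{col}}, O^Q_{\mathrm{row}}, O^Q_{\mathrm{val}}, O^Q_{\mathrm{val}}^{\dagger}$ used in the original lemma: since $A=A^{\dagger}$, the column sparsity pattern equals the row sparsity pattern, so $O^Q_{\mathrm{loc}}$ serves both purposes and the two value oracles required by the original construction can be taken identical. The query count one + two claimed in the statement then falls out of stage (i) + stages (ii) and (iii), and the $n+3$ ancillae account for the $n$-qubit column-index register plus three flag qubits (one for the amplitude encoding and two for the symmetrization).

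The third step is to quantify the error $\epsilon$. Ideal controlled rotations would give an exact $d$-block encoding, but with finite precision the implemented rotations differ from the ideal ones by at most some $\delta$ in operator norm. The error matrix $A-d\,(\bra{0}^{\otimes(n+3)}\otimes I)U(\ket{0}^{\otimes(n+3)}\otimes I)$ is itself $d$-sparse with each entry of magnitude at most $\delta$, hence by Gershgorin its operator norm is at most $d\delta$; choosing $\delta=\epsilon/d$ therefore yields the stated $\epsilon$. Standard results for the synthesis of arbitrary single-qubit rotations to precision $\delta$ cost $O(\log^{2.5}(1/\delta))=O(\log^{2.5}(d/\epsilon))$ Clifford$+T$ gates, which together with the $O(n)$ cost of the Hadamard tower and basic arithmetic on the $b$-bit registers gives the claimed gate and ancilla counts.

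The main obstacle here is not mathematical depth but careful bookkeeping: one has to verify that the simplifications from Hermiticity are compatible with the diffusion/symmetrization step in \cite{gilyen2018QSingValTransfArXiv}, and that the error accounting tracks the per-entry rotation-synthesis precision through the operator-norm bound above. Since the statement is explicitly a restatement of \cite[Lemma~48]{gilyen2018QSingValTransfArXiv} specialized to the Hermitian case, I would present the argument by reducing the construction to theirs via the identification of the row and column oracles, and then cite their lemma for the detailed gate-by-gate analysis rather than reproducing it.
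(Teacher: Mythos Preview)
The paper does not prove this lemma at all: it is explicitly labeled a restatement of \cite[Lemma~48]{gilyen2018QSingValTransfArXiv}, and the paper's only remark is that the row and column location oracles collapse to a single $O^Q_{\mathrm{loc}}$ when $A$ is Hermitian. Your proposal is therefore not so much an alternative proof as a fleshed-out sketch of the cited result together with exactly that Hermiticity observation, which is consistent with---and more detailed than---what the paper actually does.

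One point of caution in your sketch: in the original construction the row oracle $O_r$ and column oracle $O_c$ are each queried once, so identifying them in the Hermitian case naively yields \emph{two} uses of $O^Q_{\mathrm{loc}}$, not one. The paper's statement of ``a single use'' is either a minor slip in the restatement or relies on a further simplification of the symmetrization step that you have not spelled out; your stage (iii) glosses over this by saying ``perform a diffusion/SWAP step'' without accounting for the oracle call that step normally requires. Since the paper simply cites the result, this discrepancy is not resolved there either, but if you intend to actually justify the ``single use'' count you would need to say more.
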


Combining \cite[Lemma~48]{gilyen2018QSingValTransfArXiv} with \cref{lem:adj-simu-sparse}, we arrive at the following result:
\begin{lemma}
    \label{lem:block_encoding_adjacency_matrix_new}
     Let $G=(V,E)$ be an undirected multigraph of degree $d=\Or(1)$, $V=\{0,1,\cdots,\mathsf{N}-1\}$, $\mathsf{n}=\lceil\log_2(\mathsf{N})\rceil$. Then a $(d^2,\mathsf{n}+3,\epsilon_A)$-block encoding of the adjacency matrix $A$ as defined in \cref{defn:adjacency_matrix_of_multigraph} can be constructed using $\Or(1)$ queries to the adjacency list oracle in \cref{defn:quantum_adjacency_list_oracle}, $\Or(\mathsf{n}+\log^{2.5}(d/\epsilon_A))$ additional elementary gates, and $\Or(\log^{2.5}(d/\epsilon_A))$ ancilla qubits.
\end{lemma}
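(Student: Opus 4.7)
The plan is to obtain the block encoding as a direct composition of the two ingredients stated just before the lemma: the simulation of the sparse-access oracle by the adjacency-list oracle (\cref{lem:adj-simu-sparse}) and the generic sparse-matrix block encoding (\cref{lem:block_encoding_sparse_matrix_restatement}). The only subtlety is a normalization step, because \cref{lem:block_encoding_sparse_matrix_restatement} requires each matrix entry to have magnitude at most $1$, whereas the entries of $A$ are edge multiplicities that in a $d$-regular multigraph can be as large as $d$ (a vertex may have up to $d$ parallel edges or self-loops).

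First I would pass to the rescaled matrix $\tilde A := A/d$. Since $G$ is $d$-regular, $\sum_{v'} A_{v,v'}=d$ for every $v$, so each entry of $\tilde A$ lies in $[0,1]$ and, being of the form $k/d$ with $k\in\{0,1,\dots,d\}$, it admits an exact $b$-bit binary description with $b=\Or(\log d)=\Or(1)$. The matrix $\tilde A$ is still $d$-sparse and Hermitian, so the hypotheses of \cref{lem:block_encoding_sparse_matrix_restatement} are satisfied. Applying that lemma with target precision $\epsilon := \epsilon_A/d$ yields a $(d,\mathsf{n}+3,\epsilon_A/d)$-block encoding $U$ of $\tilde A$, using one query to $O^{Q}_{\mathrm{loc}}$, two queries to $O^{Q}_{\mathrm{val}}$, $\Or(\mathsf{n}+\log^{2.5}(d^2/\epsilon_A))=\Or(\mathsf{n}+\log^{2.5}(d/\epsilon_A))$ elementary gates (using $d=\Or(1)$), and $\Or(\log^{2.5}(d/\epsilon_A))$ ancilla qubits.

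Next I would translate this into a block encoding of $A=d\tilde A$. Unfolding \cref{defn:block_encoding}, the bound $\|\tilde A - d(\bra{0}^{\otimes a}\otimes I)U(\ket{0}^{\otimes a}\otimes I)\|\le \epsilon_A/d$ multiplies through by $d$ to give $\|A - d^{2}(\bra{0}^{\otimes a}\otimes I)U(\ket{0}^{\otimes a}\otimes I)\|\le \epsilon_A$, so the same unitary $U$ is a $(d^{2},\mathsf{n}+3,\epsilon_A)$-block encoding of $A$, matching the normalization claimed in the statement. Finally, I would invoke \cref{lem:adj-simu-sparse} to replace each use of $O^{Q}_{\mathrm{loc}}$ and $O^{Q}_{\mathrm{val}}$ by $\Or(d)=\Or(1)$ queries to the adjacency-list oracles $O^{Q}_{G,1}$, $O^{Q}_{G,2}$; this contributes at most $\Or(1)$ adjacency-list queries in total and a $\poly(d)=\Or(1)$ additive contribution to the elementary gate count, so the overall resource bounds in the lemma are preserved.

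The proof is essentially bookkeeping once the two preceding lemmas are in hand; the only place that requires care is to verify that the simulation in \cref{lem:adj-simu-sparse} does not leave garbage in ancillas that would obstruct the sparse-matrix construction. That point is handled by the uncomputation step described in the appendix referenced after \cref{lem:adj-simu-sparse}, which ensures the simulated sparse-access oracle has exactly the input/output behavior required by \cref{lem:block_encoding_sparse_matrix_restatement}. No other step poses a genuine obstacle.
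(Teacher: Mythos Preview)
Your approach is the same as the paper's: rescale to $A/d$, invoke the sparse block-encoding lemma, then reinterpret the result as a $(d^2,\mathsf n+3,\epsilon_A)$-block encoding of $A$, and finally replace sparse-access queries by adjacency-list queries via \cref{lem:adj-simu-sparse}. The structure and resource accounting match.

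There is one small but genuine slip. You assert that each entry $k/d$ ``admits an exact $b$-bit binary description with $b=\Or(\log d)$.'' This is false unless $d$ is a power of $2$; for the odd $d$ used throughout the paper (e.g.\ $d=7$), numbers like $1/7$ have no finite binary expansion, so the exactness hypothesis of \cref{lem:block_encoding_sparse_matrix_restatement} is not satisfied as stated. The paper handles this explicitly: it applies the lemma to a $b$-bit truncation of $A/d$, observes that the entrywise truncation error is $\Or(2^{-b})$, bounds the induced spectral-norm error by $\Or(d\,2^{-b})$ using row $d$-sparsity (via \cref{lem:bound_spectral_norm_using_inf_norm}), and then chooses $b=\Or(\log(d/\epsilon_A))$ so that this extra error is at most $\epsilon_A/2$. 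Your final resource bounds are still correct because they already scale with $\log(d/\epsilon_A)$; you just need to replace the incorrect exactness claim with this truncation-error argument.
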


\begin{proof}
    We simulate the sparse access oracle $O^Q_{\mathrm{loc}}$ and $O^Q_{\mathrm{val}}$ using the adjacency list oracle $O^Q_{G,1}$, $O^Q_{G,2}$ through \cref{lem:adj-simu-sparse}. With the sparse access oracle, we can then construct the block encoding through \cite[Lemma~48]{gilyen2018QSingValTransfArXiv} as restated in \cref{lem:block_encoding_sparse_matrix_restatement}. 

    Note that because \cite[Lemma~48]{gilyen2018QSingValTransfArXiv} requires all matrix entries to have absolute value at most $1$, it therefore only applies to $A/d$ rather than $A$, and yields a $(d,\mathsf{n}+3,\epsilon_A/2d)$-block encoding of $A/d$ (we choose to let the error be $\epsilon_A/2d$). Obviously, a block encoding of $A/d$ is also a block encoding of $A$, but with a different set of parameters, and this is how we arrive at a $(d^2,\mathsf{n}+3,\epsilon_A/2)$-block encoding of $A$.

    Because we can only represent the entries of $A/d$ with $b$ bits, this introduces an error which is at most $\Or(2^{-b})$. Because the error only occurs on the non-zero entries, and there are at most $d$ such entries in each row, the resulting error in the matrix when measured in the spectral norm is at most $\Or(d2^{-b})$ by Lemma~\ref{lem:bound_spectral_norm_using_inf_norm}. Consequently, to make this error at most $\epsilon_A/2$, it suffices to choose $b=\Or(\log(d/\epsilon_A))$. Therefore the effect of $b$ on the number of ancilla qubits is subsumed by the $\log^{2.5}(d/\epsilon_A)$ term.
\end{proof}

\subsection{The 0-eigenvector of the adjacency matrix}
\label{sec:the_zero_eigenvector_of_A}

Our quantum algorithm is built on the observation that the $0$-eigenvector of the adjacency matrix of the graph sometimes yields useful information.
This is also the underlying idea of the quantum algorithm for exit-finding in \cite{balasubramanian2023exponential}. 
Below we will look at some example graphs and the corresponding $0$-eigenvectors of the adjacency matrices. These examples will be useful for our analysis of the regular sunflower graph we construct. 
Unless otherwise specified, we assume that all graphs are unweighted, which means that all edges have weight $1$.  

\begin{definition}[Adjacency matrix of 
 a cycle graph $C_n$] A \emph{cycle graph} consists of $n$ vertices $\{v_1,v_2,\ldots,v_n\}$ with edges $\{v_1,v_{2}\},\{v_2,v_{3}\}, \ldots, \{v_{n-1},v_{n}\}, \{v_{n},v_{1}\}$. 
 The adjacency matrix $D_0$ of $C_n$ is defined as    
\begin{equation}
    \label{eq:defn_D0}
    D_0 = \begin{pmatrix}
        0     & 1 &        &            &  1   \\
        1 & 0     & 1  &            &     \\
              & 1 & 0      & \ddots     &     \\
              &       & \ddots & \ddots     & 1    \\
          1    &       &        & 1 & 0
    \end{pmatrix}.
\end{equation}
    
\end{definition}
\begin{definition}[Adjacency matrix of weighted path graph $P_m$] A \emph{path graph} consists of $n$ vertices $\{v_1,v_2,\ldots,v_m\}$ with edges $\{v_i,v_{i+1}\}$ for $i=1,2,\ldots,n-1$. The \emph{path graph} $P_m$ is weighted if each edge $\{v_i,v_{i+1}\}$ has weight $t_i \in \R$. The adjacency matrix $D_1$ of  $P_m$ is defined as    
 \begin{equation}
    D_1 = 
    \begin{pmatrix}
         0     & t_1  &        &            &     \\
        t_1  & 0     & t_2   &            &     \\
              & t_2  & 0      & \ddots     &     \\
              &       & \ddots & \ddots     & t_{m-1}     \\
              &       &        & t_{m-1}  & 0
    \end{pmatrix}_{ m\times m}.
\end{equation}
\end{definition}

\begin{example}[$0$-eigenvector of adjacency matrix $D_0$ of a cycle graph $C_{n}$ {\cite[Lemma 6.5.1]{spielman2019spectral}}] 
 \label{example:zero_eigvec_cycle}
 Let $n$ be an integer that is a multiple of $4$ and $D_0$ be the adjacency matrix of a cycle graph $C_n$, then eigenvalues of $D_0$ are $2\cos(\frac{2\pi l}{n})$ for $l=1,\ldots, n$ and  the two orthogonal $0$-eigenvectors $x,y$ of $D_0$ are the following:
 \begin{equation}
      x_l= \cos (l\pi/2) \text{ where } l = 1,2,\ldots, n. 
 \end{equation}
\begin{equation}
      y_l=  \sin (l\pi/2) \text{ where } l = 1,2,\ldots, n.
 \end{equation}
\end{example}

When it comes to the $0$-eigenvector of the adjacency matrix of a weighted path graph of odd length, one can also compute the unique $0$-eigenvector implicitly as in the proof of \cite[Lemma 3.2]{balasubramanian2023exponential}. We formulate this result as following:

\begin{example}[$0$-eigenvector of adjacency matrix $D_1$ of an weigthed path graph $P_{m}$ ] 
 \label{example:zero_eigvec_path}
 Let $m$ be an odd integer and $d$ be a positive integer.
 Let $D_1$ be the adjacency matrix of the path graph $P_{m}$ with edges weights $t_1,t_2,\ldots,t_{m-1}$,
 then the unique $0$-eigenvector of $D_1$ is $\ket{x}=(x_1,x_2,\cdots,x_m)^\top$, where
 \begin{equation}
      x_l=  \left\{\begin{array}{ll}
  0 & \mbox{for even } l= 2,4,\ldots,m-1.  \\
  \prod_{k=1}^{(l-1)/2}\left( - \frac{t_{2k-1}}{t_{2k}}\right) x_1 & \mbox{for odd } l= 1,3,\ldots,m.\\
\end{array}\right. 
 \end{equation}
\end{example}

In particular, we will use the case where
$t_1=\sqrt{d-2}$, $t_2=\ldots=t_{m-1}=\sqrt{d-1}$. In this case, for even $l\geq 2$, we have 
\begin{equation}
\label{eq:relation_between_x0_and_xl}
    |x_l| = \sqrt{\frac{d-2}{d-1}} |x_1|.
\end{equation}

 






\section{The regular sunflower graphs $\SG$} \label{sec:randomgraph}

In this section, we will introduce the family of multigraphs on which we can obtain an exponential quantum advantage. We call this family of multigraphs \emph{sunflower graphs} and denote an instance by $\SG$.
When the degree of the graph is greater than 7, we will show that the sunflower graph $\SG$ is a mild expander graph with high probability in Section~\ref{sec:spectral_properties}. 
In order to rigorously establish a classical query complexity lower bound, we will need to add isolated vertices to the sunflower graph, as was done in \cite{childs2003ExpSpeedupQW,balasubramanian2023exponential}. This will be discussed in detail in Section~\ref{sec:the_enlarged_sunflower_graph}.
In Section~\ref{sec:invariant_subspace_and_effective_ham} we will discuss the structure in the adjacency matrix of the sunflower graph, which is also present in its enlarged version.

\subsection{Graph definition}
\label{sec:graph_definition}


We first provide the definition of the regular sunflower graph $\SG$. 

\begin{definition}[The regular sunflower graph $\SG=(\SV,\SE)$]
\label{defn:randomG_build}
Let $\mathcal{T}_1,\mathcal{T}_2,\ldots,\mathcal{T}_n$
be rooted trees of height $m \geq 2$. For each $\mathcal{T}_i$, the root has degree $d-2$, and all other internal vertices have degree $d-1$. All leaves must be distance $m$ from the root.  We then link the roots $s_1,s_2,\ldots,s_n$ of trees $\mathcal{T}_{1}, \mathcal{T}_{2}, \ldots, \mathcal{T}_{n}$ with the edges $\{s_1,s_2\},\{s_2,s_3\}, \ldots, \{s_{n-1},s_n\},\{s_n,s_1\}$. Next, we link the
$(d-2)(d-1)^{m-2}$ leaves of $\mathcal{T}_{i}$ with the leaves of $\mathcal{T}_{i+1}$ for $1 \leq i\leq n-1$ and the leaves of $\mathcal{T}_{n}$ with the leaves of $\mathcal{T}_{1}$.  For each pair of trees, select $(d-1)/2$ random perfect matchings between their leaves and add all their edges to the graph.  The resulting multigraph is a \emph{$d$-regular sunflower graph $\SG=(\SV,\SE)$}.
\end{definition}


   
An illustration of the graph is given in Figure~\ref{fig:sunflowergraphd=3} for $d=3, m=4, n=8$. 
For simplicity, throughout the paper we will assume $m=\Theta(n)$.

In this work we focus on finding the path between an entrance vertex $s$ and an exit vertex $t$. In the graph $\SG$ defined above, $s$ is the root vertex of the tree $\mathcal{T}_1$, and $t$ is the root vertex of the tree $\mathcal{T}_{n/2+1}$. 
Here $s$ is given to us directly at the beginning in the form of a bit-string representing it. We also give an indicator function for identifying $t$.

\begin{definition}
    \label{defn:exit_oracle}
    The indicator function $f_t$ satisfies:
    \[
    f_t(v)=\begin{cases}
        1 & \text{ if } v=t, \\
        0 & \text{ otherwise.}
    \end{cases}
    \]
\end{definition}


It is illustrative to group some of the vertices in this multigraph together to reveal some additional structure in it.

\begin{definition}
    \label{defn:supervertex}
    A supervertex $S_{i,j}$ is the set of vertices in the $j$th layer of tree $\mathcal{T}_i$ in the sunflower graph defined in \cref{defn:randomG_build}.
\end{definition}
The cardinality of this set can be computed from \cref{defn:randomG_build} to be
\begin{equation}
\label{eq:cardinality_supervertex}
    s_{i,j} = |S_{i,j}| =
    \begin{cases}
        1, &\text{ if }j=1, \\
        (d-1)^{j-2}(d-2),&\text{ otherwise.}
    \end{cases}
\end{equation}
We will call $S_{i,j}$ a \emph{supervertex}. From this we can compute the total number of vertices in the sunflower graph to be $\mathsf{N}_{\SG} = n(d-1)^{m-1}$.

Let $e_{ij,kl}$ be the number of edges, counting multiplicity, between two sets of vertices $S_{i,j}, S_{k,l}$ in $\SG=(\SV,\SE)$, where $1 \leq i,k \leq n, 1\leq j,l\leq m$.  In other words,
\begin{equation}
    \label{eq:num_edges_from_adjacency_matrix}
    e_{ij,kl} = \sum_{u\in S_{i,j}}\sum_{v\in S_{k,l}} A_{u,v}.
\end{equation}
When $j=l=m$, $k= i + 1$ for $1 \leq i\leq n-1$ or $k=1,i=n$,  $e_{ij,kl}=\frac{d-1}{2} (d-1)^{m-2}(d-2)$, which consists of $\frac{d-1}{2}$ unions of random perfect matching between the vertices in $S_{i,m}$ and the vertices in $S_{i+1,m}$ for $1\leq i\leq n-1 $, also between $S_{1,m}$ and $S_{n,m}$. When $j=l=1$, $k= i + 1$ for $1 \leq i\leq n-1$ or $k=1,i=n$,  $e_{ij,kl}=1$, which is the edge between the vertex in $S_{i,1}$ and the vertex in $S_{i+1,1}$ for $1 \leq i\leq n-1 $. When $i=k$ and $j=l+1$ or $l-1$, we have $e_{ij,kl}=\max\{s_{i,j},s_{k,l}\}$ due to the tree structure of $\mathcal{T}_{i}$. For other cases of $i,j,k,l$, we have $e_{ij,kl}=0$. Because the graph is undirected, $e_{ij,kl}=e_{kl,ij}$. To summarize:
\begin{equation}
\label{eq:num_edges_bw_supervertices}
e_{ij,kl} =  \left\{\begin{array}{ll}
        1       &       \mbox{if } j=l=1 \mbox{ and } k=i + 1 \mbox{ for } 1 \leq i\leq n-1;       \\
         1       &       \mbox{if } j=l=1 \mbox{ and }  k=1,i=n;\\
         \max\{s_{i,j},s_{k,l}\}  &       \mbox{if } i=k \mbox{ and } j= l+ 1  \mbox{ or } l-1 ;      \\
        \frac{d-1}{2}(d-2)(d-1)^{m-2} & \mbox{if } j=l=m \mbox{ and } k= i+ 1 \mbox{ for } 1 \leq i\leq n-1;  \\
        \frac{d-1}{2}(d-2)(d-1)^{m-2} & \mbox{if } j=l=m \mbox{ and } k=1, i=n;\\
          0       &        \mbox{Otherwise}.      \\
        
    \end{array}\right.    
\end{equation}

By the construction of the graph is \cref{defn:randomG_build}, one can readily verify that the number of edges (counting multiplicity) connecting $u\in S_{i,j}$ to another fixed supervertex $S_{k,l}$ is the same as that of all $u\in S_{i,j}$. In other words, $\sum_{v\in S_{k,l}}A_{u,v}$ is the same for all $u\in S_{i,j}$. By \eqref{eq:num_edges_from_adjacency_matrix}, we therefore have
\begin{equation}
    \label{eq:num_edges_connecting_u_to_Skl}
    \sum_{v\in S_{k,l}} A_{u,v} = \frac{e_{ij,kl}}{s_{i,j}}, \quad \forall u\in S_{i,j}. 
\end{equation}

From the above discussion, we can see that the supervertices can be arranged into a graph: we link an edge between any pair of supervertices $S_{i,j}$ and $S_{k,l}$ for which $e_{ij,kl}\neq 0$. The resulting graph is shown in Figure~\ref{fig:supergraph}, and is referred to as the \emph{supergraph}. We can see from that the supergraph has a certain 2-dimensional structure that is useful for the analysis of it properties.

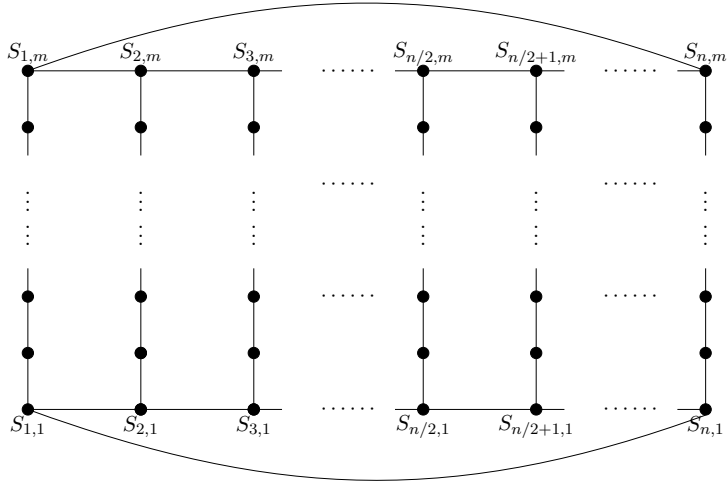
\begin{figure}
  \centering
  \resizebox{4in}{!}{
\begin{tikzpicture}
\node at (0,0) {\begin{tikzpicture}
  \draw[-] (-2,0)--(-0.1,0);
  \draw[-] (-2,6)--(-0.1,6);
  \draw[-] (-2,0)--(-2,0.9);
  \draw[-] (-2,2.5)--(-2,0.9);
   \draw[-] (-2,4.5)--(-2,5.9);

  \draw[-] (0,0)--(-0.1,0);
  \draw[-] (0,0)--(0,0.9);
  \draw[-] (0,2.5)--(0,0.9);
   \draw[-] (0,4.5)--(0,5.9);

  \draw[-] (0,0)--(2.5,0);
   \draw[-] (0,6)--(2.5,6);
 
  \draw[-] (2,0)--(2,0.9);
  \draw[-] (2,2.5)--(2,0.9);
   \draw[-] (2,4.5)--(2,5.9);

   \draw[-] (4.5,0)--(7.5,0);
     \draw[-] (4.5,6)--(7.5,6);

  \draw[-] (5,0)--(5,0.9);
  \draw[-] (5,2.5)--(5,0.9);
   \draw[-] (5,4.5)--(5,5.9);

  \draw[-] (7,0)--(7,0.9);
  \draw[-] (7,2.5)--(7,0.9);
   \draw[-] (7,4.5)--(7,5.9);

\draw[-] (10,0)--(10,0.9);
  \draw[-] (10,2.5)--(10,0.9);
   \draw[-] (10,4.5)--(10,5.9);

   \draw[-] (10,0)--(9.5,0);
   \draw[-] (10,6)--(9.5,6);

  
  \filldraw (-2,0) circle (.1);
    \filldraw (-2,1) circle (.1);
     \filldraw (-2,2) circle (.1);
      \filldraw (-2,5) circle (.1);
       \filldraw (-2,6) circle (.1);

  \filldraw (0,0) circle (.1);
    \filldraw (0,1) circle (.1);
     \filldraw (0,2) circle (.1);
      \filldraw (0,5) circle (.1);
       \filldraw (0,6) circle (.1);

    \filldraw (2,0) circle (.1);
    \filldraw (2,1) circle (.1);
     \filldraw (2,2) circle (.1);
      \filldraw (2,5) circle (.1);
       \filldraw (2,6) circle (.1);

  \filldraw (5,0) circle (.1);
    \filldraw (5,1) circle (.1);
     \filldraw (5,2) circle (.1);
      \filldraw (5,5) circle (.1);
       \filldraw (5,6) circle (.1);
    
   \filldraw (7,0) circle (.1);
    \filldraw (7,1) circle (.1);
     \filldraw (7,2) circle (.1);
      \filldraw (7,5) circle (.1);
       \filldraw (7,6) circle (.1);

  \filldraw (10,0) circle (.1);
    \filldraw (10,1) circle (.1);
     \filldraw (10,2) circle (.1);
      \filldraw (10,5) circle (.1);
       \filldraw (10,6) circle (.1);
    
  \filldraw (0,0) circle (.1);
  \filldraw (2,0) circle (.1);
  \filldraw (0,1) circle (.1);

  \node at (-2,-0.3) {$S_{1,1}$};
  \node at (0,-0.3) {$S_{2,1}$};
  \node at (2,-0.3) {$S_{3,1}$};
  \node at (5,-0.3) {$S_{n/2,1}$};
  \node at (7,-0.3) {$S_{n/2+1,1}$};
  \node at (10,-0.3) {$S_{n,1}$};

    \node at (-2,6.3) {$S_{1,m}$};
  \node at (0,6.3) {$S_{2,m}$};
  \node at (2,6.3) {$S_{3,m}$};
  \node at (5,6.3) {$S_{n/2,m}$};
  \node at (7,6.3) {$S_{n/2+1,m}$};
  \node at (10,6.3) {$S_{n,m}$};
  
  \node at (3.7,0) {$\cdots \cdots$};
   \node at (3.7,2) {$\cdots \cdots$};
   \node at (3.7,4) {$\cdots \cdots$};
   \node at (3.7,6) {$\cdots \cdots$};

   \node at (8.7,0) {$\cdots \cdots$};
   \node at (8.7,2) {$\cdots \cdots$};
   \node at (8.7,4) {$\cdots \cdots$};
   \node at (8.7,6) {$\cdots \cdots$};

   \node at (0,3.8) {$\vdots$};
   \node at (0,3.2) {$\vdots$};
   \node at (-2,3.8) {$\vdots$};
   \node at (-2,3.2) {$\vdots$};
 \node at (2,3.8) {$\vdots$};
   \node at (2,3.2) {$\vdots$};
   \node at (5,3.8) {$\vdots$};
   \node at (5,3.2) {$\vdots$};
      \node at (7,3.8) {$\vdots$};
   \node at (7,3.2) {$\vdots$};
      \node at (10,3.8) {$\vdots$};
   \node at (10,3.2) {$\vdots$};

  \draw[-] (-2,0) to[out=-20,in=-160] (10,-0.1); 
  \draw[-] (-2,6) to[out=20,in=160] (10,6); 
  

  \end{tikzpicture}};
\end{tikzpicture}
}
\caption{The supergraph consisting of the supervertices defined in \cref{defn:supervertex}. Each pair of supervertices are linked by an edge if there exists an edge in the sunflower graph $\SG$ between two vertices contained in these two supervertices respectively.}
\label{fig:supergraph}
\end{figure}

\subsection{Expansion properties}
\label{sec:expansion_properties}

In this section we will investigate the expansion properties of the regular sunflower graph $\SG$. First, we note that each pair of adjacent sets $S_{i,m}$ and $S_{i+1,m}$ for $1\leq i\leq n-1$, also between $S_{1,m}$ and $S_{n,m}$ form a random bipartite graph, with the connectivity given by $(d-1)/2$ random perfect matchings. The following lemma gives us the expansion property of a bipartite graph with $N$ vertices on each side whose connectivity is given through $D$ random perfect matchings:




\begin{lemma}
    \label{lem:biparmildexp}
    Let $L$ and $R$ be two sets of vertices with $|L|=|R|=N$. Link $L$ and $R$ through $D\geq 3$ random perfect matchings. Denote the resulting graph by $G_B = (V_B, E_B)$, and let $\chi=2/3$, $\delta =1/(2\log N)$. Then $G_B$ has the following expansion properties with probability $1-\Theta(1/N^{2\log (3/2)(1-\delta)})$:
    \begin{itemize}
        \item[(i)] For any subset $L'\subseteq L$ and $|L'|\leq \chi N$, we have $|\N(L')|=|\N(L')\backslash L'| \geq (1+\delta)|L'|$, where $\N(L')$ denotes the neighborhood of $L'$ as defined in Definition~\ref{defn:neighborhood}.
        \item[(ii)] For any subset $T\subseteq L\cup R$ and $|T| \leq  N$, we have  $|\N(T) \backslash T| \geq \frac{\delta}{2} |T|$. 
    \end{itemize}
    In other words, with probability $1-\Theta(1/N^{2\log (3/2)(1-\delta)})$, this bipartite regular graph is a mild expander graph.
\end{lemma}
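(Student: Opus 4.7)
The plan is to prove part (i) via a standard union bound over vertex subsets combined with binomial/entropy estimates, and then reduce part (ii) to part (i) through a bipartite decomposition with a case split.

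For part (i), fix $s\in\{1,\dots,\lfloor\chi N\rfloor\}$ and set $r=\lceil(1+\delta)s\rceil$. A single uniformly random perfect matching maps a fixed $L'\subseteq L$ of size $s$ into any fixed $R'\subseteq R$ of size $r$ with probability exactly $\binom{r}{s}/\binom{N}{s}$, and the $D$ matchings are independent. Since $|\mathcal{N}(L')|\leq r$ is equivalent to the existence of such an $R'$, a double union bound yields
\[
\Pr\!\left[\exists L'\subseteq L:\,|L'|=s,\,|\mathcal{N}(L')|\leq r\right]\leq \binom{N}{s}\binom{N}{r}\left(\tfrac{\binom{r}{s}}{\binom{N}{s}}\right)^{\!D}.
\]
Using the identity $\binom{r}{s}/\binom{N}{s}=\binom{N-s}{r-s}/\binom{N}{r}$ this rewrites as $\binom{N}{s}\binom{N}{r}^{1-D}\binom{N-s}{r-s}^{D}$, and Stirling estimates then split into two regimes. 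When $\delta s<1$ (i.e.\ $s=O(\log N)$) one has $r=s$, so the bound collapses to $\binom{N}{s}^{2-D}$, whose sum over $s$ is dominated by $s=1$ and contributes $O(N^{2-D})$. When $\delta s\geq 1$, entropy-type estimates combined with $D\geq 3$ and $s\leq\chi N=2N/3$ make each term exponentially small in $s$, so the tail is negligible. Adding both regimes and tracking the dominant binomial-coefficient ratios near $s/N\to\chi$ yields the claimed polynomial failure probability $\Theta(1/N^{2\log(3/2)(1-\delta)})$.

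For part (ii), write $T=T_L\cup T_R$ with $T_L=T\cap L$ and $T_R=T\cap R$. The same random-matching argument applied with the roles of $L,R$ swapped shows that on the same high-probability event every subset of either side of size at most $\chi N$ has $(1+\delta)$-expansion. Because $|T|\leq N$, at most one of $|T_L|,|T_R|$ can exceed $\chi N=2N/3$. If both sides are $\leq\chi N$, the bipartite decomposition gives
\[
|\mathcal{N}(T)\setminus T|=|\mathcal{N}(T_L)\setminus T_R|+|\mathcal{N}(T_R)\setminus T_L|\geq\big((1+\delta)|T_L|-|T_R|\big)+\big((1+\delta)|T_R|-|T_L|\big)=\delta|T|,
\]
which is strictly stronger than the claim. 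Otherwise, WLOG $|T_L|>\chi N$, so $|T_R|<N/3\leq|T|/3$. Here I would use the degree-counting identity $|E(T,V_B\setminus T)|=D|T|-2|E(T_L,T_R)|\geq D|T|-2D|T_R|\geq D|T|/3$; since each vertex of $\mathcal{N}(T)\setminus T$ absorbs at most $D$ boundary edges, $|\mathcal{N}(T)\setminus T|\geq|T|/3\geq(\delta/2)|T|$ for $\delta\leq 2/3$, which holds for all $N\geq 3$.

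The main obstacle will be the Stirling/entropy analysis in part (i): pinning down the exponent $2\log(3/2)(1-\delta)$ requires identifying the worst-case $s$ (driven by the threshold $s^{*}\approx\chi N$ via the $\log(N/s)=\log(3/2)$ term) and verifying that the contributions from the small-$s$ and exponentially-small-$s$ regimes are both subsumed by the critical one. The rest is routine bipartite case analysis.
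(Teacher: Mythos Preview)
Your overall strategy matches the paper's: a union bound over subsets plus Stirling-type estimates for (i), and a case split on the sizes of the two sides for (ii). Two remarks.

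\textbf{Part (i): locating the dominant $s$.} Your description of where the exponent $2\log(3/2)(1-\delta)$ comes from is internally inconsistent and does not match the actual mechanism. You say the $\delta s\geq 1$ regime has terms ``exponentially small in $s$, so the tail is negligible'', and then that the exponent is pinned down at $s^{*}\approx\chi N$. In fact the $\delta s\geq 1$ regime is the \emph{dominant} one, not negligible: the paper bounds each term there by $a^{s}$ with $a=(3/2)^{-(1-\delta)}$ uniformly in $s$ (this is their Lemma~\ref{lem:logbino}, where the constraint $s\leq\chi N$ is used to get the constant $\log(3/2)$ in the rate), so the geometric sum is controlled by its \emph{first} term at $s\approx 1/\delta=2\log N$, giving $a^{2\log N}=N^{-2\log(3/2)(1-\delta)}$. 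The factor $2$ in the exponent comes from this lower endpoint, not from $s\approx\chi N$. Your $O(N^{2-D})$ bound for the small-$s$ regime is correct but subdominant (for $D=3$ it is $O(1/N)$, which is smaller than $N^{-2\ln(3/2)}\approx N^{-0.81}$). So the picture is: the $\log(3/2)$ is the uniform per-$s$ decay rate (whose worst case is indeed near $s=\chi N$), but the term that dominates the sum sits at the regime boundary $s\approx 2\log N$, and both regimes feed into the final exponent.

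\textbf{Part (ii): minor variants.} Your argument is a correct variant of the paper's. In the ``both sides small'' case you apply the one-sided expansion to $T_L$ and $T_R$ separately and add, getting $\delta|T|$; the paper applies it only to the larger side and gets $(\delta/2)|T|$. In the ``one side large'' case you use edge counting, while the paper uses the bijectivity of a single matching to get $|\Gamma(L')|\geq|L'|$ directly and then subtracts $|R'|$. Both routes work; yours needs the symmetric version of (i) for subsets of $R$, which you note correctly.
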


The proof of Lemma \ref{lem:biparmildexp} follows the proof of \cite[Theorem 4.1.1]{kowalski2019introduction}, so we defer the proof to Appendix \ref{sec:apppendmildbipar}.
In the context of subgraph formed by two adjacent sets of vertices between $S_{i,m}$ and $S_{i+1,m}$ for $1\leq i\leq n-1$, also between $S_{1,m}$ and $S_{n,m}$ of the regular sunflower graph $\SG$, we have $D=(d-1)/2$, and $N=(d-2)(d-1)^{m-2}$. We will next use this lemma to study the expansion property of $\SG$.



\begin{theorem} 
\label{thm:graph_expansion}
Let $m=\Theta(n)$ and $d\geq 7$ be a odd integer constant. With at least $1-\exp(-\Omega(n))$ probability, the $d$-regular sunflower graph $\SG=(\SV,\SE)$ defined in Definition~\ref{defn:randomG_build} is a mild expander graph as defined in Definition~\ref{defn:mildexpandergraph}. More precisely, for any subset $T \subseteq \SV$, and $|T|\leq \frac{|\SV|}{2}$, we have $|\N(T)\backslash T| \geq \frac{1}{\polylog {|\SV|}} \cdot |T|$.
\end{theorem}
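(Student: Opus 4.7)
The plan is to reduce the mild-expansion of $\SG$ to the bipartite expansion guaranteed by Lemma~\ref{lem:biparmildexp} combined with deterministic facts about the branching structure of the trees $\mathcal{T}_i$. First, I will apply Lemma~\ref{lem:biparmildexp} to each of the $n$ bipartite subgraphs between consecutive leaf layers $S_{i,m}$ and $S_{i+1,m}$ (indices modulo $n$); each such subgraph is the union of $(d-1)/2\ge 3$ random perfect matchings on $N=(d-2)(d-1)^{m-2}$ vertices per side, and fails the stated mild-expansion conclusions with probability $\Or(N^{-c})$ for a constant $c>0$. Since $m=\Theta(n)$ and hence $\log N=\Theta(n)$, a union bound over the $n$ subgraphs gives the event $E$ -- that every one of them satisfies both parts of Lemma~\ref{lem:biparmildexp} with parameters $\chi=2/3$ and $\delta=1/(2\log N)$ -- with probability at least $1-n\cdot \Or(N^{-c})=1-\exp(-\Omega(n))$. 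Everything that follows is deterministic conditional on $E$.

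Conditioning on $E$, I fix an arbitrary $T\subseteq\SV$ with $|T|\le|\SV|/2$ and set $T_{i,j}=T\cap S_{i,j}$, $T_i=T\cap V(\mathcal{T}_i)$. My plan is to build $|\N(T)\setminus T|$ from two complementary sources. The first is a \emph{bipartite boundary} coming from the $n$ leaf-layer matchings: in each pair $(S_{i,m},S_{i+1,m})$ I apply Lemma~\ref{lem:biparmildexp}(ii) to $T_{i,m}\cup T_{i+1,m}$ when its size is at most $N$ (the ``small regime''), which directly yields $\Omega(\delta)\cdot|T_{i,m}\cup T_{i+1,m}|$ bipartite neighbors lying in $(S_{i,m}\cup S_{i+1,m})\setminus T\subseteq\N(T)\setminus T$; otherwise I apply the lemma to the complement $B=(S_{i,m}\cup S_{i+1,m})\setminus T$ (the ``large regime''), and use a short double-counting step based on the $(d-1)/2$-regularity of the bipartite graph to convert the $\Omega(\delta)|B|$ expansion of $B$ into a comparable lower bound on the number of $B$-vertices adjacent to $T$. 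The second source is the \emph{tree boundary}: every child in $\mathcal{T}_i$ of a vertex $v\in T_i$ that lies outside $T$ contributes a distinct vertex to $\N(T)\setminus T$, and analogously for parents and for the root-cycle edges $\{s_i,s_{i+1}\}$. The key deterministic fact that balances the two sources is that for $d\ge 7$ the leaf layer accounts for a $(d-2)/(d-1)\ge 5/6$ fraction of $|V(\mathcal{T}_i)|$, so that for the tree-downward closure $\hat{T}_i\supseteq T_i$ one has $|\hat{T}_i\cap S_{i,m}|\ge\frac{d-2}{d-1}|T_i|$; any shortfall of $T_{i,m}$ from this ratio must be paid for by cuts in the tree and therefore contributes to the tree-boundary term.

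The main obstacle I anticipate is the adversarial configuration in which $T$ is assembled from many complete subtrees (making the tree boundary tiny) that together saturate multiple adjacent leaf layers (so that the putative bipartite neighbors of $T_{i,m}$ land back inside $T_{i\pm1,m}\subseteq T$). The plan to surmount this is to rely on the large-regime complement analysis above: once $T$ nearly saturates the leaf mantle the complement $T^c$ has appreciable mass either in the remaining leaves -- supplying bipartite boundary via the complement application of Lemma~\ref{lem:biparmildexp}(ii) -- or in the interior of the trees, in which case the leaves of $S_{i,m}\setminus T$ have ancestor paths through $\mathcal{T}_i$ that must eventually cross out of $T$, furnishing additional boundary witnesses, and in the most extreme case where some entire tree sits in $T$ the root-cycle edges $\{s_i,s_{i+1}\}$ carry the boundary since $|T|\le|\SV|/2$ precludes every tree being fully contained. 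After a careful accounting that keeps the double-counting multiplicity between these sources bounded by an $\Or(1)$ constant depending only on $d$, summing the contributions across $i=1,\ldots,n$ produces $|\N(T)\setminus T|\ge c_d\,\delta\,|T|=\Omega(|T|/\log|\SV|)$, which is of the required form $|T|/\polylog(|\SV|)$. Combined with the first paragraph, this yields the theorem with the claimed $\exp(-\Omega(n))$ failure probability.
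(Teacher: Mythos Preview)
Your plan diverges from the paper's argument, and as written it has a real gap in the final accounting step.

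\textbf{Where your summation breaks.} Take $T=\bigcup_{i=1}^{n/2}V(\mathcal{T}_i)$, i.e.\ $n/2$ consecutive entire trees, so $|T|=|\SV|/2$. Inside these trees the tree boundary vanishes; the root-cycle edges contribute only two vertices. For every interior bipartite pair $(S_{i,m},S_{i+1,m})$ with $1\le i\le n/2-1$, both sides are wholly in $T$, so you are in the ``large regime'' with $B=\emptyset$ and your complement application of Lemma~\ref{lem:biparmildexp}(ii) yields nothing. The only nonzero bipartite contributions come from the two boundary pairs $(S_{n/2,m},S_{n/2+1,m})$ and $(S_{n,m},S_{1,m})$, each giving $\Omega(\delta N)$. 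Your total is therefore $\Theta(\delta N)=\Theta(\delta|T|/n)$, a factor $n$ short of the claimed $c_d\,\delta\,|T|$. (The weaker bound would still be $1/\polylog{|\SV|}$, but your argument does not establish even that without more work; the ``tree-downward closure'' heuristic is also not sound as stated, since a shortfall of $|\hat T_{i,m}|-|T_{i,m}|$ does \emph{not} translate into proportional tree boundary: for $T_i=\{s_i\}$ the shortfall is $N$ while the boundary is only $d-2$.)

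\textbf{What the paper does instead.} The paper's key observation, which you do not use, is that each perfect matching is a bijection, so with $A_i:=T\cap S_{i,m}$ one has $\big||A_i|-|A_{i+1}|\big|\le |\N(T)\setminus T|$; hence under a small-boundary hypothesis all $|A_i|$ are nearly equal. The proof then argues by contradiction and focuses on the single tree $\mathcal{T}_{i^*}$ with $|T_{i^*}|$ maximal (so $|T_{i^*}|\ge |T|/n$). If $|A_{i^*}|\le |T_{i^*}|/3$ the interior part $B_{i^*}=T_{i^*}\setminus A_{i^*}$ is large and the deterministic tree expansion $|\N(B_{i^*})\setminus B_{i^*}|\ge(d-2)|B_{i^*}|$ already gives $\Omega(|T|/n)$ boundary. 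Otherwise $|A_{i^*}|\ge |T|/(3n)$, and the equalization forces $|A_{i^*}|\le (10/9)|T|/n\le \chi N$, so a \emph{single} application of Lemma~\ref{lem:biparmildexp}(i) to $A_{i^*}$ gives $(1+\delta)|A_{i^*}|$ neighbors in $S_{i^*+1,m}$, exceeding $|A_{i^*+1}|$ by $\Omega(\delta|A_{i^*}|)=\Omega(|T|/n^2)$. The equalization step is exactly what rescues the case that defeats your summation.
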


\begin{proof}
    First we will introduce some parameters to be used later.
    Let $\chi=\frac{2}{3}$, $\delta= 1/\Theta(n)$ and $N=(d-2)(d-1)^{m-2}$. 
    We note that
    \[
    |\mathcal{V}| = n(d-1)^{m-1}=\Theta(nN).
    \]
    To prove the theorem, it suffices to show that with  at least $1-\exp(-\Omega(n))$ probability, for any subset $T \subseteq \SV$ and $|T|\leq |\SV|/2$, we have $|\N(T) \backslash T |\geq \Omega(|T|/n^3)$.

  Let $S_i=\bigcup_{j=1}^{m} S_{i,j}$ be the vertices of $i$th constituent tree $\mathcal{T}_i$ as defined in Definition~\ref{defn:randomG_build} and $T_i=T\cap S_i$. 
  Let $A_i=T_i\cap S_{i,m}$ and $B_i=T_i\backslash A_i$. In other words, $A_i$ contains the elements of $T_i$ (and therefore $T$) that are also the leaf vertices of the constituent tree $\mathcal{T}_i$, while $B_i$ contains the elements of $T_i$ that are internal vertices of $\mathcal{T}_i$.
  We want to show that $T$ has at least $\Omega(|T|/n^3)$ adjacent vertices that are not in itself. Therefore, we assume towards contradiction that with probability at least $\exp(-o(n))$, we can find a subset $T$ such that 
  \begin{equation}
      \label{eq:assumption_contradict}
      |\N(T)\setminus T|=o(|T|/n^3),\quad |T|\leq |\mathcal{V}/2|.
  \end{equation}

  Observe that the number of edges out of the set $T$ is at least $\big||A_i|-|A_{i+1}|\big|$ in the bipartite part between trees $i$ and $i+1$, because the leaves of $\mathcal{T}_i$ and $\mathcal{T}_{i+1}$ are linked through $(d-1)/2$ random perfect matchings, each of which is a bijection. Therefore we have 
  \[
  \big||A_i|-|A_{i+1}|\big|=o(|T|/n^3).
  \] 
  Hence for each $A_i$ and $A_j$, we have $$\big||A_i|-|A_{j}|\big|=o(|T|/n^{2}).$$
  
  Without loss of generality, assume that $|T_{1}|\geq  |T_i|$ for $i\in [n]$, then we have $|T_1|\geq \frac{1}{n}|T|$. 
  \begin{itemize}
      \item We first consider $|A_1|\leq \frac{1}{3}|T_1|$. Note that because $B_1$ is a subset of internal vertices of $\mathcal{T}_1$, whose vertices have at least $d-2$ children each. With this fact we can prove that $|\Gamma(B_1)\setminus B_1|\geq (d-2)|B_1|\geq |B_1|\geq \frac{2}{3}|T_1|$. 

        This can be proven by considering each connected component of $B_1$. By induction on the size of a connected component $S$ we can show that it has at least $(d-2)|S|$ child vertices that are not contained in $S$. Because different connected components cannot share child vertices due to the tree structure of $\mathcal{T}_1$, we can show that $|\Gamma(B_1)\setminus B_1|\geq (d-2)|B_1|$.
          
      Again because the vertices in $B_1$ are all internal vertices of $\mathcal{T}_1$, we have $(\N(B_1)\setminus B_1)\cap T = (\N(B_1)\setminus B_1)\cap A_1$. Thus there are at least $\frac{1}{3}|T_1|$ vertices inside $\N(B_1)$ that are also outside the set $T$, that is, $|\N(T)\setminus T|\geq \frac{1}{3}|T_1| \geq \frac{1}{3n}|T|$. This contradicts the assumption in \eqref{eq:assumption_contradict}.
      
      \item If $|A_1|\geq \frac{1}{3}|T_1|\geq \frac{1}{3n}|T|$, we have $|A_1| \leq \frac{10}{9}\frac{1}{n}|T| $. This is true by the following argument: Note that $|A_1|+|A_2|+\ldots +|A_n| \leq |T|$ and $||A_i|-|A_{j}||=o(|T|/n^{2})$, so $n|A_1|\leq |T|+o(|T|/n)$.
  Hence $|A_1|\leq \frac{1}{n}|T|+ o(|T|/n^{2}) \leq \frac{10}{9}\cdot \frac{1}{n}|T| $.

     Since $|T|\leq |\SV|/2$, $|\SV|/n=\frac{d-1}{d-2}N$ and $\frac{d-1}{d-2} \leq \frac{6}{5}$, we have $|A_1|\leq \frac{10}{9}\cdot \frac{1}{n}|T|\leq \frac{10}{9}\cdot \frac{1}{n}\frac{|\SV|}{2} = \frac{5}{9} \frac{d-1}{d-2}N \leq \chi N$. That is $|A_1|\leq \chi N$. 
     This tells us that $|A_1|$ is small enough for the expansion property in Lemma~\ref{lem:biparmildexp} (i) to hold.
     By Lemma~\ref{lem:biparmildexp} (i), and because $A_1$ is contain in one side of the bipartite regular graph between $S_{1,m}$ and $S_{2,m}$, we know that 
     \begin{equation}
     \label{eq:expansion_property_hold_ineq}
         |(S_{2,m}\cap\N(A_1))\setminus A_1|\geq (1+\delta)|A_1| \geq \frac{1}{3n}(1+\delta)|T|,
     \end{equation}
     with probability at least $1-\exp(-\Omega(n))$.
     Note that in the above equation it is redundant to exclude the set $A_1$, because $S_{2,m}\cap\N(A_1)$ does not intersect with $A_1\subset S_{1,m}$. We keep it there anyway to keep the notation consistent with Lemma~\ref{lem:biparmildexp}.
     The number of vertices in $S_{2,m}$ that are adjacent to $A_1$ and at the same time not in $T$ is 
     \[
     |(S_{2,m}\cap\N(A_1))\backslash T|
     \geq |(S_{2,m}\cap\N(A_1))\backslash A_1|-|A_2|\geq (1+\delta)|A_1|-|A_2|\geq \Omega(|T|/n^2),
     \]
     where in the second inequality we have used $||A_1|-|A_{2}||=o(|T|/n^3)$.
     Because $S_{2,m}\cap\N(A_1)\subset \N(T)$, the above inequalities imply that $|\N(T)\setminus T|\geq \Omega(|T|/n^2)$.
     Therefore for the assumption in \eqref{eq:assumption_contradict} to holds with probability at least $\exp(-o(n))$, we will need \eqref{eq:expansion_property_hold_ineq} to fail with probability at least $\exp(-o(n))$. But \eqref{eq:expansion_property_hold_ineq} holds with probability at least $1-\exp(-\Omega(n))$ by Lemma~\ref{lem:biparmildexp} (i), and therefore we have reached a contradiction.
  \end{itemize}
  
   Therefore, for any set $T$ with $|T|\leq |\SV|/2$, we have $|\N(T)\setminus T|=\Omega(|T|/n^3)$.
\end{proof}

\subsection{The enlarged sunflower graph}
\label{sec:the_enlarged_sunflower_graph}

On top of the sunflower graph constructed in the previous sections, we also add isolated vertices to the graph, as was done in \cite{childs2003ExpSpeedupQW,balasubramanian2023exponential}. 
The purpose is to ensure that when one randomly picks a vertex, with overwhelming probability, one will not be able to find a vertex that is in any way useful for finding a path. 


More precisely, we introduce a new set of vertices $\mathcal{V}_{\mathrm{aux}}$, and set $|\mathcal{V}_{\mathrm{aux}}|=\mathsf{N}_{\mathrm{aux}}$ to be any non-negative integer of our choosing. We then construct an \emph{enlarged sunflower graph}:
\begin{definition}[The enlarged sunflower graph]
    \label{defn:enlarged_sunflower_graph}
    Let the sunflower graph $\SG=(\SV,\SE)$ be the sunflower graph as defined in \cref{defn:randomG_build}. Let $\mathcal{V}_{\mathrm{aux}}$ be a set of $\mathsf{N}_{\mathrm{aux}}$ vertices. Then we call the graph $\tilde{\SG}=(\SV\cup \mathcal{V}_{\mathrm{aux}},\SE)$ the enlarged sunflower graph. 
\end{definition}
Note that we do not add any edge in this process, but only add on additional isolated vertices. When $\mathsf{N}_{\mathrm{aux}}=0$ we recover the original sunflower graph. When randomly picking a vertex from the enlarged sunflower graph $\tilde{\SG}$, the probability of obtaining a vertex in $\mathcal{V}$, i.e., a potentially useful vertex, is $\mathsf{N}_{\mathcal{G}}/(\mathsf{N}_{\mathcal{G}}+\mathsf{N}_{\mathrm{aux}})$, where $\mathsf{N}_{\mathcal{G}}=|\mathcal{V}|=n(d-1)^{m-1}$. In our classical lower bound proof in Section~\ref{sec:the_classical_lower_bound}, we choose $\mathsf{N}_{\mathrm{aux}} = \Omega(\mathsf{N}_{\mathcal{G}}^2)$ to make sure that this probability is exponentially small in $m$. The quantum algorithm works for any choice of  $\mathsf{N}_{\mathrm{aux}}$, and the query complexity is independent of its value, as can be seen from \cref{thm:the_algorithm}.
 
\subsection{The invariant subspace and effective Hamiltonian}
\label{sec:invariant_subspace_and_effective_ham}

In this section we will discuss the adjacency matrix $A$ of the regular sunflower graph $\SG$ in \cref{defn:randomG_build}. In particular, certain symmetries of this graph makes it possible for us to identify an invariant subspace of the adjacency matrix $A$. Within this invariant subspace we can replace $A$ with its restriction to this subspace, which we will call the \emph{effective Hamiltonian} and denote it by $H$. Below we will provide a precise definition of $H$ as well its structure. 


Hereafter we will need to view vertices in the enlarged sunflower graph $\tilde{\SG}$ defined in \cref{defn:enlarged_sunflower_graph} as quantum states, which is how they are stored on a quantum computer. 
The multigraph $\tilde{\SG}$ contains $\mathsf{N}=n(d-1)^{m-1} + \mathsf{N}_{\mathrm{aux}}$, where $\mathsf{N}_{\mathrm{aux}}$ can be chosen to be any non-negative integer. From the oracular models introduced in \cref{defn:classical_adjacency_list_oracle} and \cref{defn:quantum_adjacency_list_oracle}, these vertices each correspond to a bit-string of length $\mathsf{n}=\lceil\log_2(\mathsf{N})\rceil$.
Each vertex $v\in\SV\cup\SV_{\mathrm{aux}}$ is then stored on a quantum computer as a computational basis state $\ket{v}$ on at least $\mathsf{n}$ qubits (one is of course free to add on more qubits). These $\mathsf{n}$ qubits give rise to a $2^{\mathsf{n}}$-dimensional Hilbert space which we denote by $\mathcal{H}$.




Each supervertex defined in \cref{defn:supervertex} also corresponds to a quantum state, which we call the \emph{supervertex state} is defined as 
\begin{equation}
\label{eq:supervertex_state_defn}
    \ket{S_{i,j}} = \frac{1}{\sqrt{s_{i,j}}} \sum_{ v \in S_{i,j}}\ket{v},
\end{equation}
where the value of $s_{i,j}=|S_{i,j}|$ is given in \eqref{eq:cardinality_supervertex}.
These supervertex states are all orthogonal to each other because supervertices do not overlap.


\begin{definition}[\emph{Symmetric Subspace} $\mathcal{S}$] \label{def:SubspaceS}

We define the $mn$-dimensional \emph{symmetric subspace } $\mathcal{S}$ as follows
    \begin{equation}
    \label{eq:the_symmetric_subspace}
    \mathcal{S} = \mathrm{span}\{\ket{S_{i,j}}:1\leq i\leq n,1\leq j\leq m\}.
\end{equation}

\end{definition}



Below, we will show that the symmetric subspace $\mathcal{S}$ is in fact an invariant subspace of the adjacency matrix $A$.

\begin{lemma}
    \label{lem:invariant_subspace}
    Let $\tilde{\SG}$ be the enlarged sunflower graph defined in \cref{defn:enlarged_sunflower_graph}.
    Let $A$ be the adjacency matrix of the this multigraph as defined in \cref{defn:adjacency_matrix_of_multigraph}. Then for the supervertex state $\ket{S_{k,l}}$ defined in \eqref{eq:supervertex_state_defn}, we have
    \[
    A\ket{S_{k,l}} = \sum_{i=1}^n\sum_{j=1}^m\frac{e_{ij,kl}}{\sqrt{s_{i,j}s_{k,l}}}\ket{S_{i,j}},
    \]
    where $e_{ij,kl}$ is the number of edges linking vertices in $S_{i,j}$ (defined in \cref{defn:supervertex}) to those in $S_{k,l}$, and $s_{i,j}=|S_{i,j}|$.
\end{lemma}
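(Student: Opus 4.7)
The plan is to prove the identity by a direct computation: expand $\ket{S_{k,l}}$ using its definition in \eqref{eq:supervertex_state_defn}, apply $A$, and then reorganize the resulting vertex sum according to the supervertex partition.

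First, I would write
\[
A\ket{S_{k,l}} = \frac{1}{\sqrt{s_{k,l}}}\sum_{v\in S_{k,l}}\sum_{u} A_{u,v}\ket{u}.
\]
Since every auxiliary vertex in $\mathcal{V}_{\mathrm{aux}}$ is isolated in $\tilde{\SG}$, the corresponding entries of $A$ vanish, so the outer sum over $u$ may be restricted to $\mathcal{V}$. The vertex set $\mathcal{V}$ is partitioned by the supervertices $\{S_{i,j}\}_{i,j}$ (by \cref{defn:supervertex} and the construction in \cref{defn:randomG_build}), which lets me regroup the sum as
\[
A\ket{S_{k,l}} = \frac{1}{\sqrt{s_{k,l}}}\sum_{i=1}^{n}\sum_{j=1}^{m}\sum_{u\in S_{i,j}}\Biggl(\sum_{v\in S_{k,l}} A_{u,v}\Biggr)\ket{u}.
\]

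The second step is to invoke the key uniformity property already recorded in \eqref{eq:num_edges_connecting_u_to_Skl}: for every $u\in S_{i,j}$, the inner sum $\sum_{v\in S_{k,l}} A_{u,v}$ equals the constant $e_{ij,kl}/s_{i,j}$, independent of which $u\in S_{i,j}$ is chosen. This is the genuine content of the lemma; it comes from the symmetry in the construction of $\SG$ (the tree structure inside each $\mathcal{T}_i$ and the random perfect matchings between the leaf layers are arranged so that every vertex in a given layer of a given tree sees the same total multiplicity of edges into any other supervertex). Substituting this in, the innermost factor leaves the coefficient constant over $u\in S_{i,j}$, so
\[
\sum_{u\in S_{i,j}}\Biggl(\sum_{v\in S_{k,l}} A_{u,v}\Biggr)\ket{u}=\frac{e_{ij,kl}}{s_{i,j}}\sum_{u\in S_{i,j}}\ket{u}=\frac{e_{ij,kl}}{\sqrt{s_{i,j}}}\ket{S_{i,j}},
\]
where the last equality uses \eqref{eq:supervertex_state_defn}.

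Combining the two steps and pulling the $1/\sqrt{s_{k,l}}$ into the coefficient yields
\[
A\ket{S_{k,l}}=\sum_{i=1}^{n}\sum_{j=1}^{m}\frac{e_{ij,kl}}{\sqrt{s_{i,j}\,s_{k,l}}}\ket{S_{i,j}},
\]
which is the claimed formula. There is no real obstacle here; the whole argument hinges on the uniformity identity \eqref{eq:num_edges_connecting_u_to_Skl}, which has already been established from the combinatorial symmetry of the sunflower construction. The lemma is essentially a bookkeeping consequence of that symmetry, and an immediate corollary is that $\mathcal{S}$ is invariant under $A$, which is the fact that \cref{lem:invariant_subspace} will feed into the definition of the effective Hamiltonian $H$.
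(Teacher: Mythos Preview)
Your proof is correct and follows essentially the same approach as the paper: expand $\ket{S_{k,l}}$, regroup the sum over $u$ by the supervertex partition, apply the uniformity identity \eqref{eq:num_edges_connecting_u_to_Skl}, and recombine into supervertex states. The observation about isolated vertices and the final remark about invariance of $\mathcal{S}$ also match the paper's treatment.
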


\begin{proof}
    We first observe that we only need to concern ourselves with the subgraph that is the sunflower graph $\SG$, since the isolated vertices are not contained, or connected to, any supervertex.
    \begin{align*}
    A\ket{S_{kl}} = & \frac{1}{\sqrt{s_{kl}}} \sum_{v \in S_{kl}} A \ket{v}
    = \frac{1}{\sqrt{s_{kl}}} \sum_{v \in S_{kl}} \sum_{u} A_{u,v}\ket{u} \\ 
    = &\sum_{ij} \frac{1}{\sqrt{s_{kl}}}  \sum_{u\in S_{ij}} \sum_{v \in S_{kl}} A_{u,v}\ket{u}  
    = \sum_{ij} \frac{1}{\sqrt{s_{kl}}}  \sum_{u\in S_{ij}} \frac{e_{ij,kl}}{s_{ij}}\ket{u} \\ 
    =&\sum_{ij} \frac{e_{ij,kl}}{\sqrt{s_{ij}s_{kl}}} \ket{S_{ij}}. 
 \end{align*}
 where the fourth equality comes from \eqref{eq:num_edges_connecting_u_to_Skl}, and the last equality comes from the definition of the supervertex state in \eqref{eq:supervertex_state_defn}.
\end{proof}

We define, with $e_{ij,kl}$ and $s_{i,j}$ as in Lemma~\ref{lem:invariant_subspace},
\begin{equation}
    \label{eq:effective_ham_matrix_entries}
    \tilde{H}_{ij,kl} = \frac{e_{ij,kl}}{\sqrt{s_{i,j}s_{k,l}}},
\end{equation}
and these numbers serve as matrix elements that completely determines how $A$ acts on the invariant subspace $\mathcal{S}$. Because they will play an important role in our analysis later on, we will compute them explicitly here.

\begin{lemma}
    \label{lem:effective_ham_matrix_entries_value}
    $\tilde{H}_{ij,kl}$ defined in \eqref{eq:effective_ham_matrix_entries} takes the following value: 
    
    \[
     \tilde{H}_{ij,kl} = \left\{\begin{array}{ll}
      1 & \mbox{if } j=l=1 \mbox{ and }  k=i + 1 \mbox{ for } 1 \leq i\leq n-1; \\
      1 & \mbox{if } j=l=1 \mbox{ and }  k= 1, i=n;\\
     \sqrt{d-1} & \mbox{if } 2 \leq j\leq m-1, 2 \leq l \leq m \mbox{ and } i=k;\\
     \sqrt{d-2} & \mbox{if } j=1, l=2 \mbox{ and } i=k; \\
       \frac{d-1}{2} & \mbox{if } j=l=m \mbox{ and }  k=i + 1 \mbox{ for } 1 \leq i\leq n-1;\\
       \frac{d-1}{2} & \mbox{if } j=l=m \mbox{ and } k= 1, i=n;\\
       0 & \text{otherwise;}
    \end{array}\right.  
   \]

   and $\tilde{H}_{ij,kl}=\tilde{H}_{kl,ij}$.
        
\end{lemma}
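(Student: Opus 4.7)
The proof is essentially a direct substitution: plug the explicit formulas for $e_{ij,kl}$ from \eqref{eq:num_edges_bw_supervertices} and for $s_{i,j}$ from \eqref{eq:cardinality_supervertex} into the definition $\tilde{H}_{ij,kl}=e_{ij,kl}/\sqrt{s_{i,j}s_{k,l}}$, and simplify case by case. The symmetry $\tilde{H}_{ij,kl}=\tilde{H}_{kl,ij}$ is immediate from the fact that $\SG$ is an undirected multigraph, so $e_{ij,kl}=e_{kl,ij}$, while the denominator $\sqrt{s_{i,j}s_{k,l}}$ is manifestly symmetric.

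The plan is to go through the five nonzero cases listed in \eqref{eq:num_edges_bw_supervertices} in order. For the ``root cycle'' case $j=l=1$ with $k=i+1$ (or the wrap-around $k=1,i=n$), we have $s_{i,1}=s_{k,1}=1$ and $e_{ij,kl}=1$, yielding $\tilde{H}_{ij,kl}=1$. For the ``leaf matching'' case $j=l=m$ with $k=i\pm 1\pmod n$, we have $s_{i,m}=s_{k,m}=(d-2)(d-1)^{m-2}$ and $e_{ij,kl}=\tfrac{d-1}{2}(d-2)(d-1)^{m-2}$, so $\tilde{H}_{ij,kl}=\tfrac{d-1}{2}$.

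The remaining work is the in-tree contributions, where $i=k$ and $|j-l|=1$. Split this into (a) $\{j,l\}=\{1,2\}$ and (b) $\{j,l\}\subset\{2,\dots,m\}$. In case (a), WLOG $j=1,l=2$, so $s_{i,1}=1$, $s_{i,2}=d-2$, and $e_{i1,i2}=\max\{1,d-2\}=d-2$, giving $\tilde{H}_{i1,i2}=(d-2)/\sqrt{d-2}=\sqrt{d-2}$. In case (b), WLOG $l=j+1$ with $j\geq 2$; then $s_{i,j}=(d-2)(d-1)^{j-2}$, $s_{i,j+1}=(d-2)(d-1)^{j-1}$, and $e_{ij,i(j+1)}=\max\{s_{i,j},s_{i,j+1}\}=(d-2)(d-1)^{j-1}$. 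Thus
\[
\tilde{H}_{ij,i(j+1)} = \frac{(d-2)(d-1)^{j-1}}{\sqrt{(d-2)^2 (d-1)^{2j-3}}} = \sqrt{d-1}.
\]
All cases not covered have $e_{ij,kl}=0$ and hence $\tilde{H}_{ij,kl}=0$.

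There is no real obstacle here; the only thing one needs to be careful about is to treat the ``boundary'' layer $j=1$ separately from the bulk layers $j\geq 2$, since the formula for $s_{i,j}$ is piecewise and the degree of the root ($d-2$) differs from that of the internal vertices ($d-1$). Once the two regimes are handled, the factors of $(d-1)$ cancel cleanly and the stated values emerge directly.
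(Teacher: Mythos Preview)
Your proof is correct and follows exactly the approach the paper takes: the paper's proof simply says ``Recall the values of $s_{i,j}$ given in \eqref{eq:cardinality_supervertex} and those of $e_{ij,kl}$ given in \eqref{eq:num_edges_bw_supervertices}, we can compute $\tilde{H}_{ij,kl} = \frac{e_{ij,kl}}{\sqrt{s_{i,j}s_{k,l}}}$ directly,'' and you have carried out that direct computation case by case with the correct values.
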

\begin{proof}
    Recall the values of $s_{i,j}$ given in \eqref{eq:cardinality_supervertex} and those of $e_{ij,kl}$ given in \eqref{eq:num_edges_bw_supervertices}, we can compute $\tilde{H}_{ij,kl} = \frac{e_{ij,kl}}{\sqrt{s_{i,j}s_{k,l}}}$ directly. 
\end{proof}

Any matrix can be regarded as a linear map, and we will look at the adjacency matrix $A$ from this viewpoint. Because the symmetric subspace $\mathcal{S}$ is an  invariant subspace of $A$ according to \cref{lem:invariant_subspace}, we can then \emph{restrict} the linear map $A$ to this subspace to obtain a well-defined linear map. 

\begin{definition}[The effective Hamiltonian]
    \label{defn:effective_hamiltonian}
    The effective Hamiltonian $H:\mathcal{S}\to\mathcal{S}$ is the restriction of the adjacency matrix $A:\mathcal{H}\to\mathcal{H}$ (\cref{defn:adjacency_matrix_of_multigraph}) of the enlarged sunflower graph (\cref{defn:enlarged_sunflower_graph}) to its invariant subspace $\mathcal{S}$ given in \eqref{eq:the_symmetric_subspace}. In other words, we define $H:\mathcal{S}\to\mathcal{S}$ to be
    \[
    H: \mathcal{S}\ni \ket{\phi}\mapsto A\ket{\phi}.
    \]

\end{definition}

We name $H$ the ``effective Hamiltonian'' because if we consider quantum dynamics described by the time evolution operator $e^{-iA t}$, then if the initial state is in $\mathcal{S}$ the dynamics will be completely captured by the restriction of $A$ to $\mathcal{S}$, and therefore $H$ serves as a Hamiltonian governing the time evolution. This was a key idea in the quantum walk algorithm in \cite{childs2003ExpSpeedupQW}.

We will use the following fact when proving the correctness and efficiency of our algorithm:
\begin{lemma} 
\label{lem:effective_ham_polynomial}
Let $A$ be the adjacency matrix of the regular sunflower graph and $H$ be the effective Hamiltonian restricted to the symmetric subspace $\mathcal{S}$ as defined \cref{def:SubspaceS}.  
Then for any vector $\ket{v} \in \mathcal{S}$, we have 
$f(A) \ket{v}= f(H) \ket{v}$ where $f(\cdot)$ is a polynomial function.
\end{lemma}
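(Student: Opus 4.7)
The plan is to reduce the claim about an arbitrary polynomial $f$ to the monomial case $f(x)=x^k$, and then prove the monomial case by induction on $k$ using the fact (established in \cref{lem:invariant_subspace}) that $\mathcal{S}$ is invariant under $A$ together with the very definition of $H$ as the restriction $A|_{\mathcal{S}}$ (\cref{defn:effective_hamiltonian}).

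First I would note that by linearity it suffices to show $A^k \ket{v} = H^k \ket{v}$ for every integer $k\geq 0$ and every $\ket{v}\in\mathcal{S}$; the general polynomial statement then follows by taking linear combinations of such monomial identities. The base case $k=0$ is immediate since both sides equal $\ket{v}$.

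For the inductive step, suppose $A^k \ket{v} = H^k \ket{v}$ for all $\ket{v}\in\mathcal{S}$. Fix $\ket{v}\in\mathcal{S}$ and set $\ket{w} := H^k\ket{v}$. Since $H$ maps $\mathcal{S}$ to $\mathcal{S}$ by \cref{defn:effective_hamiltonian}, we have $\ket{w}\in\mathcal{S}$. Then
\begin{equation*}
A^{k+1}\ket{v} \;=\; A\bigl(A^k \ket{v}\bigr) \;=\; A\ket{w} \;=\; H\ket{w} \;=\; H^{k+1}\ket{v},
\end{equation*}
where the second equality uses the inductive hypothesis and the third equality holds because $\ket{w}\in\mathcal{S}$ and $H$ is defined to agree with $A$ on $\mathcal{S}$. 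This completes the induction. Extending by linearity to an arbitrary polynomial $f$ yields $f(A)\ket{v}=f(H)\ket{v}$ for all $\ket{v}\in\mathcal{S}$.

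There is no real obstacle here: the entire content of the lemma is encapsulated in the invariance of $\mathcal{S}$ under $A$, which has already been established. The only small subtlety is to make sure the inductive step applies the hypothesis at the vector $\ket{v}$ and the definition of $H$ at the (different) vector $\ket{w}=H^k\ket{v}$, which is permissible precisely because $H^k$ keeps us inside $\mathcal{S}$.
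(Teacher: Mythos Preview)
Your proposal is correct and essentially identical to the paper's proof: both reduce to monomials by linearity and proceed by induction on the exponent, using that $H$ agrees with $A$ on $\mathcal{S}$ and that $\mathcal{S}$ is $A$-invariant. The only cosmetic differences are that the paper starts the induction at $k=1$ rather than $k=0$ and does not name the intermediate vector $\ket{w}$ explicitly.
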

\begin{proof}
    By linearity, we only need to prove for monomials $f(x)=x^k$. We do so by induction on $k$. When $k=1$ we have $A\ket{v}=H\ket{v}$ by definition. If we have $A^{k-1}\ket{v}=H^{k-1}\ket{v}$, then 
    \[
    A^k\ket{v} = A(A^{k-1}\ket{v})=A(H^{k-1}\ket{v})=H(H^{k-1}\ket{v})=H^k\ket{v}.
    \]
\end{proof}

Compared to the adjacency matrix $A$, the effective Hamiltonian $H$ has the advantage that it only acts on a $mn$-dimensional subspace, and therefore can be represented by a matrix of size $mn\times mn$. We will next proceed to construct this representation and reveal further structure through it.

We first define an isometry $V_{\mathcal{S}}:\mathbb{C}^m\otimes \mathbb{C}^n \to \mathcal{S}$ through
\begin{equation}
    \label{eq:isometry_to_S_defn}
    V_{\mathcal{S}}: \sum_{i=1}^n\sum_{j=1}^m \Phi_{ij}\ket{\mathfrak{b}_j}\ket{\mathfrak{a}_i}\mapsto \sum_{i=1}^n\sum_{j=1}^m \Phi_{ij}\ket{S_{i,j}},
\end{equation}
for any $\Phi_{ij}\in\mathbb{C}$,
and where $\ket{\mathfrak{a}_i}\in\mathbb{C}^n$ is the $n$-dimensional vector with 1 on the $i$th entry and 0 everywhere else, and $\ket{\mathfrak{b}_j}\in\mathbb{C}^m$ is the $m$-dimensional vector with 1 on the $j$th entry and 0 everywhere else. 
Here we use $\ket{\mathfrak{b}_j}\ket{\mathfrak{a}_i}$ rather than $\ket{\mathfrak{a}_i}\ket{\mathfrak{b}_j}$ in order to reveal the block-tridiagonal structure of the matrix representation of the effective Hamiltonian that is to be introduced later.
It can be easily verified that this is a bijective isometry, and that its inverse $V_{\mathcal{S}}: \mathcal{S}\to \mathbb{C}^m\otimes \mathbb{C}^n$ is
\begin{equation}
    \label{eq:inverse_isometry_VS}
    V_{\mathcal{S}}^{-1}: \sum_{i=1}^n\sum_{j=1}^m \Phi_{ij}\ket{S_{i,j}} \mapsto  \sum_{i=1}^n\sum_{j=1}^m \Phi_{ij}\ket{\mathfrak{b}_j}\ket{\mathfrak{a}_i}.
\end{equation}

With this isometry, we now consider the linear map $V_{\mathcal{S}}^{-1} HV_{\mathcal{S}}$. This linear map maps $\mathbb{C}^m\otimes \mathbb{C}^n$ to itself, and therefore can be written as a $mn\times mn$ matrix. We therefore define
\begin{definition}
    \label{defn:matrix_representation_of_effective_ham}
    We call $\tilde{H} = V_{\mathcal{S}}^{-1} HV_{\mathcal{S}}$ the \emph{matrix representation} of the effective Hamiltonian $H$, for the effective Hamiltonian $H$ defined in \cref{defn:effective_hamiltonian} and $V_{\mathcal{S}}$ given in \eqref{eq:isometry_to_S_defn}.
\end{definition}

Because $V_{\mathcal{S}}$ is a bijective isometry, $H$ and $\tilde{H}$ are unitarily equivalent to each other, and therefore have the same spectrum. Their eigenvectors also have one-to-one correspondence: for any eigenvector $\ket{\Psi}$ of $H$, $V_{\mathcal{S}}^{-1}\ket{\Psi}$ is an eigenvector of $\tilde{H}$ corresponding to the same eigenvalue, and vice versa.

We will be able to obtain a more explicit characterization of $\tilde{H}$ compared to $H$, which helps us to use $\tilde{H}$ to analyze $H$. 

\begin{lemma}
    \label{lem:matrix_entries_of_H_tilde}
    The matrix entry of $\tilde{H}$ on the $((j-1)n+i)$th row and $((l-1)n+k)$th column is $\tilde{H}_{ij,kl}$ in \eqref{eq:effective_ham_matrix_entries}.
\end{lemma}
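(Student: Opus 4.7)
The plan is to compute the matrix entries of $\tilde{H}$ directly from its definition $\tilde{H}=V_{\mathcal{S}}^{-1}HV_{\mathcal{S}}$ and translate the result back to the expression in \eqref{eq:effective_ham_matrix_entries} using \cref{lem:invariant_subspace}. The ordering convention in the statement tells us precisely which basis vector corresponds to which matrix index, so the first step is to pin this down. With the convention $V_{\mathcal{S}}\ket{\mathfrak{b}_j}\ket{\mathfrak{a}_i}=\ket{S_{i,j}}$ from \eqref{eq:isometry_to_S_defn}, the tensor product basis $\{\ket{\mathfrak{b}_j}\ket{\mathfrak{a}_i}\}$ of $\mathbb{C}^m\otimes \mathbb{C}^n$ is naturally ordered so that $\ket{\mathfrak{b}_j}\ket{\mathfrak{a}_i}$ corresponds to coordinate index $(j-1)n+i$ (the ``outer'' index $j\in\{1,\dots,m\}$ is the slow-varying one and $i\in\{1,\dots,n\}$ is fast-varying). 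So the entry asked for is exactly
\[
\bigl(\bra{\mathfrak{b}_j}\bra{\mathfrak{a}_i}\bigr)\,\tilde{H}\,\bigl(\ket{\mathfrak{b}_l}\ket{\mathfrak{a}_k}\bigr).
\]

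Next I would substitute the definition $\tilde{H}=V_{\mathcal{S}}^{-1}HV_{\mathcal{S}}$. Since $V_{\mathcal{S}}$ is a bijective isometry between $\mathbb{C}^m\otimes \mathbb{C}^n$ and $\mathcal{S}$, and since $V_{\mathcal{S}}\ket{\mathfrak{b}_l}\ket{\mathfrak{a}_k}=\ket{S_{k,l}}$ while $\bra{\mathfrak{b}_j}\bra{\mathfrak{a}_i}V_{\mathcal{S}}^{-1}=\bra{S_{i,j}}$, the entry becomes
\[
\bigl(\bra{\mathfrak{b}_j}\bra{\mathfrak{a}_i}\bigr)\,V_{\mathcal{S}}^{-1}HV_{\mathcal{S}}\,\bigl(\ket{\mathfrak{b}_l}\ket{\mathfrak{a}_k}\bigr)=\bra{S_{i,j}}H\ket{S_{k,l}}.
\]

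To finish, I would use \cref{lem:invariant_subspace} together with \cref{defn:effective_hamiltonian}, which identifies $H\ket{S_{k,l}}=A\ket{S_{k,l}}=\sum_{i',j'} \frac{e_{i'j',kl}}{\sqrt{s_{i',j'}s_{k,l}}}\ket{S_{i',j'}}$. The supervertex states $\{\ket{S_{i',j'}}\}$ are orthonormal because the supervertices $S_{i',j'}$ partition the tree vertices (by \cref{defn:supervertex} each vertex of $\mathcal{T}_i$ lies in exactly one layer $j$), and $\ket{S_{i',j'}}$ is the normalized uniform superposition over $S_{i',j'}$ by \eqref{eq:supervertex_state_defn}. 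Taking the inner product with $\bra{S_{i,j}}$ therefore picks out the single term indexed by $(i',j')=(i,j)$, giving
\[
\bra{S_{i,j}}H\ket{S_{k,l}}=\frac{e_{ij,kl}}{\sqrt{s_{i,j}s_{k,l}}}=\tilde{H}_{ij,kl},
\]
where the last equality is \eqref{eq:effective_ham_matrix_entries}. This is the claimed matrix entry.

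There is essentially no obstacle here: the lemma is a bookkeeping statement that records how the already-established action $A\ket{S_{k,l}}=\sum_{ij}\tilde{H}_{ij,kl}\ket{S_{i,j}}$ from \cref{lem:invariant_subspace} translates to a concrete matrix under the isometric identification $V_{\mathcal{S}}$. The only subtlety worth being explicit about is the indexing convention, namely that placing the $\mathbb{C}^m$ factor first in $\ket{\mathfrak{b}_j}\ket{\mathfrak{a}_i}$ is precisely what produces the row/column indexing $(j-1)n+i$, which (as flagged in the text right after \eqref{eq:inverse_isometry_VS}) is exactly what sets up the block-tridiagonal structure in the sequel.
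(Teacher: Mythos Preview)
Your proposal is correct and follows essentially the same approach as the paper: both unwind the definition $\tilde{H}=V_{\mathcal{S}}^{-1}HV_{\mathcal{S}}$, use $V_{\mathcal{S}}\ket{\mathfrak{b}_l}\ket{\mathfrak{a}_k}=\ket{S_{k,l}}$, and then invoke \cref{lem:invariant_subspace} to read off the entry. The only cosmetic difference is that the paper computes the full column $\tilde{H}\ket{\mathfrak{b}_l}\ket{\mathfrak{a}_k}=\sum_{ij}\tilde{H}_{ij,kl}\ket{\mathfrak{b}_j}\ket{\mathfrak{a}_i}$ and then identifies the $((j-1)n+i)$th coordinate, whereas you pair with $\bra{\mathfrak{b}_j}\bra{\mathfrak{a}_i}$ from the left; your explicit remark on orthonormality of the supervertex states is a helpful clarification.
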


\begin{proof}
    We only need to observe that
    \[
    \tilde{H}\ket{\mathfrak{b}_l}\ket{\mathfrak{a}_k} = V_{\mathcal{S}}^{-1} H V_{\mathcal{S}} \ket{\mathfrak{b}_l}\ket{\mathfrak{a}_k}  = V_{\mathcal{S}}^{-1} H\ket{S_{k,l}} = V_{\mathcal{S}}^{-1} \sum_{ij}\tilde{H}_{ij,kl} \ket{S_{k,l}} = \sum_{ij}\tilde{H}_{ij,kl}\ket{\mathfrak{b}_j}\ket{\mathfrak{a}_i},
    \]
    where we have used \cref{lem:invariant_subspace} and the definition of $\tilde{H}_{ij,kl}$ in \eqref{eq:effective_ham_matrix_entries}. Also note that $\ket{\mathfrak{b}_j}\ket{\mathfrak{a}_i}$, when written as a vector in $\mathbb{C}^{nm}$, has 1 on the $((j-1)n+i)$th entry and 0 everywhere else.
\end{proof}

With the matrix entries available, we can now write down the matrix $\tilde{H}$ explicitly:
\begin{equation}
\label{eq:Htilde_explicit}
    \tilde{H} = \begin{pmatrix}
        D_0     & t_1 I &        &            &     \\
        t_1 I & 0     & t_2 I  &            &     \\
              & t_2 I & 0      & \dots     &     \\
              &       & \ddots & \ddots     & t_{m-1} I    \\
              &       &        & t_{m-1} I & \gamma D_0
    \end{pmatrix},
\end{equation}
where $D_0$ is the $n\times n$ adjacency matrix associated with the cycle graph $C_n$ given in \eqref{eq:defn_D0}, and $t_1=\sqrt{d-2}$, $t_2=t_3=\dots=t_{m-1}=\sqrt{d-1}$, $\gamma=\frac{d-1}{2}$. One may notice that $\tilde{H}$ has the same sparsity pattern, i.e., the position of the non-zero entries, as the adjacency matrix of the supergraph in Figure~\ref{fig:supergraph}. This is not a coincidence since $\tilde{H}_{ij,kl}\neq 0$ {if and }only if $S_{i,j}$ and $S_{k,l}$ are linked in the supergraph, as can be seen from \eqref{eq:effective_ham_matrix_entries}. We observe that $\tilde{H}$ is a block tridiagonal matrix, and this is useful for analyzing its spectral properties.




\section{Spectral properties of the effective Hamiltonian}\label{sec:spectral_properties}

In this section, we analyze the spectral properties of the effective Hamiltonian $H$ (defined in \cref{defn:effective_hamiltonian}) associated with the enlarged sunflower graph $\tilde{\SG}$ in \cref{defn:enlarged_sunflower_graph}. Specifically, we will show that there is a unique $0$-eigenvector of $H$ that overlaps the starting state $\ket{s}=\ket{S_{1,1}}$, and the spectral gap of $H$ around $0$ is bounded from below by $1/\poly(m,n)$. 



To achieve the above objectives it suffices to study the matrix $\tilde{H}$. Because of the definition $\tilde{H}=V_{\mathcal{S}}^{-1}HV_{\mathcal{S}}$ in \cref{defn:matrix_representation_of_effective_ham}, we know that $\tilde{H}$ and $H$ share the same spectrum, and there is a bijective correspondence between their respective eigenstates.
We have already written down $\tilde{H}$ as a block tridiagonal matrix. There is a more compact way to express $\tilde{H}$ which can help us understand the structure of its eigenvalues and eigenvectors.
We let
\begin{equation}
\label{eq:defn_D1}
    D_1 = 
    \begin{pmatrix}
         0     & t_1  &        &            &     \\
        t_1  & 0     & t_2   &            &     \\
              & t_2  & 0      & \ddots     &     \\
              &       & \ddots & \ddots     & t_{m-1}     \\
              &       &        & t_{m-1}  & 0
    \end{pmatrix}_{ m\times m}.
\end{equation}
Then we can also rewrite $\tilde{H}$ as follows:
\begin{equation}
\label{eq:Htilde_compact}
    \tilde{H} = (\ket{\mathfrak{b}_1}\bra{\mathfrak{b}_1} + \gamma\ket{\mathfrak{b}_m}\bra{\mathfrak{b}_m})\otimes D_0 + D_1\otimes I,
\end{equation}
where $\mathfrak{b}_j$ is the $m$-dimensional vector with 1 on the $j$th entry and 0 everywhere else.

We will first state the result we want to prove regarding the eigenvalues and eigenvectors of $\tilde{H}$:
\begin{theorem}
    \label{thm:spectral_properties_Htilde} 
    For the matrix $\tilde{H}$ in \eqref{eq:Htilde_explicit}, the following statements are true:
    \begin{itemize}
        \item[(i)] The non-zero eigenvalues of $\tilde{H}$ are bounded away from zero by $\Omega(1/(mn^2))$.
        \item[(ii)] Let $\ket{\Psi}$ be the normalized $0$-eigenvector of matrix $D_1$, and let $\ket{\Phi^{\mathrm{even}}}$ and $\ket{\Phi^{\mathrm{odd}}}$ be the two orthogonal $0$-eigenvectors of matrix $D_0$ defined as 
        $\ket{\Phi^{\mathrm{even}}}=(\Phi^{\mathrm{even}}_1,\Phi^{\mathrm{even}}_2,\ldots,\Phi^{\mathrm{even}}_{n})^\top$ and $\ket{\Phi^{\mathrm{odd}}}=(\Phi^{\mathrm{odd}}_1,\Phi^{\mathrm{odd}}_2,\ldots,\Phi^{\mathrm{odd}}_{n})^\top$, where
        \begin{equation}
        \label{eq:cos_sin_modes}
            \Phi^{\mathrm{even}}_l = \frac{1}{\sqrt{n/2}}\cos\left(l\pi/2\right),\quad \Phi^{\mathrm{odd}}_l = \frac{1}{\sqrt{n/2}}\sin\left(l\pi/2\right).
        \end{equation}
        
        The $0$-eigenspace of $\tilde{H}$ is $2$-dimensional and is spanned by the orthonormal basis 
        $$
        \ket{\chi^{\mathrm{even}}}=\ket{\Psi}\ket{\Phi^{\mathrm{odd}}},\quad \ket{\chi^{\mathrm{odd}}}=\ket{\Psi}\ket{\Phi^{\mathrm{even}}}.
        $$
       \item[(iii)] The two quantum states $\ket{\chi^{\mathrm{even}}}$ and $\ket{\chi^{\mathrm{odd}}}$ satisfy the following conditions: for all even $1\leq i\leq n$, we have $|\braket{\chi^{\mathrm{even}}|\mathfrak{b}_1,\mathfrak{a}_i}|=\Omega(1/\sqrt{mn})$,  $|\braket{\chi^{\mathrm{odd}}|\mathfrak{b}_1,\mathfrak{a}_i}|=0$.\footnote{Here we use the notation that $\ket{\mathfrak{b}_j,\mathfrak{a}_i}=\ket{\mathfrak{b}_j}\ket{\mathfrak{a}_i}$.}
      For all odd $1\leq i\leq n$, we have $|\braket{\chi^{\mathrm{odd}}|\mathfrak{b}_1,\mathfrak{a}_i}|=\Omega(1/\sqrt{mn})$, $|\braket{\chi^{\mathrm{even}}|\mathfrak{b}_1,\mathfrak{a}_i}|=0$.
    \end{itemize}
\end{theorem}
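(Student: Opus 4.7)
The plan is to exploit the block structure of $\tilde H$ in~\eqref{eq:Htilde_compact}. Since both summands of $\tilde H$ act on the second tensor factor as either $D_0$ or the identity, any eigenspace of $D_0$ is preserved; concretely, for any eigenvector $\ket{\Phi}$ of $D_0$ with eigenvalue $\lambda_0$ and any $\ket{\Psi}\in\mathbb{C}^m$ one computes
\[
\tilde H\bigl(\ket{\Psi}\otimes\ket{\Phi}\bigr) \;=\; \bigl(M(\lambda_0)\ket{\Psi}\bigr)\otimes\ket{\Phi}, \qquad M(\lambda_0) := D_1 + \lambda_0\ket{\mathfrak{b}_1}\bra{\mathfrak{b}_1} + \lambda_0\gamma\ket{\mathfrak{b}_m}\bra{\mathfrak{b}_m}.
\]
Fixing an orthonormal eigenbasis $\{\ket{\Phi_\beta}\}$ of $D_0$ with eigenvalues $\lambda_0^{(\beta)}$, the subspaces $\mathbb{C}^m\otimes\mathrm{span}(\ket{\Phi_\beta})$ are therefore $\tilde H$-invariant and $\tilde H$ acts within each as $M(\lambda_0^{(\beta)})$. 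Hence $\mathrm{spec}(\tilde H)$ equals the multiset union of $\mathrm{spec}(M(\lambda_0^{(\beta)}))$ over $\beta$, and every eigenvector has the tensor form $\ket{\Psi_\alpha}\otimes\ket{\Phi_\beta}$ with $\ket{\Psi_\alpha}$ an eigenvector of $M(\lambda_0^{(\beta)})$.

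For part~(ii), a zero eigenvalue of $\tilde H$ must arise from some $M(\lambda_0^{(\beta)})$ having $0$ in its spectrum. When $\lambda_0^{(\beta)}=0$, $M(0)=D_1$, whose unique normalized $0$-eigenvector $\ket{\Psi}$ is supplied by \cref{example:zero_eigvec_path} (since $m$ is odd by construction); combined with the two-dimensional $0$-eigenspace of $D_0$ from \cref{example:zero_eigvec_cycle} (using $n\equiv 0\pmod 4$), this produces exactly the two tensor-product states in the statement. To rule out other $\lambda_0^{(\beta)}\neq 0$ contributing, I would solve $M(\lambda_0)\ket{\psi}=0$ via the bulk recurrence $t_{k-1}\psi_{k-1}+t_k\psi_{k+1}=0$ together with boundaries $\lambda_0\psi_1+t_1\psi_2=0$ and $\lambda_0\gamma\psi_m+t_{m-1}\psi_{m-1}=0$. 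The bulk forces $\psi_{2\ell+1}\propto\psi_1$ and $\psi_{2\ell}\propto\psi_2$, so the boundary equations reduce to a $2\times 2$ linear system in $(\psi_1,\psi_2)$ whose determinant, after substituting $t_1=\sqrt{d-2}$, $t_i=\sqrt{d-1}$ for $i\geq 2$, and $\gamma=(d-1)/2$, evaluates to a nonzero multiple of $\lambda_0\bigl[(d-1)+\gamma(d-2)\bigr]$, which is nonzero whenever $\lambda_0\neq 0$.

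Part~(iii) is then a direct product computation. From \cref{example:zero_eigvec_path} and~\eqref{eq:relation_between_x0_and_xl}, $\ket{\Psi}$ is supported only on odd indices and satisfies $|\Psi_l|^2=\tfrac{d-2}{d-1}|\Psi_1|^2$ for odd $l\geq 3$, so normalization gives $|\Psi_1|=\Theta(1/\sqrt m)$. The cycle modes satisfy $|\Phi^{\mathrm{even}}_i|,|\Phi^{\mathrm{odd}}_i|\in\{0,\sqrt{2/n}\}$ with non-zero entries at even and odd indices $i$ respectively, and the factorization $\langle\chi|\mathfrak{b}_1,\mathfrak{a}_i\rangle=\bar\Psi_1\cdot\bar\Phi_i$ immediately produces the claimed dichotomy of $\Omega(1/\sqrt{mn})$ versus~$0$ according to the parity of $i$.

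The main obstacle is part~(i). Having reduced it to lower-bounding $|\lambda|$ over all nonzero eigenvalues of each $M(\lambda_0^{(\beta)})$, I would split on $\lambda_0^{(\beta)}$. For $\lambda_0^{(\beta)}=0$, $M(0)=D_1$ is a weighted tridiagonal matrix whose nonzero eigenvalues are $\Theta(1/m)$, obtainable through Chebyshev-type identities for the tridiagonal characteristic polynomial. For $\lambda_0^{(\beta)}\neq 0$, the smallest such $|\lambda_0^{(\beta)}|$ is $\Theta(1/n)$ from the cycle spectrum $2\cos(2\pi l/n)$, and first-order perturbation theory heuristically shifts the $0$-eigenvalue of $D_1$ to $\lambda_0^{(\beta)}\bra{\Psi}(\ket{\mathfrak{b}_1}\bra{\mathfrak{b}_1}+\gamma\ket{\mathfrak{b}_m}\bra{\mathfrak{b}_m})\ket{\Psi}=\Theta(\lambda_0^{(\beta)}/m)$. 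Making this rigorous is delicate because $\lambda_0 E$ is not small in operator norm, so standard perturbation bounds do not directly apply. The cleanest route I see is a quantitative version of the determinant argument above: expand $\det(M(\lambda_0)-\lambda I)$ through the tridiagonal recurrence $D_k(\lambda)=-\lambda D_{k-1}(\lambda)-t_{k-1}^2 D_{k-2}(\lambda)$ together with the modified boundary contributions, and show that $|\det(M(\lambda_0)-\lambda I)|$ stays bounded away from zero throughout $|\lambda|\leq c/(mn^2)$. Carrying out this algebraic bookkeeping while tracking the correct polynomial-in-$m,n$ lower bound is the step I anticipate being the principal difficulty.
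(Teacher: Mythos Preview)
Your treatment of parts~(ii) and~(iii) is essentially the paper's: the tensor decomposition reducing $\tilde H$ to the family $M(\lambda_0)$ is exactly \cref{lem:spectrum_of_H_general_structure}, your recurrence argument ruling out null vectors of $M(\lambda_0)$ for $\lambda_0\neq 0$ is the content of \cref{cor:0detH_1} (the paper phrases it as a determinant computation via \cref{lem:detH_1}, landing on the same factor $\lambda_0$ times a nonzero constant), and your overlap computation in~(iii) matches the paper's line by line.

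For part~(i) your plan and the paper's diverge. You propose to control the characteristic polynomial $\det(M(\lambda_0)-\lambda I)$ near $\lambda=0$, and you correctly flag this as the hard step. The paper bypasses this entirely: for $\lambda_0\neq 0$ it uses an \emph{explicit closed-form} for the inverse of a tridiagonal matrix (quoted from \cite{da2001explicit} as \cref{lem:inverseH}), bounds every entry of $H_1^{-1}(\mu_l,\gamma\mu_l)$ by $O(1/|\mu_l|^2)=O(n^2)$ (\cref{cora:maxinverselement}), and then uses the row-sum bound $\|A\|\leq\max_i\sum_j|A_{ij}|$ to conclude $\|H_1^{-1}\|=O(mn^2)$, i.e.\ all eigenvalues are $\Omega(1/(mn^2))$. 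For $\lambda_0=0$, rather than Chebyshev identities, the paper applies Cauchy interlacing between $D_1$ and its $(m-1)\times(m-1)$ principal submatrix $D_1'$; since $m-1$ is even, $D_1'$ is invertible and a short recursion bounds $\|(D_1')^{-1}\|\leq(m-1)/(2\sqrt{d-2})$, which sandwiches the nonzero eigenvalues of $D_1$ away from $0$ by $\Omega(1/m)$ (\cref{lem:spectral_gap_D1}). This inverse-norm route is considerably less delicate than tracking the characteristic polynomial uniformly over a disk, and it avoids the perturbation-theoretic subtlety you note (that $\lambda_0 E$ is not small in norm). Your determinant approach is not wrong in principle, but the paper's method is what actually closes the argument with minimal bookkeeping.
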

We will postpone the proof of this theorem to later and first discuss its implications.
This theorem immediate implies the following about the eigenvalues and eigenvectors of the effective Hamiltonian $H$: 
\begin{corollary}
    \label{cor:spectral_properties}
    For the effective Hamiltonian defined in \cref{defn:effective_hamiltonian}, the following statements are true
    \begin{itemize}
        \item[(i)] The non-zero eigenvalues of $H$ are bounded away from zero by $\Omega(1/(mn^2))$.
        
        \item[(ii)] The 0-eigenspace of $H$ is 2-dimensional, and is spanned by an orthonormal basis
        \[
        \ket{\eta^{\mathrm{even}}} = V_{\mathcal{S}}\ket{\chi^{\mathrm{even}}},\quad \ket{\eta^{\mathrm{odd}}} = V_{\mathcal{S}}\ket{\chi^{\mathrm{odd}}},
        \]
        where $\ket{\chi^{\mathrm{even}}}$ and $\ket{\chi^{\mathrm{odd}}}$ are from \cref{thm:spectral_properties_Htilde} (ii), and $V_{\mathcal{S}}$ is the isometry defined in \eqref{eq:isometry_to_S_defn}.
        
        \item[(iii)] The two quantum states $\ket{\chi^{\mathrm{even}}}$ and $\ket{\eta^{\mathrm{odd}}}$ satisfy the following conditions: for all even $1\leq i\leq n$, we have $|\braket{\eta^{\mathrm{even}}|S_{i,1}}|=\Omega(1/\sqrt{mn})$,  $|\braket{\eta^{\mathrm{odd}}|S_{i,1}}|=0$.
      For all odd $1\leq i\leq n$, we have $|\braket{\eta^{\mathrm{odd}}|S_{i,1}}|=\Omega(1/\sqrt{mn})$, $|\braket{\eta^{\mathrm{even}}|S_{i,1}}|=0$.
    \end{itemize}
\end{corollary}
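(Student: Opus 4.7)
The plan is to derive Corollary \ref{cor:spectral_properties} directly from Theorem \ref{thm:spectral_properties_Htilde} by exploiting the fact that $H$ and $\tilde{H}$ are unitarily equivalent via the bijective isometry $V_{\mathcal{S}}$ introduced in \eqref{eq:isometry_to_S_defn}. Concretely, by \cref{defn:matrix_representation_of_effective_ham} we have $\tilde{H} = V_{\mathcal{S}}^{-1} H V_{\mathcal{S}}$, and because $V_{\mathcal{S}}$ is a bijection between $\mathbb{C}^m\otimes \mathbb{C}^n$ and the symmetric subspace $\mathcal{S}$ that preserves inner products, each spectral statement about $\tilde{H}$ transfers to the corresponding statement about $H$.

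For part (i), I would simply observe that unitarily equivalent operators share the same spectrum. Hence the eigenvalues of $H$ coincide with those of $\tilde{H}$, so the non-zero ones are bounded below by $\Omega(1/(mn^2))$ by \cref{thm:spectral_properties_Htilde}(i). For part (ii), I would verify directly that if $\tilde{H}\ket{\chi}=0$ then $H V_{\mathcal{S}}\ket{\chi} = V_{\mathcal{S}} \tilde{H}\ket{\chi} = 0$, so $V_{\mathcal{S}}$ maps the $0$-eigenspace of $\tilde{H}$ bijectively onto that of $H$. In particular, the two-dimensional $0$-eigenspace of $\tilde{H}$ spanned by $\ket{\chi^{\mathrm{even}}}, \ket{\chi^{\mathrm{odd}}}$ maps to the two-dimensional $0$-eigenspace of $H$ spanned by $\ket{\eta^{\mathrm{even}}} := V_{\mathcal{S}}\ket{\chi^{\mathrm{even}}}$ and $\ket{\eta^{\mathrm{odd}}} := V_{\mathcal{S}}\ket{\chi^{\mathrm{odd}}}$, and since $V_{\mathcal{S}}$ is an isometry the orthonormality of the basis is preserved.

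For part (iii), the key observation is that by the definition of $V_{\mathcal{S}}$ in \eqref{eq:isometry_to_S_defn}, we have $V_{\mathcal{S}}\ket{\mathfrak{b}_1}\ket{\mathfrak{a}_i} = \ket{S_{i,1}}$. Then using that $V_{\mathcal{S}}$ preserves inner products,
\[
\braket{\eta^{\mathrm{even}}|S_{i,1}} = \bra{\chi^{\mathrm{even}}}V_{\mathcal{S}}^{-1} V_{\mathcal{S}}\ket{\mathfrak{b}_1,\mathfrak{a}_i} = \braket{\chi^{\mathrm{even}}|\mathfrak{b}_1,\mathfrak{a}_i},
\]
and analogously for the odd case. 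Substituting the bounds from \cref{thm:spectral_properties_Htilde}(iii) yields the claimed overlaps.

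There is essentially no obstacle here: the corollary is a translation of the theorem through the isometry $V_{\mathcal{S}}$. All the substantive spectral work is done in the proof of \cref{thm:spectral_properties_Htilde}, and the only care needed in writing the corollary is to make sure that the bijectivity of $V_{\mathcal{S}}$ (and therefore the preservation of orthonormality and inner products) is invoked correctly when transferring both the eigenspace structure and the overlap statements.
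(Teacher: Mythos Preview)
Your proposal is correct and follows essentially the same argument as the paper: both use that $\tilde{H}=V_{\mathcal{S}}^{-1}HV_{\mathcal{S}}$ with $V_{\mathcal{S}}$ a bijective isometry to transfer the spectrum, the $0$-eigenspace basis, and the overlap identities $\braket{\eta^{\mathrm{even/odd}}|S_{i,1}}=\braket{\chi^{\mathrm{even/odd}}|\mathfrak{b}_1,\mathfrak{a}_i}$ directly from \cref{thm:spectral_properties_Htilde}.
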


\begin{proof}
    Because by \cref{defn:matrix_representation_of_effective_ham}, $\tilde{H}=V_{\mathcal{S}}^{-1}H V_{\mathcal{S}}$, $H$ and $\tilde{H}$ share the same spectrum. Therefore (i) is a direct consequence of \cref{thm:spectral_properties_Htilde} (i), and this fact also implies that 0 is an eigenvalue of $H$ with two-fold degeneracy. Since $\ket{\chi^{\mathrm{even}}}$ and $\ket{\chi^{\mathrm{odd}}}$ are eigenvectors of $\tilde{H}$, $\ket{\eta^{\mathrm{even}}}$ and $\ket{\eta^{\mathrm{odd}}}$ must be eigenvectors corresponding to the same eigenvalue, i.e., 0. Because $\ket{\chi^{\mathrm{even}}}$ and $\ket{\chi^{\mathrm{odd}}}$ form an orthonormal basis, $\ket{\eta^{\mathrm{even}}}$ and $\ket{\eta^{\mathrm{odd}}}$ must also form an orthonormal basis since $V_{\mathcal{S}}$ is an isometry and thus preserves the inner product. We therefore have (ii). For (iii), we only need to note the fact that because $\ket{S_{i,1}}=V_{\mathcal{S}}\ket{\mathfrak{b}_1,\mathfrak{a}_i}$, we have $\braket{\eta^{\mathrm{even}}|S_{i,1}}=\braket{\chi^{\mathrm{even}}|\mathfrak{b}_1,\mathfrak{a}_i}$ and $\braket{\eta^{\mathrm{odd}}|S_{i,1}}=\braket{\chi^{\mathrm{odd}}|\mathfrak{b}_1,\mathfrak{a}_i}$ for all $i$.
\end{proof}

From the above we can see that, to understand the properties of $H$, we only need to study $\tilde{H}$, which we will do next.
Using the compact form representation of $\tilde{H}$ in \eqref{eq:Htilde_compact}, the following \cref{lem:spectrum_of_H_general_structure} shows that the eigenvalues and eigenvectors of $\tilde{H}$ have a particular structure.
\begin{lemma}
\label{lem:spectrum_of_H_general_structure}
Let $\mu_l=2\cos(2l\pi/n)$ and let $\ket{\phi_l}$ be the corresponding eigenvector of $D_0$ (defined in \eqref{eq:defn_D0}), with $l=1,2,\ldots,n$.
Let $\lambda_j(a,b)$ be the $j$th smallest eigenvalue of $H_1(a,b)$ with $j=1,2,\ldots,m$, where 
    \begin{equation}
    \label{eq:H1ab}
        H_1(a,b) =   \begin{pmatrix}
         a     & t_1  &        &            &     \\
        t_1  & 0     & t_2   &            &     \\
              & t_2  & 0      & \ddots     &     \\
              &       & \ddots & \ddots     & t_{m-1}     \\
              &       &        & t_{m-1}  & b
    \end{pmatrix}_{ m\times m}.           
    \end{equation}

Let $\gamma=\frac{d-1}{2}$ and $\ket{\psi^l_j}$ be the $\lambda_j(\mu_l,\gamma\mu_l)$-eigenvector of $H_1(\mu_l,\gamma\mu_l)$ with $1 \leq l \leq n, 1\leq j\leq m$. The eigenvalues and eigenvectors of $\tilde{H}$ in \eqref{eq:Htilde_explicit}
 are $$\lambda_j(\mu_l,\gamma\mu_l)
 \text{ and } \ket{\psi^l_j}\ket{\phi_l} \text{ for } l=1,2,\ldots,n, j=1,2,\ldots,m.$$ 
 
 
\end{lemma}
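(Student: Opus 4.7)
The plan is to exploit the tensor-product form of $\tilde{H}$ in \eqref{eq:Htilde_compact}, namely
\[
\tilde{H} = (\ket{\mathfrak{b}_1}\bra{\mathfrak{b}_1} + \gamma\ket{\mathfrak{b}_m}\bra{\mathfrak{b}_m})\otimes D_0 + D_1\otimes I,
\]
and observe that every factor acting on the right tensor slot is either $D_0$ or the identity. Since $D_0$ is a real symmetric matrix, it admits a complete orthonormal eigenbasis $\{\ket{\phi_l}\}_{l=1}^n$ with $D_0\ket{\phi_l}=\mu_l\ket{\phi_l}$, where $\mu_l = 2\cos(2l\pi/n)$ (this is standard for the cycle graph $C_n$; see e.g.\ Example~\ref{example:zero_eigvec_cycle} and \cite[Lemma 6.5.1]{spielman2019spectral}). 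For each fixed $l$, the subspace $\mathcal{W}_l := \mathbb{C}^m\otimes\mathrm{span}\{\ket{\phi_l}\}$ is $m$-dimensional, and the subspaces $\mathcal{W}_1,\ldots,\mathcal{W}_n$ decompose $\mathbb{C}^m\otimes\mathbb{C}^n$ orthogonally.

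The key step is to verify that each $\mathcal{W}_l$ is an invariant subspace of $\tilde{H}$ and to compute its restriction. For any $\ket{v}\in\mathbb{C}^m$, a direct calculation using $D_0\ket{\phi_l}=\mu_l\ket{\phi_l}$ gives
\begin{align*}
\tilde{H}(\ket{v}\otimes\ket{\phi_l})
&= (\ket{\mathfrak{b}_1}\bra{\mathfrak{b}_1}+\gamma\ket{\mathfrak{b}_m}\bra{\mathfrak{b}_m})\ket{v}\otimes D_0\ket{\phi_l} + D_1\ket{v}\otimes\ket{\phi_l} \\
&= \bigl[\mu_l\ket{\mathfrak{b}_1}\bra{\mathfrak{b}_1}+\gamma\mu_l\ket{\mathfrak{b}_m}\bra{\mathfrak{b}_m}+D_1\bigr]\ket{v}\otimes\ket{\phi_l} \\
&= H_1(\mu_l,\gamma\mu_l)\ket{v}\otimes\ket{\phi_l},
\end{align*}
where in the last step I recognized the bracketed operator as exactly $H_1(\mu_l,\gamma\mu_l)$ from \eqref{eq:H1ab}, since adding $a\ket{\mathfrak{b}_1}\bra{\mathfrak{b}_1}+b\ket{\mathfrak{b}_m}\bra{\mathfrak{b}_m}$ to $D_1$ simply fills in the $(1,1)$ and $(m,m)$ diagonal entries of $D_1$ with $a$ and $b$, respectively. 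This confirms that $\tilde{H}$ restricted to $\mathcal{W}_l$ acts (via the identification $\ket{v}\otimes\ket{\phi_l}\mapsto\ket{v}$) precisely as $H_1(\mu_l,\gamma\mu_l)$.

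Now I diagonalize each restriction: since $H_1(\mu_l,\gamma\mu_l)$ is a real symmetric $m\times m$ matrix, it has $m$ real eigenvalues $\lambda_1(\mu_l,\gamma\mu_l)\leq\cdots\leq\lambda_m(\mu_l,\gamma\mu_l)$ with orthonormal eigenvectors $\ket{\psi_j^l}$. Lifting to $\mathcal{W}_l$, the states $\ket{\psi_j^l}\ket{\phi_l}$ are eigenvectors of $\tilde{H}$ with eigenvalues $\lambda_j(\mu_l,\gamma\mu_l)$. Ranging over $l=1,\ldots,n$ and $j=1,\ldots,m$ produces $mn$ eigenpairs, which is the full dimension of $\tilde{H}$, and these eigenvectors are orthonormal because (a) $\{\ket{\phi_l}\}_l$ is orthonormal so vectors from distinct $\mathcal{W}_l$ are automatically orthogonal, and (b) within each $\mathcal{W}_l$ the vectors $\{\ket{\psi_j^l}\}_j$ form an orthonormal basis. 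This exhausts the spectrum and completes the proof.

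The only mildly subtle point is handling any degeneracy of the $\mu_l$ (in particular $\mu_l=0$ occurs twice on the cycle), but since the statement only claims a complete list of eigenpairs rather than distinctness, no extra work is needed: the block decomposition into $\mathcal{W}_l$ remains valid regardless, and degenerate eigenvalues simply appear multiple times in the resulting list. I therefore expect no serious obstacle; the main care is in the index bookkeeping and the identification of $\mu_l\ket{\mathfrak{b}_1}\bra{\mathfrak{b}_1}+\gamma\mu_l\ket{\mathfrak{b}_m}\bra{\mathfrak{b}_m}+D_1$ as $H_1(\mu_l,\gamma\mu_l)$.
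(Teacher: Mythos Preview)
Your proof is correct and follows essentially the same approach as the paper: both use the compact form $\tilde{H} = (\ket{\mathfrak{b}_1}\bra{\mathfrak{b}_1} + \gamma\ket{\mathfrak{b}_m}\bra{\mathfrak{b}_m})\otimes D_0 + D_1\otimes I$, diagonalize $D_0$ on the right tensor factor, identify the resulting operator on the left factor as $H_1(\mu_l,\gamma\mu_l)$, and count $mn$ eigenpairs. Your version is slightly more explicit in defining the invariant subspaces $\mathcal{W}_l$ and checking orthonormality, but the argument is the same.
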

\begin{proof}
We prove this lemma by examining each eigenvalue and eigenvector of $\tilde{H}$. That is, for each $l=1,2,\ldots,n, j=1,2,\ldots,m$,
    \begin{equation}
    \label{eq:verify_eigenvals_eigenvecs1}
    \begin{aligned}
         \tilde{H}\ket{\psi_j^l}\ket{\phi_l} &= \big[\big(\ket{\mathfrak{b}_1}\bra{\mathfrak{b}_1}+\gamma\ket{\mathfrak{b}_m}\bra{\mathfrak{b}_m}\big)\otimes D_0  + D_1\otimes I\big]\ket{\psi_j^l}\ket{\phi_l} \\ 
         &= \big[\big(\mu_l\ket{\mathfrak{b}_1}\bra{\mathfrak{b}_1} + \gamma\mu_l\ket{\mathfrak{b}_m}\bra{\mathfrak{b}_m} + D_1\big)\ket{\psi_j^l}\big]\otimes \ket{\phi_l} \\
         &= \lambda_j(\mu_l,\gamma\mu_l)\ket{\psi_j^l} \ket{\phi_l},
    \end{aligned}
    \end{equation}
    where in the second equation we have used $D_0\ket{\phi_l}=\mu_l\ket{\phi_l}$, and in the third equation we have used \begin{equation}
       H_1(\mu_l,\gamma\mu_l) \ket{\psi_j^l} = \big(\mu_l\ket{\mathfrak{b}_1}\bra{\mathfrak{b}_1} + \gamma\mu_l\ket{\mathfrak{b}_m}\bra{\mathfrak{b}_m} + D_1\big)\ket{\psi_j^l} =\lambda_j(\mu_l,\gamma\mu_l)\ket{\psi_j^l}.
    \end{equation}
    \cref{eq:verify_eigenvals_eigenvecs1} then provides us with $n\times m$ eigenpairs, which are all eigenpairs of $\tilde{H}$ because the dimension of the vector space is also $n\times m$.
\end{proof}

Using this lemma, we will identify the 0-eigenspace of $\tilde{H}$, and lower bound spectral gap around the eigenvalue 0.
Since the eigenvalues of $D_0$ are readily available as $\mu_l = 2\cos(2\pi l/n)$,
we can compute the spectrum of $\tilde{H}$ by computing the eigenvalues $\lambda_j(\mu_l,\gamma\mu_l)$ of the tridiagonal matrices $H_1(\mu_l,\gamma\mu_l)$ for $j=1,2,\ldots,m$, $l=1,2,\ldots,n$. 
Using this fact, we next show that $\lambda_j(\mu_l,\gamma\mu_l) =0$ if and only if $\mu_l=0$ and $\lambda_j(\mu_l,\gamma\mu_l) =\Omega(1/(mn))$ for $\mu_l\neq 0$. Specifically, we prove the first result by computing the determinant of $H$ and the latter by computing the inverse of $H_1(\mu_l,\gamma\mu_l)$ for the case of $\mu_l\neq 0$.

We note that the determinant of a general tridiagonal matrix can be computed through \cite[Theorem 2.1]{el2006inversion}. We apply this result to the special class of tridiagonal matrices $H_1(a,b)$ as defined in \cref{eq:H1ab}.
\begin{lemma}[Determinant of tridiagonal matrix $H_1(a,b)$ {\cite[Theorem 2.1]{el2006inversion}}]  \label{lem:detH_1}
Let $\beta =(\beta_0,\ldots,\beta_{m})$ be a $m+1$ dimensional vector defined as follows, 
  \begin{equation}
    \beta_i = \left\{\begin{array}{ll}
  1 & \mbox{if }  i=0 \\
 a & \mbox{if } i=1 \\
 - t_{i-1}^2 \beta_{i-2} & \mbox{if } i=2,\ldots,m-1; \\
  b- t_{m-1}^2 \beta_{m-2} & \mbox{if } i=m. \\
\end{array}\right.   \end{equation} 
The determinant of $H_1(a,b)$ is equal to $\beta_m$.
\end{lemma}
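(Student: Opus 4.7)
The plan is to prove $\det(H_1(a,b))=\beta_m$ by a short induction built on the standard three-term recurrence for determinants of tridiagonal matrices, exploiting the very sparse diagonal of $H_1(a,b)$ (only the first and last diagonal entries are nonzero). First, I would introduce notation for the leading principal submatrices: let $M_i$ denote the top-left $i\times i$ submatrix of $H_1(a,b)$, with the convention $\det(M_0):=1$ and $\det(M_1)=a$, which already coincide with $\beta_0$ and $\beta_1$.

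Second, I would derive the universal cofactor-expansion identity for tridiagonal matrices. Expanding $\det(M_i)$ along its last row gives, for $i\ge 2$,
\[
\det(M_i) \;=\; (M_i)_{i,i}\,\det(M_{i-1}) \;-\; t_{i-1}^{2}\,\det(M_{i-2}).
\]
This is the only piece of linear algebra actually needed, and it is just cofactor expansion combined with the fact that the only nonzero entries of the last row of a tridiagonal matrix are the diagonal entry and its immediate left neighbor.

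Third, I would specialize to the structure of $H_1(a,b)$ and apply strong induction on $i$. For $2\le i\le m-1$ the diagonal entry $(M_i)_{i,i}$ equals $0$, so the expansion collapses to $\det(M_i)=-t_{i-1}^{2}\det(M_{i-2})$, which is precisely the defining recurrence for $\beta_i$ on this range; combined with the base cases, this yields $\det(M_i)=\beta_i$ for $0\le i\le m-1$. For the terminal step $i=m$ the diagonal entry $(M_m)_{m,m}$ equals $b$, and the general recurrence gives $\det(H_1(a,b))=b\,\det(M_{m-1})-t_{m-1}^{2}\det(M_{m-2})$, which by the inductive hypothesis equals $b\,\beta_{m-1}-t_{m-1}^{2}\beta_{m-2}$, matching the value $\beta_m$ specified in the lemma (reading the boundary clause with the $\beta_{m-1}$ multiplier on $b$ that the standard tridiagonal recursion supplies).

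There is essentially no technical obstacle here — the result is a direct instance of a classical identity and can equivalently be quoted from \cite{el2006inversion}. The only care required is bookkeeping: confirming that the simplified recurrence $\beta_i=-t_{i-1}^{2}\beta_{i-2}$ is exactly what the cofactor expansion produces whenever the corresponding diagonal entry of $H_1(a,b)$ vanishes, so that only the two endpoints $i=1$ and $i=m$ inject the boundary parameters $a$ and $b$ into the recursion.
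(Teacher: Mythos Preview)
Your approach is correct and is the standard proof of this classical identity. The paper does not actually prove this lemma; it simply cites \cite{el2006inversion}, so there is no in-paper argument to compare against. Your cofactor-expansion induction is exactly the argument one would give.

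You are also right to flag the terminal clause. The cofactor expansion at $i=m$ genuinely produces
\[
\det(H_1(a,b)) \;=\; b\,\beta_{m-1}\;-\;t_{m-1}^{2}\,\beta_{m-2},
\]
not $b - t_{m-1}^{2}\beta_{m-2}$ as written in the lemma. So the lemma statement in the paper carries a typo: the factor $\beta_{m-1}$ is missing in front of $b$. Your parenthetical ``reading the boundary clause with the $\beta_{m-1}$ multiplier'' is the correct fix, and with it the proof goes through cleanly. (For the downstream Corollary the conclusion---that the determinant vanishes iff $\mu_l=0$---is unaffected, since both $\beta_{m-1}$ and $\beta_{m-2}$ are nonzero multiples of the parameters and the determinant remains a nonzero constant times $\mu_l$.)
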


\begin{corollary} \label{cor:0detH_1}
Let $t_1=\sqrt{d-2}$, $t_2=t_3=\dots=t_{m-1}=\sqrt{d-1}$, $\gamma=\frac{d-1}{2}$ and let $m$ be an odd integer. For every $\mu_l = 2\cos(2\pi l/n)$ with $l=1,2\ldots, n$, the eigenvalue $\lambda_j(\mu_l,\gamma\mu_l)$ of $H_1(\mu_l,\gamma\mu_l)$ is equal to $0$ if and only if $\mu_l=0$. 
\end{corollary}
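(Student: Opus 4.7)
The plan is to apply Lemma~\ref{lem:detH_1} to compute $\det H_1(\mu_l, \gamma\mu_l)$ in closed form, since $0$ is an eigenvalue of $H_1(\mu_l,\gamma\mu_l)$ if and only if this determinant vanishes. It therefore suffices to show that this determinant equals a nonzero constant times $\mu_l$.

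With $a = \mu_l$ and $b = \gamma\mu_l$, the recurrence in Lemma~\ref{lem:detH_1} reads $\beta_0 = 1$, $\beta_1 = \mu_l$, and $\beta_i = -t_{i-1}^2 \beta_{i-2}$ for $2 \leq i \leq m-1$, since all inner diagonal entries of $H_1(\mu_l,\gamma\mu_l)$ vanish; the closing step is $\beta_m = \gamma\mu_l\,\beta_{m-1} - t_{m-1}^2 \beta_{m-2}$. Crucially, the interior recurrence decouples the even- and odd-indexed subsequences. Using $t_1^2 = d-2$ and $t_2^2 = \cdots = t_{m-1}^2 = d-1$, I would solve both in closed form:
\[
\beta_{2k+1} = (-1)^k (d-1)^k\, \mu_l, \qquad \beta_0 = 1,\quad \beta_{2k} = (-1)^k (d-1)^{k-1}(d-2) \;\; (k \geq 1).
\]

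Since $m$ is odd, write $m = 2K+1$ with $K \geq 1$. Then $\beta_{m-1} = \beta_{2K} = (-1)^K (d-1)^{K-1}(d-2)$ (even-indexed) and $\beta_{m-2} = \beta_{2K-1} = (-1)^{K-1}(d-1)^{K-1}\mu_l$ (odd-indexed). Substituting into the closing recurrence and factoring out $(-1)^K (d-1)^{K-1}\mu_l$, the remaining bracket becomes $\gamma(d-2) + (d-1)$, which with $\gamma = (d-1)/2$ simplifies to $d(d-1)/2$. Hence
\[
\det H_1(\mu_l,\gamma\mu_l) \;=\; \beta_m \;=\; (-1)^K\, \frac{d\,(d-1)^K}{2}\, \mu_l.
\]
Since $d \geq 3$, the prefactor is nonzero, so $\beta_m = 0$ if and only if $\mu_l = 0$, which is the claim.

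The main source of difficulty is really just the bookkeeping of signs and exponents in the two decoupled subsequences, together with the simplification $\gamma(d-2) + (d-1) = d(d-1)/2$ that makes the closing step collapse cleanly. The structural reason the claim holds is that the parity of $m$ aligns $\beta_{m-1}$ with the even-indexed (constant-in-$\mu_l$) subsequence and $\beta_{m-2}$ with the odd-indexed (linear-in-$\mu_l$) subsequence, so both summands in $\beta_m = b\beta_{m-1} - t_{m-1}^2\beta_{m-2}$ carry a common factor of $\mu_l$; were $m$ even, one of the two summands would contribute a $\mu_l$-independent term and the equivalence would fail.
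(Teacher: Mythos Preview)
Your proof is correct and takes the same approach as the paper: compute $\det H_1(\mu_l,\gamma\mu_l)$ via the recursion of Lemma~\ref{lem:detH_1} and observe that it is a nonzero constant times $\mu_l$. One remark worth making: your closing step $\beta_m = b\,\beta_{m-1} - t_{m-1}^2\beta_{m-2}$ is the standard continuant recurrence and is correct, though it differs from the literal statement of Lemma~\ref{lem:detH_1} in the paper, which drops the $\beta_{m-1}$ factor (evidently a typo). Consequently your closed form $\beta_m = (-1)^K d(d-1)^K\mu_l/2$ is the actual determinant, whereas the paper's displayed expression $\bigl(\tfrac{d-1}{2}\pm(d-1)^{(m-1)/2}\bigr)\mu_l$ is what one gets from the misstated recurrence; both are nonzero multiples of $\mu_l$ for $d\geq 3$, so the conclusion holds either way, and your version is the cleaner and correct computation.
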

\begin{proof}
   By \cref{lem:detH_1} and $m$ being an odd integer, the determinant of $ H_1(a,b) $ is equal to $\beta_m =b+\prod_{k=1}^{\frac{m-1}{2}} t_{2k}^2 a$ when $m =1 \bmod 4$ or $\beta_m =b-\prod_{k=1}^{\frac{m-1}{2}} t_{2k}^2 a$ when $m =3 \bmod 4$. For the matrix $H_1(\mu_l,\gamma\mu_l)$, we have $$\beta_m =\gamma\mu_l\pm\prod_{k=1}^{\frac{m-1}{2}} t_{2k}^2 \mu_l = \left(\frac{d-1}{2} \pm (d-1)^{m-1}\right)\mu_l.$$
   
Since $\frac{d-1}{2} \pm (d-1)^{m-1}\neq 0$, we have $\beta_m=0$ if and only if $\mu_l=0$.
\end{proof}

When $\mu_l \neq 0$, we next show that all the eigenvalues of $H_1(\mu_l,\gamma\mu_l)$ are far away from $0$, that is, $|\lambda_j(\mu_l,\gamma\mu_l)| =\Omega(1/(mn^2))$. The idea is to compute the inverse of the tridiagonal matrix $H_1(\mu_l,\gamma\mu_l)$, for which we have the following lemma:
\begin{lemma}
    \label{lem:inverseH_spectral_norm}
    Let $H_1(a,b)$ be as defined in \eqref{eq:H1ab}, with odd $m$, $t_1=\sqrt{d-2}$, $t_2=t_3=\dots=t_{m-1}=\sqrt{d-1}$, $\gamma=\frac{d-1}{2}$, for integer $d\geq 3$. Let $\mu_l = 2\cos(2\pi l/n)$ with $l=0,1\ldots, n-1$ satisfying $\mu_l\neq 0$. Let $a=\mu_l$ and $b=\gamma\mu_l$. Then $\|H_1^{-1}(a,b)\|=\Or(mn^2)$.
\end{lemma}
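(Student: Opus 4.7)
The plan is to derive an explicit entry-wise formula for $H_1^{-1}(a,b)$ via the classical inverse formula for symmetric tridiagonal matrices, and then bound the operator norm entrywise. I would write $\theta_i$ for the $i$-th leading principal minor of $H_1$ (with $\theta_0=1$) and $\phi_j$ for the trailing minor of size $m-j+1$ (with $\phi_{m+1}=1$); the classical formula then gives, for $i\le j$, $(H_1^{-1})_{ij}=(-1)^{i+j}\theta_{i-1}\phi_{j+1}\prod_{k=i}^{j-1}t_k / \theta_m$, and by the symmetry of $H_1$ the entries with $i>j$ follow.

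First I would obtain closed-form bounds for $\theta_i$ and $\phi_j$ from the three-term recurrence $\theta_i=a_i\theta_{i-1}-t_{i-1}^2\theta_{i-2}$ (with $a_1=a$, $a_m=b$, and $a_i=0$ for $2\le i\le m-1$), together with the analogous recurrence for $\phi_j$. With $t_1=\sqrt{d-2}$ and $t_k=\sqrt{d-1}$ for $k\ge 2$, the recurrence decouples on odd and even indices; one reads off $|\theta_i|\le C_d (d-1)^{\lfloor i/2\rfloor}$ and $|\phi_{j+1}|\le C_d (d-1)^{\lfloor (m-j)/2\rfloor}$ for $1\le i\le m-1$, where $C_d$ depends only on $d$ (using $|a|\le 2$ and $|b|\le d-1$).

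Next I would lower bound $|\theta_m|=|\det H_1(\mu_l,\gamma\mu_l)|$. A computation in the same spirit as Corollary 4.4 gives $|\theta_m|$ of order $(d-1)^{(m-1)/2}|\mu_l|$, and combining this with the elementary bound $|\mu_l|=|2\cos(2\pi l/n)|=\Omega(1/n)$ for $\mu_l\neq 0$---which holds because the closest integer $l$ to $n/4$ or $3n/4$ forces the angle $2\pi l/n$ to lie at distance at least $\Omega(1/n)$ from a zero of cosine---yields $|\theta_m|\ge \Omega((d-1)^{(m-1)/2}/n)$.

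Substituting these estimates into the inverse formula, the $(d-1)$ exponents in the numerator should telescope against the denominator up to an $O(1)$ multiplicative loss, because $(i-1)/2+(m-j)/2+(j-i)/2 = (m-1)/2$. This leaves $|(H_1^{-1})_{ij}|=O(n)$ for every $i,j$, from which $\|H_1^{-1}\|\le\|H_1^{-1}\|_\infty\le m\cdot\max_{i,j}|(H_1^{-1})_{ij}|=O(mn)$ follows, which is comfortably within the claimed $O(mn^2)$ bound. The main technical obstacle will be the parity case analysis: the recurrence for $\theta_i$ gives genuinely different closed forms depending on the parity of $i$ (the factor $a$ appears only at odd indices), and the boundary weight $t_1=\sqrt{d-2}\neq\sqrt{d-1}$ requires a small separate adjustment at $i=1$. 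The core task is to verify that in every parity combination the exponent of $(d-1)$ contributed by $\theta_{i-1}\phi_{j+1}\prod_k t_k$ matches the $(m-1)/2$ in the denominator up to a bounded slack, so that the cancellation is genuine and the per-entry bound is uniform in $i$ and $j$.
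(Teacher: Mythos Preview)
Your approach is essentially the paper's: compute the tridiagonal inverse via an explicit formula, bound each entry, and then control $\|H_1^{-1}\|$ by $\|H_1^{-1}\|_\infty\le m\max_{i,j}|(H_1^{-1})_{ij}|$. The only cosmetic difference is that you use the leading/trailing-minor formula $(\theta_{i-1}\phi_{j+1}\prod t_k)/\theta_m$ whereas the paper uses the equivalent continued-fraction version from \cite{da2001explicit}; your parity bookkeeping is in fact a bit tighter than the paper's (they bound three factors separately and multiply worst cases, losing one power of $|a|^{-1}$), so your $O(n)$ per entry and $O(mn)$ for the norm are correct and stronger than the stated $O(mn^2)$.
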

The proof of this lemma can be found in Appendix~\ref{sec:spectrum_estimates}. Because the eigenvalues of $H_1^{-1}(a,b)$ are exactly $1/\lambda_j(\mu_l,\gamma\mu_l)$, for $j=1,2,\cdots,m$, the above lemma tells us that $1/|\lambda_j(\mu_l,\gamma\mu_l)|=\Or(mn^2)$, and hence $|\lambda_j(\mu_l,\gamma\mu_l)|=\Omega(1/(mn^2))$ for $l$ such that $\mu_l\neq 0$.

When $\mu_l=0$, the corresponding $\lambda_j(\mu_l,\gamma\mu_l)=\lambda_j(0,0)$ can still be non-zero, and the following lemma helps us bound these eigenvalues away from 0:

\begin{lemma}
    \label{lem:spectral_gap_D1}
    Let $D_1$ be as defined in \eqref{eq:defn_D1},
    in which $t_1=\sqrt{d-2}$, $t_2=\cdots =t_{m-1}=\sqrt{d-1}$.
    Then $D_1$ has a non-degenerate 0-eigenstate $\ket{\Psi}=(\Psi_1,\Psi_2,\cdots,\Psi_m)^\top$ where
    \[
    \Psi_j = \prod_{k=1}^{(j-1)/2}\left(-\frac{t_{2k-1}}{t_{2k}}\right)\Psi_1 = (-1)^{(j-1)/2}\sqrt{\frac{d-2}{d-1}}\Psi_1,\quad \text{for all odd }j\geq 2,
    \]
    and $\Psi_j=0$ for even $j$.
    Moreover, 0 is separated from the rest of the spectrum of $D_1$ by a gap of at least $2\sqrt{d-2}/(m-1)$.
\end{lemma}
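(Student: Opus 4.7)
The plan is to split the argument into two parts: (i) verifying the explicit form and uniqueness of the 0-eigenvector $\ket{\Psi}$, and (ii) establishing the spectral gap bound.

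For (i), I would directly analyze the eigenvalue equation $D_1 \ket{\Psi} = 0$ row by row, exactly as in \cref{example:zero_eigvec_path}. The first row gives $t_1 \Psi_2 = 0$, hence $\Psi_2 = 0$. For $2 \leq j \leq m-1$, the $j$-th row yields $t_{j-1}\Psi_{j-1} + t_j \Psi_{j+1} = 0$, i.e.\ $\Psi_{j+1} = -(t_{j-1}/t_j)\Psi_{j-1}$. Iterating, this forces $\Psi_j = 0$ for every even $j$ and expresses each odd-indexed $\Psi_j$ in terms of $\Psi_1$ as the product in the statement. When $m$ is odd, $\Psi_{m-1} = 0$ automatically, so the last-row equation $t_{m-1}\Psi_{m-1} = 0$ is satisfied. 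Substituting $t_1 = \sqrt{d-2}$ and $t_k = \sqrt{d-1}$ for $k \geq 2$ gives the closed form $|\Psi_j| = \sqrt{(d-2)/(d-1)}\,|\Psi_1|$ for odd $j \geq 3$. The same recursion shows the null space is parametrized entirely by $\Psi_1$, so $\ket{\Psi}$ is non-degenerate.

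For (ii), I would use two structural properties of $D_1$. First, because $D_1$ has zero diagonal and only couples even-indexed to odd-indexed coordinates, the diagonal sign flip $J = \mathrm{diag}((-1)^{i-1})$ satisfies $J D_1 J = -D_1$; consequently the non-zero eigenvalues of $D_1$ come in $\pm\sigma$ pairs, and the spectral gap around $0$ equals the smallest positive eigenvalue $\sigma_{\min}$. Second, let $D_1'$ be the $(m-1)\times(m-1)$ principal submatrix obtained by removing the first row and column of $D_1$. Then $D_1'$ is the uniform tridiagonal matrix with all off-diagonals equal to $\sqrt{d-1}$, whose eigenvalues are the standard values $2\sqrt{d-1}\cos(k\pi/m)$ for $k = 1,\ldots,m-1$. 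Cauchy's interlacing theorem then yields
\[
\sigma_{\min} \;=\; \lambda_{(m-1)/2}(D_1) \;\geq\; \lambda_{(m-1)/2}(D_1') \;=\; 2\sqrt{d-1}\,\cos\!\left(\tfrac{(m-1)\pi}{2m}\right) \;=\; 2\sqrt{d-1}\,\sin\!\left(\tfrac{\pi}{2m}\right).
\]

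It remains to convert this into the stated form $2\sqrt{d-2}/(m-1)$. Since $\sqrt{d-1} \geq \sqrt{d-2}$, it suffices to prove the purely trigonometric inequality $(m-1)\sin(\pi/(2m)) \geq 1$ for every odd $m \geq 3$. The value at $m=3$ is $2\sin(\pi/6) = 1$ exactly, and monotonicity in $m$ follows from a direct derivative computation: writing $y = \pi/(2m)$, one has $f'(m) = \sin y - (1 - 1/m)\,y\cos y \geq \sin y - y\cos y \geq 0$, since $\sin y - y\cos y$ vanishes at $y=0$ and has derivative $y \sin y \geq 0$ on $[0,\pi/2]$. I expect the main obstacle to be only cosmetic here, namely ensuring that the interlacing step is tight enough to recover the clean constants claimed; should the conversion turn out to be fiddly at small $m$, a back-up route is to work directly with the characteristic polynomial of $D_1$ via the Chebyshev-like recursion $p_k(\lambda) = \lambda p_{k-1}(\lambda) - t_{k-1}^2 p_{k-2}(\lambda)$ and locate its smallest positive root, but the interlacing path above should suffice.
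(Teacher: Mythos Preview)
Your argument is correct and, for part (ii), genuinely different from the paper's. Both proofs invoke Cauchy interlacing against an $(m-1)\times(m-1)$ principal submatrix, but they delete opposite corners. The paper deletes the \emph{last} row and column, obtaining a submatrix that still carries the irregular entry $t_1=\sqrt{d-2}$; it then has no closed form for the spectrum and instead bounds $\|(D_1')^{-1}\|$ by solving $D_1'x=y$ through an explicit two-sided recursion in the $\ell^\infty$ norm, arriving directly at the constant $2\sqrt{d-2}/(m-1)$. You delete the \emph{first} row and column, which strips out $t_1$ and leaves a constant-coefficient tridiagonal matrix with explicit eigenvalues $2\sqrt{d-1}\cos(k\pi/m)$; interlacing then gives $\sigma_{\min}\ge 2\sqrt{d-1}\sin(\pi/(2m))$ immediately. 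Your route is cleaner and yields a slightly stronger bound (since $\sqrt{d-1}>\sqrt{d-2}$ and $(m-1)\sin(\pi/(2m))\ge 1$), at the minor cost of the trigonometric inequality you supply. The paper's route, by contrast, hits the stated constant on the nose without any conversion step and illustrates a technique (bounding $\|M^{-1}\|$ via recursion) that it reuses elsewhere for $H_1(a,b)$.

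One small remark: your indexing $\sigma_{\min}=\lambda_{(m-1)/2}(D_1)\ge\lambda_{(m-1)/2}(D_1')$ presumes eigenvalues are ordered decreasingly, and implicitly uses the $J D_1 J=-D_1$ symmetry to place the zero eigenvalue at position $(m+1)/2$; it would be worth saying this explicitly in a final write-up.
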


The proof can be found in Appendix~\ref{sec:spectrum_estimates}. Because $H_1(0,0)=D_1$, the above lemma tells us that $|\lambda_j(0,0)|\geq 2\sqrt{d-2}/(m-1)$ for all $j$ such that $\lambda_j(0,0)\neq 0$. Moreover, this lemma also provides an explicit formula for $\ket{\Psi}$, i.e., the 0-eigenvector of $D_1$, which is needed for constructing the 0-eigenvector of $\tilde{H}$.

We will then summarize the above discussion into the proof of \cref{thm:spectral_properties_Htilde}.

\begin{proof}[Proof of \cref{thm:spectral_properties_Htilde}]
    From \cref{lem:spectrum_of_H_general_structure} we know that all eigenvalues of $\tilde{H}$ are of the form $\lambda_j(\mu_l,\gamma\mu_l)$, for $\mu_l=2\cos(2l\pi/n)$.
    We divide these eigenvalues into two categories: those with $\mu_l\neq 0$ or those with $\mu_l=0$. For those with $\mu_l\neq 0$, they are all bounded away from 0 by at least $\Omega(1/(mn^2))$ as a result of \cref{lem:inverseH_spectral_norm}. For those with $\mu_l=0$, they are either 0 or bounded away from 0 by at least $\Omega(1/m)$ as a result of \cref{lem:spectral_gap_D1}. Combining these two cases we have shown that all non-zero eigenvalues are bounded away from 0 by at least $\Omega(1/(mn^2))$. This proves (i).

    For $\lambda_j(\mu_l,\gamma\mu_l)=0$, we need $\mu_l=0$ by \cref{cor:0detH_1}. There are two values for $l$ that achieve this: $l=n/4$ and $l=3n/4$. From \cref{example:zero_eigvec_cycle}, the corresponding 0-eigenvectors of $D_0$ are $\ket{\Phi^{\mathrm{even}}}$ and $\ket{\Phi^{\mathrm{odd}}}$ given in \eqref{eq:cos_sin_modes}.
    It is easy to check that $\braket{\Phi^{\mathrm{even}}|\Phi^{\mathrm{odd}}}=0$ and these two eigenvectors are both normalized. For $\lambda_j(0,0)=\lambda_j(\mu_l,\gamma\mu_l)$ with these two values of $l$, we know from \cref{lem:spectral_gap_D1} that (again observe $H_1(0,0)=D_1$) there is only one $j$ that satisfies $\lambda_j(0,0)=0$, corresponding to the eigenvector $\ket{\Psi}$ given in that lemma. Therefore \cref{lem:spectrum_of_H_general_structure} tells us that the 0-eigenspace of $\tilde{H}$ is 2-dimensional and spanned by $\ket{\Psi}\ket{\Phi^{\mathrm{even}}}$ and $\ket{\Psi}\ket{\Phi^{\mathrm{odd}}}$, and these two vectors are normalized and orthogonal to each other. 
    Therefore we have (ii).

    For (iii), note that 
    \[
    \begin{aligned}
        \braket{\chi^{\mathrm{even}}|\mathfrak{b}_j,\mathfrak{a}_i} &= \braket{\Psi|\mathfrak{b}_j}\braket{\Phi^{\mathrm{even}}|\mathfrak{a}_i} = \frac{1}{\sqrt{n/2}}\Psi_j^* \cos(i\pi/2), \\
    \braket{\chi^{\mathrm{odd}}|\mathfrak{b}_j,\mathfrak{a}_i} &= \braket{\Psi|\mathfrak{b}_j}\braket{\Phi^{\mathrm{odd}}|\mathfrak{a}_i} = \frac{1}{\sqrt{n/2}}\Psi_j^* \sin(i\pi/2).
    \end{aligned}
    \]
    When $i$ is even, $|\cos(i\pi/2)|=1$ and $\sin(i\pi/2)=0$. Therefore $|\braket{\chi^{\mathrm{even}}|\mathfrak{b}_j,\mathfrak{a}_i}|=|\Psi_j|/\sqrt{n/2}$ while $\braket{\chi^{\mathrm{odd}}|\mathfrak{b}_j,\mathfrak{a}_i}=0$. Because $\ket{\Psi}$ is normalized, we have
    \[
    1=\sum_{j=1}^m |\Psi_j|^2 = |\Psi_1|^2\left(1+\frac{d-2}{d-1}\frac{m-1}{2}\right).
    \]
    Therefore $|\Psi_1|=\Omega(1/\sqrt{m})$. Consequently $|\braket{\chi^{\mathrm{even}}|\mathfrak{b}_j,\mathfrak{a}_i}|=\Omega(1/\sqrt{mn})$. When $i$ is odd, the corresponding statements can be proved in the same way.
\end{proof}
\section{The algorithm}\label{sec:Algorithm}


In this section, we provide a quantum algorithm to find an \st path in the enlarged sunflower graph $\tilde{\SG}$ defined in \cref{defn:enlarged_sunflower_graph} and show that this quantum algorithm requires only polynomial queries to the oracles in \cref{defn:quantum_adjacency_list_oracle} and \cref{defn:exit_oracle}. Because of the freedom to choose the $\mathsf{N}_{\mathrm{aux}}$, i.e., the number of isolated vertices in $\tilde{\SG}$, we can choose $\mathsf{N}_{\mathrm{aux}}=0$ and see that the algorithm works just as well for the sunflower graph defined in \cref{defn:randomG_build}.
The algorithm that we will present is reminiscent of the two-measurement algorithm in \cite{childs2002quantum}. 
Here we first present the high-level idea. 

The input is the black-box unitaries $O^Q_{\tilde{\SG},1}$ and $O^Q_{\tilde{\SG},2}$ (\cref{defn:quantum_adjacency_list_oracle}) implementing the adjacency list oracle in \cref{defn:classical_adjacency_list_oracle} of the graph $\SG$, a vertex $s \in \SV$ that is the root of $\mathcal{T}_1$ (defined in \cref{defn:randomG_build}), and a indicator function $f_t$ (defined in \cref{defn:exit_oracle}). 
The vertex $t$ could also be given as input but it is not necessary to know it in advance for the algorithm. 

The algorithm works as follows:
Starting from the initial state $\ket{s}$ we apply the a polynomial function of the adjacency matrix $A$ that has the effect of filtering out all eigenvectors corresponding to non-zero eigenvalues. 
Because this is a non-unitary operation, it only succeeds with probability approximately the overlap between $\ket{s}$ and the $0$-eigenspace of the effective $H$ (defined in \cref{defn:effective_hamiltonian}).
While not necessary for the exponential speedup, the success probability can be boosted from $\Omega(1/\poly(n,m))$ to close to $1$ using fixed-point amplitude amplification \cite{yoder2014FixedPointSearch}. 
Upon successfully preparing the projected state, we then measure it in the computational basis.
This returns the bit-string representing the vertex $S_{i,1}$, i.e., the root of the tree $\mathcal{T}_i$, for odd $i$, each with probability at least $\Omega(1/\poly(m,n))$.
Therefore, repeating this process gives samples that cover all odd vertices $S_{i,1}$ for odd $i$ along the target \st path.\footnote{Strictly speaking $S_{i,1}$ is a supervertex defined in \cref{defn:supervertex} rather than a vertex. However, given that $S_{i,1}$ contains only one vertex, we will use $S_{i,1}$ to refer to the vertex contained in it.}
To fill in the remaining even numbered vertices, we query all neighbors of the vertices in the previous step.
$t$ will then be included among the these vertices with large probability, which we identify through querying $f_t$ $\poly(m,n)$ times.
Then a Breadth First Search of this graph gives a path from $s$ to $t$.

\begin{algorithm}
\caption{Finding an \st path in the enlarged sunflower graph $\tilde{\SG}$}
\begin{algorithmic}[1]\label{alg:mildexpanderfinding}
\REQUIRE Oracles $O^Q_{\tilde{\SG},1}$ and $O^Q_{\tilde{\SG},2}$ (\cref{defn:quantum_adjacency_list_oracle}), a vertex $s \in \SV$ that is the root of $\mathcal{T}_1$ (\cref{defn:randomG_build}), and a classical oracle $f_t$ (\cref{defn:exit_oracle}).

\ENSURE  The set of vertices of an \st path.
\STATE Construct circuit unitary ${\mathcal{V}}_{\mathrm{circ}}$ acting on registers $\alpha,\beta_1,\beta_2$ using Theorem~\ref{thm:robust_subspace_eigenstate_filtering} that satisfies, as in \eqref{eq:state_prepared_through_filtering},
\[
{\mathcal{V}}_{\mathrm{circ}}\ket{0}_{\alpha\beta_1\beta_2} = \ket{0}_{\alpha\beta_1}\Pi_0 \ket{s}_{\beta_2} + \ket{\perp},
\]
where $(\bra{0}_{\alpha\beta_1}\otimes I_{\beta_2})\ket{\perp}=0$.
\STATE Apply fixed-point amplitude amplification to $\tilde{\mathcal{V}}_{\mathrm{circ}}$ to prepare the state, as in \eqref{eq:state_prepared_after_amplitude_amplification},
\[
\frac{\Pi_0 \ket{s}}{\|\Pi_0 \ket{s}\|} = \ket{\eta^{\mathrm{odd}}}
\]
by boosting the probability of getting the all-0 state upon measuring the registers $\alpha,\beta_1$ close to 1.
\STATE Let $\mathcal{M}=\emptyset$.
\FOR{$\chi=1,2\cdots,N_s$}
\STATE Generate the state $\ket{\eta^{\mathrm{odd}}}$ through Steps 1 and 2.
\STATE Measure the register $\beta_2$ to obtain a vertex of $\tilde{\SG}$, which is added to $\mathcal{M}$
\ENDFOR
\STATE Use all vertices in $\mathcal{M}$ and all their neighbors to generate $\mathcal{G}_{\mathrm{samp}}$ as a subgraph of $\tilde{\SG}$.
\STATE Find $t$ by querying $f_t$ for each vertex in $\mathcal{G}_{\mathrm{samp}}$. If $t$ cannot be found then abort.
\STATE Use Breadth First Search to find an $\st$ path in $\mathcal{G}_{\mathrm{samp}}$. If the path can be found then return the path. If not then abort.
\end{algorithmic}
\end{algorithm}

We first give the following result on how to construct the polynomial function of $A$ that filters out the unwanted eigenvectors. Given the block encoding of the adjacency matrix $A$, the construction largely follows \cite[Theorem 3]{lin2019OptimalQEigenstateFiltering}, which we modify to account for the fact that we only have an approximate block encoding of $A$ and that we only have guarantees about the spectral gap of $H$ rather than $A$.

\begin{theorem}[Robust subspace eigenstate filtering]
    \label{thm:robust_subspace_eigenstate_filtering}
    Let $A$ be a Hermitian matrix acting on the Hilbert space $\mathcal{H}$, with $\|A\|\leq \alpha$. Let $\mathcal{S}$ be an invariant subspace of $A$. Let $H$ be the restriction of $A$ to $\mathcal{S}$. We assume that $0$ is an eigenvalue of $H$ (can be degenerate), and is separated from the rest of the spectrum of $H$ by a gap at least $\Delta$.

    We also assume that $A$ can be accessed through its $(\alpha,\mathfrak{m},\epsilon_A)$-block encoding $U_A$ acting on $\beta_1,\beta_2$ as defined in \cref{defn:block_encoding}.
    Then there exists a unitary circuit $\mathcal{V}_{\mathrm{circ}}$ on registers $\alpha,\beta_1,\beta_2$ such that
     \begin{equation}
    \label{eq:robust_subspace_eigenstate_filtering}
        \|(\bra{0}_{\alpha\beta_1}\otimes I_{\beta_2})\mathcal{V}_{\mathrm{circ}}(\ket{0}_{\alpha\beta_1}\otimes \Pi_{\mathcal{S}})-\Pi_{0}\Pi_{\mathcal{S}}\|\leq \epsilon + \varsigma,
    \end{equation}
    where
    \begin{equation}
        \label{eq:robustness_err}
        \varsigma = \frac{16\ell^2\epsilon_A}{\alpha}\left[\log\left(\frac{2\alpha}{\epsilon_A}+1\right)+1\right]^2,
    \end{equation}
    and it uses $2\ell=\Or((\alpha/\Delta)\log(1/\epsilon))$ queries to (control-) $U_A$ and its inverse, as well as $\Or(\ell \mathfrak{m})$ other single- or two-qubit gates. In the above $\Pi_{0}:\mathcal{S}\to\mathcal{S}$ is the projection operator into the $0$-eigenspace of $H$, and $\Pi_{\mathcal{S}}:\mathcal{H}\to\mathcal{S}$ is the projection operator into the invariant subspace $\mathcal{S}$.
\end{theorem}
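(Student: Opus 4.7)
The plan is to instantiate the quantum singular value transformation (QSVT) framework of \cite{gilyen2018QSingValTransfArXiv} with the minimax filtering polynomial $R_\ell(x;\Delta/\alpha)$ of \cite{lin2019OptimalQEigenstateFiltering}, then use the invariance of $\mathcal{S}$ under $A$ to transfer spectral guarantees from $A$ over to its restriction $H$, and finally track the error introduced by the inexact block encoding.

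First I would choose $\ell = \lceil (\alpha/(\sqrt{2}\,\Delta))\log(2/\epsilon)\rceil$ so that the even polynomial $f(x):=R_\ell(x;\Delta/\alpha)$, of degree $2\ell=\Or((\alpha/\Delta)\log(1/\epsilon))$, satisfies $f(0)=1$, $|f(x)|\leq 1$ on $[-1,1]$, and $|f(x)|\leq 2e^{-\sqrt{2}\ell\Delta/\alpha}\leq \epsilon$ for all $x$ with $|x|\geq \Delta/\alpha$. Running QSVT on the given $(\alpha,\mathfrak{m},\epsilon_A)$-block encoding $U_A$ yields a unitary $\mathcal{V}_{\mathrm{circ}}$ on registers $\alpha,\beta_1,\beta_2$ that uses $2\ell$ (control-) queries to $U_A$ and $U_A^\dagger$ and $\Or(\ell\mathfrak{m})$ extra single- and two-qubit gates; were the block encoding exact we would have the identity $(\bra{0}_{\alpha\beta_1}\otimes I_{\beta_2})\mathcal{V}_{\mathrm{circ}}(\ket{0}_{\alpha\beta_1}\otimes I_{\beta_2})=f(A/\alpha)$.

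Next I would transfer this guarantee to $H$ using the invariance of $\mathcal{S}$. A direct induction analogous to \cref{lem:effective_ham_polynomial} gives $A^k\Pi_{\mathcal{S}}=H^k\Pi_{\mathcal{S}}$ for every $k\geq 0$, hence $f(A/\alpha)\Pi_{\mathcal{S}}=f(H/\alpha)\Pi_{\mathcal{S}}$ by linearity. Since $H$ is Hermitian on $\mathcal{S}$ with spectrum contained in $\{0\}\cup\{x:|x|\geq \Delta\}$, spectral calculus applied to $f$ yields
\begin{equation*}
\|f(H/\alpha)-\Pi_0\|\leq \max\!\bigl(|f(0)-1|,\;\sup_{|x|\geq \Delta/\alpha}|f(x)|\bigr)\leq \epsilon,
\end{equation*}
which accounts for the $\epsilon$ term in \eqref{eq:robust_subspace_eigenstate_filtering} in the noise-free regime.

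The remaining step, and in my view the main technical obstacle, is to control the damage from the inexact block encoding. Since $U_A$ block-encodes some $\tilde A$ with $\|\tilde A-A\|\leq\epsilon_A$, the circuit $\mathcal{V}_{\mathrm{circ}}$ actually realizes $f(\tilde A/\alpha)$ rather than $f(A/\alpha)$, and one must prove $\|f(\tilde A/\alpha)-f(A/\alpha)\|\leq\varsigma$ for $\varsigma$ as in \eqref{eq:robustness_err}. The natural approach is to combine a Lipschitz estimate for $f$ on $[-1,1]$---where the Markov-brothers inequality bounds $\|f'\|_\infty$ by $\Or(\ell^2)$ and hence produces the $\ell^2\epsilon_A/\alpha$ dependence---with the precision requirement $\Or(\log(\alpha/\epsilon_A))$ arising when the classically precomputed QSVT phase factors for $R_\ell$ are represented and propagated through the $2\ell$ reflections of the QSP recursion; this logarithm enters quadratically because it appears both in the bit-precision of each phase and in bounding the condition number of the phase-factor synthesis. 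Assembling the noise-free bound and the perturbative bound by a triangle inequality then yields the claimed total error $\epsilon+\varsigma$. A careful tracking of these constants, including a stability analysis of the specific polynomial $R_\ell$, will be carried out in \cref{sec:robust_subspace_eigenstate_filtering}.
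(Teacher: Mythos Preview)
Your overall strategy matches the paper's proof: instantiate QSVT with the minimax filtering polynomial $R_\ell$, pass from $A$ to $H$ via the invariance of $\mathcal{S}$ (the paper packages this step as \cref{lem:subspace_eigenstate_filtering_exact}), bound $\|R_\ell(H/\alpha;\delta)-\Pi_0\|$ by spectral calculus, and then add the perturbative term $\varsigma$ from the inexact block encoding via a triangle inequality. The choice of $\ell$ and the resulting query and gate counts are also as in the paper.

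There is, however, a genuine misidentification in your robustness step. First, the circuit $\mathcal{V}_{\mathrm{circ}}$ built by QSVT on the inexact $U_A$ realizes the \emph{singular value} transformation $R_\ell^{\mathrm{SV}}(\tilde A/\alpha;\delta)$ of the actually block-encoded operator $\tilde A=\alpha(\bra{0}_{\beta_1}\otimes I)U_A(\ket{0}_{\beta_1}\otimes I)$, which need not be Hermitian even though $A$ is; so the quantity to control is $\|R_\ell^{\mathrm{SV}}(\tilde A/\alpha)-R_\ell(A/\alpha)\|$, not $\|f(\tilde A/\alpha)-f(A/\alpha)\|$. Second, and more importantly, the factor $[\log(2\alpha/\epsilon_A+1)+1]^2$ in \eqref{eq:robustness_err} does \emph{not} come from bit-precision of classically computed QSVT phase factors or from conditioning of the phase-synthesis problem. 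It comes from the operator-level robustness of the singular value transformation: the paper invokes \cite[Corollary~21]{gilyen2018QSingValTransfArXiv}, which bounds $\|f^{\mathrm{SV}}(A)-f^{\mathrm{SV}}(\tilde A)\|$ in terms of the Lipschitz constant of $f$ (this is where Markov brothers' $4\ell^2$ enters, exactly as you say) and $\|A-\tilde A\|$, and whose proof introduces precisely that squared-logarithm overhead. Because perturbing $\tilde A$ moves singular \emph{vectors} as well as singular \emph{values}, the scalar Lipschitz bound $4\ell^2\epsilon_A/\alpha$ is not by itself an operator-norm bound; the logarithmic factor is the cost of upgrading it to one. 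Your proposed mechanism would not yield the stated formula and should be replaced by a direct appeal to that robustness corollary.
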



The proof of the above theorem is provided in Appendix~\ref{sec:robust_subspace_eigenstate_filtering}. Here we will explain how this result fit into the context of our path-finding algorithm. In our setting, $A$ is the adjacency matrix defined in \cref{defn:adjacency_matrix_of_multigraph}. $\mathcal{S}$ is the symmetric subspace defined in \eqref{eq:the_symmetric_subspace}. The restriction of $A$ to $\mathcal{S}$ is defined to be the effective Hamiltonian in \cref{defn:effective_hamiltonian}. The spectral gap of $H$ around the eigenvalue $0$ is guaranteed by \cref{cor:spectral_properties} (i) to be at least $\Omega(1/(mn^2))$. Therefore the first part of the conditions are all satisfied, with $\Delta=\Omega(1/(mn^2))$.

By \cref{lem:block_encoding_adjacency_matrix_new}, we can see that the required approximate block-encoding $U_A$ can be constructed with $\Or(1)$ queries to $O^Q_{\tilde{\SG},1}$ and $O^Q_{\tilde{\SG},2}$, $\Or(\mathsf{n}+\log^{2.5}(1/\epsilon_A))$ ancilla qubits in the register $\beta_1$, and $\Or(\mathsf{n}+\log^{1.5}(1/\epsilon_A))$ additional one- or two-qubit gates. Here because $|\mathcal{V}|=n(d-1)^{m-1}+\mathsf{N}_{\mathrm{aux}}$, we have $\mathsf{n} = \Or(\log_2(|\mathcal{V}|))=\Or(m\log(n)+\log(\mathsf{N}_{\mathrm{aux}}))$.

Since all the conditions in Theorem~\ref{thm:robust_subspace_eigenstate_filtering} are satisfied, it then ensures that we can perform the unitary $\mathcal{V}_{\mathrm{circ}}$ $2\ell=\Or((1/\Delta)\log(1/\epsilon))$ queries to $U_A$. 
Applying this circuit to the state $\ket{s}$, i.e., the bit-string representing the entrance $s$, using the fact that $\Pi_{\mathcal{S}}\ket{s}=\ket{s}$, we have
\begin{equation}
    \label{eq:state_prepared_through_filtering}
    \begin{aligned}
    &\mathcal{V}_{\mathrm{circ}}(\ket{0}_{\alpha\beta_1}\otimes \ket{s}) 
    =\mathcal{V}_{\mathrm{circ}}(\ket{0}_{\alpha\beta_1}\otimes \Pi_{\mathcal{S}})\ket{s} \\
    &=\ket{0}_{\alpha\beta_1}\otimes \Pi_0\Pi_{\mathcal{S}}\ket{s} + \ket{\perp}+\ket{\varepsilon} \\
    &=\ket{0}_{\alpha\beta_1}\otimes \Pi_0\ket{s} + \ket{\perp}+\ket{\varepsilon},
    \end{aligned}
\end{equation}
where $\ket{\perp}$ satisfies $(\bra{0}_{\alpha\beta_1}\otimes I_{\beta_2\gamma})\ket{\perp} = 0$, and $\|\ket{\varepsilon}\|\leq \epsilon + \varsigma$ where $\varsigma$ is defined in \eqref{eq:robustness_err_appendix}.

Our immediate goal is to approximately obtain the state $\Pi_0\ket{s}$, which is flagged by the all-0 state in registers $\alpha,\beta_1$. The most direct way to achieve this is to post-select the ancilla qubit measurement result after applying $\mathcal{V}_{\mathrm{circ}}$. This entails an overhead of 
$1/\|(\bra{0}_{\alpha\beta_1}\otimes I)\mathcal{V}_{\mathrm{circ}}(\ket{0}_{\alpha\beta_1}\otimes \ket{s})\|^2$.
A more efficient way is to apply fixed-point amplitude amplification. By \cite[Theorem~27]{gilyen2018QSingValTransfArXiv} the overhead is improved almost quadratically to
\begin{equation}
\label{eq:amplitude_amplification_overhead_first_expression}
    \Or(\log(1/\epsilon')/\|(\bra{0}_{\alpha\beta_1}\otimes I)\mathcal{V}_{\mathrm{circ}}(\ket{0}_{\alpha\beta_1}\otimes \ket{s})\|),
\end{equation}
where $\epsilon'$ is an error incurred through the amplitude amplification process.
Note that
\begin{equation}
\label{eq:success_amplitude_lower_bound}
    \|(\bra{0}_{\alpha\beta_1}\otimes I)\mathcal{V}_{\mathrm{circ}}(\ket{0}_{\alpha\beta_1}\otimes \ket{s})\| \geq \|\Pi_0\ket{s}\|-\|\ket{\varepsilon}\|=\Omega(\|\Pi_0\ket{s}\|),
\end{equation}
where we have chosen $\epsilon$ and $\epsilon_A$ so that 
\[
\|\ket{\varepsilon}\|\leq \epsilon + \varsigma\leq \|\Pi_0\ket{s}\|/2.
\]
Therefore the amplitude amplification overhead is
\begin{equation}
    \label{eq:overhead_amplitude_amplification}
 \Or(\|\Pi_0\ket{s}\|^{-1}\log(1/\epsilon')),
\end{equation}
We will then be able to obtain a state  $\ket{\psi_{\mathrm{target}}}$ on registers $\beta_2,\gamma$ satisfying
\begin{equation}
\label{eq:eigenstate_filtering_state_prep_err}
\left\|\ket{\psi_{\mathrm{target}}}-\frac{\Pi_0\ket{s}}{\|\Pi_0\ket{s}\|}\right\|\leq \epsilon' + \Or(\|\ket{\varepsilon}\|\|\Pi_0\ket{s}\|^{-1})=\epsilon'+\Or((\epsilon + \varsigma)\|\Pi_0\ket{s}\|^{-1}).
\end{equation}

From \cref{cor:spectral_properties} (ii), because the 0-eigenspace of $H$ is spanned by the orthogonal vectors $\ket{\eta^{\mathrm{even}}}$ and $\ket{\eta^{\mathrm{odd}}}$, we have
\[
\Pi_0 = \ket{\eta^{\mathrm{even}}}\bra{\eta^{\mathrm{even}}} + \ket{\eta^{\mathrm{odd}}}\bra{\eta^{\mathrm{odd}}}.
\]
Therefore
\[
\Pi_0\ket{s} = \ket{\eta^{\mathrm{even}}}\braket{\eta^{\mathrm{even}}|s} + \ket{\eta^{\mathrm{odd}}}\braket{\eta^{\mathrm{odd}}|s} = \ket{\eta^{\mathrm{odd}}}\braket{\eta^{\mathrm{odd}}|s},
\]
where we have used the fact that $\ket{s}=\ket{S_{1,1}}$ and therefore have no overlap with $\ket{\eta^{\mathrm{even}}}$ according to \cref{cor:spectral_properties} (iii).
Consequently
\begin{equation}
\label{eq:state_prepared_after_amplitude_amplification}
    \frac{\Pi_0\ket{s}}{\|\Pi_0\ket{s}\|} = \ket{\eta^{\mathrm{odd}}}.
\end{equation}
This enables us to compute the quantities we need for the amplitude amplification overhead in \eqref{eq:overhead_amplitude_amplification}:
\begin{equation}
\label{eq:overlap_for_state_prep}
    \|\Pi_0\ket{s}\| =| \braket{\eta^{\mathrm{odd}}|S_{1,1}}|=\Omega(1/\sqrt{mn}),
\end{equation}
where we have used \cref{cor:spectral_properties} (iii) for $i=1$.

From the above we can see that we have prepared a state $\ket{\psi_{\mathrm{target}}}$ that is close to the 0-eigenstate $\ket{\eta^{\mathrm{odd}}}$. We summarize the above into a lemma:
\begin{lemma}
    \label{lem:generate_eta_odd}
    Let $\tilde{\SG}$ be the enlarged sunflower graph as defined in \cref{defn:enlarged_sunflower_graph} with $s=S_{1,1}$ known,  $m$ odd, and $n$ an integer multiple of 4.
    There exists a unitary circuit using $\Or(m^{1.5}n^{2.5}\polylog{mn/\epsilon'\epsilon})$ queries to the quantum oracles defined in \cref{defn:quantum_adjacency_list_oracle}, $\Or(m\polylog{mn/\epsilon_A}+\log(\mathsf{N}_{\mathrm{aux}}))$ qubits, and $\Or(m^{2.5}n^{2.5}\polylog{mn/\epsilon'\epsilon\epsilon_A}+\polylog{\mathsf{N}_{\mathrm{aux}}mn/\epsilon'\epsilon\epsilon_A})$ additional one- or two-qubit gates, such that after running the circuit with the all-0 state as the initial state and measuring some of the qubits, we obtain a quantum state $\ket{\psi_{\mathrm{target}}}$ such that
    \[
    \|\ket{\psi_{\mathrm{target}}}-\ket{\eta^{\mathrm{odd}}}\|\leq \epsilon'+\Or((\epsilon+\varsigma)\sqrt{mn}),
    \]
    where $\varsigma$ is defined in \cref{eq:robustness_err}, in which $\alpha=\Theta(1)$ and $\ell=\Or(mn^2\log(1/\epsilon))$.
\end{lemma}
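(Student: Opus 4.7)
The plan is to obtain the lemma by composing three ingredients already established in the paper: the adjacency-matrix block encoding from \cref{lem:block_encoding_adjacency_matrix_new}, the robust subspace eigenstate filtering circuit from \cref{thm:robust_subspace_eigenstate_filtering}, and a fixed-point amplitude amplification step, and then read off the error bound from the identification $\Pi_0\ket{s}/\|\Pi_0\ket{s}\|=\ket{\eta^{\mathrm{odd}}}$ implied by \cref{cor:spectral_properties}. The proof is essentially an accounting exercise that tracks how errors and resources propagate through these three layers.

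First, I would apply \cref{lem:block_encoding_adjacency_matrix_new} to build a $(d^2,\mathsf{n}+3,\epsilon_A)$-block encoding $U_A$ of $A$, where $\mathsf{n}=\lceil\log_2(n(d-1)^{m-1}+\mathsf{N}_{\mathrm{aux}})\rceil=\Or(m+\log \mathsf{N}_{\mathrm{aux}})$ since $d=\Or(1)$; this costs $\Or(1)$ oracle queries, $\Or(\log^{2.5}(1/\epsilon_A))$ ancilla qubits, and $\Or(\mathsf{n}+\log^{2.5}(1/\epsilon_A))$ extra gates. Then, taking the invariant subspace $\mathcal{S}$ from \cref{def:SubspaceS} with effective Hamiltonian $H$ as in \cref{defn:effective_hamiltonian}, \cref{cor:spectral_properties}(i) supplies a gap $\Delta=\Omega(1/(mn^2))$ around the zero eigenvalue of $H$. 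Feeding $U_A$ and $\Delta$ into \cref{thm:robust_subspace_eigenstate_filtering} with $\alpha=d^2=\Theta(1)$ produces a circuit $\mathcal{V}_{\mathrm{circ}}$ satisfying \cref{eq:state_prepared_through_filtering} and using $2\ell=\Or(mn^2\log(1/\epsilon))$ queries to $U_A$.

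Next, I would apply fixed-point amplitude amplification \cite{yoder2014FixedPointSearch,gilyen2018QSingValTransfArXiv} to the ``success flag'' $\ket{0}_{\alpha\beta_1}$ of $\mathcal{V}_{\mathrm{circ}}(\ket{0}_{\alpha\beta_1}\otimes\ket{s})$. Because $\ket{s}=\ket{S_{1,1}}$ lies in $\mathcal{S}$ and, by \cref{cor:spectral_properties}(iii), has vanishing overlap with $\ket{\eta^{\mathrm{even}}}$, the decomposition $\Pi_0=\ket{\eta^{\mathrm{even}}}\bra{\eta^{\mathrm{even}}}+\ket{\eta^{\mathrm{odd}}}\bra{\eta^{\mathrm{odd}}}$ collapses to $\Pi_0\ket{s}=\ket{\eta^{\mathrm{odd}}}\braket{\eta^{\mathrm{odd}}|S_{1,1}}$, whose norm is $\Omega(1/\sqrt{mn})$ by the same corollary applied at $i=1$. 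Provided $\epsilon$ and $\epsilon_A$ are chosen so that $\epsilon+\varsigma\le \tfrac{1}{2}\|\Pi_0\ket{s}\|$, the bound \cref{eq:success_amplitude_lower_bound} gives a success amplitude $\Omega(1/\sqrt{mn})$, so amplitude amplification needs only $\Or(\sqrt{mn}\log(1/\epsilon'))$ applications of $\mathcal{V}_{\mathrm{circ}}$ and prepares a state $\ket{\psi_{\mathrm{target}}}$ obeying the error estimate \cref{eq:eigenstate_filtering_state_prep_err}, which simplifies to $\|\ket{\psi_{\mathrm{target}}}-\ket{\eta^{\mathrm{odd}}}\|\le \epsilon'+\Or((\epsilon+\varsigma)\sqrt{mn})$, exactly the claim.

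Finally, I would multiply out the resource counts: the query complexity is (amplification depth)$\times(2\ell)\times$(queries per $U_A$) $=\Or(\sqrt{mn})\cdot\Or(mn^2)\cdot\Or(1)$ up to polylog factors, which matches $\Or(m^{1.5}n^{2.5}\polylog{mn/\epsilon'\epsilon})$; the qubit count is dominated by $\mathsf{n}$ plus the block-encoding ancillas plus the QSVT phase ancilla; and the two-qubit gate count comes from $\Or(\ell\mathfrak{m})$ per $\mathcal{V}_{\mathrm{circ}}$ times the amplification depth. The main obstacle is the robustness accounting rather than any combinatorial step: the block-encoding error $\epsilon_A$ is amplified by $\ell=\Or(mn^2\polylog)$ inside $\varsigma$ via \cref{eq:robustness_err}, and the amplitude amplification step further amplifies state errors by a $1/\|\Pi_0\ket{s}\|=\Or(\sqrt{mn})$ factor, so one must pick $\epsilon_A=1/\mathrm{poly}(mn/\epsilon'\epsilon)$ small enough that both conditions above hold while keeping $\log(1/\epsilon_A)$ inside the claimed polylog resource bounds.
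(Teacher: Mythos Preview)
Your proposal is correct and follows essentially the same approach as the paper's own proof: both assemble the block encoding from \cref{lem:block_encoding_adjacency_matrix_new}, invoke \cref{thm:robust_subspace_eigenstate_filtering} with the gap $\Delta=\Omega(1/(mn^2))$ from \cref{cor:spectral_properties}(i), apply fixed-point amplitude amplification with overhead $\Or(\sqrt{mn}\log(1/\epsilon'))$ using the overlap bound from \cref{cor:spectral_properties}(iii), and then multiply out the costs and read off the error from \eqref{eq:eigenstate_filtering_state_prep_err} together with the identification $\Pi_0\ket{s}/\|\Pi_0\ket{s}\|=\ket{\eta^{\mathrm{odd}}}$. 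Your closing paragraph on choosing $\epsilon_A$ is slightly more explicit than the paper's proof of this lemma (the paper defers that parameter choice to the subsequent discussion before \cref{lem:generate_single_sample}), but this is a presentational detail rather than a different argument.
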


\begin{proof}
    We use the eigenstate filtering technique as discussed above in \cref{thm:robust_subspace_eigenstate_filtering}. Implementing the circuit $\mathcal{V}_{\mathrm{circ}}$ requires $2\ell=\Or((\alpha/\Delta)\log(1/\epsilon))$ queries to the block encoding of the adjacency matrix $A$, which translates to $\Or((\alpha/\Delta)\log(1/\epsilon))$ to the adjacency list oracle because of \cref{lem:block_encoding_adjacency_matrix_new}. From  \cref{lem:block_encoding_adjacency_matrix_new} we also know that $\alpha=d^2=\Theta(1)$. By \cref{cor:spectral_properties} (i) we have $\Delta=\Omega(1/mn^2)$. We therefore have
    \[
    \ell = \Or(mn^2\log(1/\epsilon)),
    \]
    and this is also the expression for the query complexity of one application of $\mathcal{V}_{\mathrm{circ}}$. To prepare the quantum state $\ket{\psi_{\mathrm{target}}}$, we need to either post-select the measurement outcome of ancilla qubits after applying $\mathcal{V}_{\mathrm{circ}}$ or use fixed-point amplitude amplification, the latter of which being more efficient. By \cref{eq:overhead_amplitude_amplification} and \cref{eq:overlap_for_state_prep}, the amplitude amplification overhead is therefore $\Or(\sqrt{mn}\log(1/\epsilon'))$. This enables us to write down the total query complexity for preparing the state $\ket{\psi_{\mathrm{target}}}$, which is
    \[
    \underbrace{\Or(\sqrt{mn}\log(1/\epsilon'))}_{\text{amplitude amplification overhead}}\times \underbrace{\Or(mn^2\log(1/\epsilon))}_{\text{implementing }\mathcal{V}_{\mathrm{circ}}} = \Or(m^{1.5}n^{2.5}\polylog{1/\epsilon\epsilon'}).
    \]
    The deviation of $\ket{\psi_{\mathrm{target}}}$ from $\ket{\eta^{\mathrm{odd}}}$ is computed in \eqref{eq:eigenstate_filtering_state_prep_err}, and substituting \eqref{eq:overlap_for_state_prep} into it we obtain the error bound in the statement of the lemma. The number of ancilla qubits and gates are computed in a similar way.
\end{proof}

Next, we will measure the $\beta_2$ register to sample graph vertices. The goal is to obtain vertices of the form $S_{i,1}$. The probability of obtaining any $S_{i,1}$ is
\begin{equation}
\label{eq:prob_getting_Si1}
|\braket{S_{i,1}|\psi_{\mathrm{target}}}|^2\geq |\braket{S_{i,1}|\eta^{\mathrm{odd}}}|^2 - 2\epsilon'-\Or((\epsilon + \varsigma)\|\Pi_0\ket{s}\|^{-1}).
\end{equation}
From \eqref{eq:state_prepared_after_amplitude_amplification} and \cref{cor:spectral_properties} (iii) for odd $i$,
\[
 \braket{S_{i,1}|\eta^{\mathrm{odd}}}=\Omega(1/\sqrt{mn}),
\]
Consequently, if we choose $\epsilon,\epsilon',\epsilon_A$ to be small enough so that the $2\epsilon'+\Or((\epsilon + \varsigma)\|\Pi_0\ket{s}\|^{-1})$ term on the right-hand side of \eqref{eq:prob_getting_Si1} is at most $|\braket{S_{i,1}|\eta^{\mathrm{odd}}}|^2/2$, then the probability of obtaining $S_{i,1}$ upon measurement, as described in \eqref{eq:prob_getting_Si1}, is at least
\[
|\braket{S_{i,1}|\eta^{\mathrm{odd}}}|^2/2 =\Omega(1/mn)
\]
for odd $i$ and when. We can see that it suffices to have, using \cref{eq:overlap_for_state_prep},
\[
\epsilon'=\Theta(1/(mn)), \quad \epsilon,\varsigma=\Theta(1/(mn)^{1.5}).
\]
Through \eqref{eq:robustness_err} we can see that it suffices to choose
\[
\epsilon_A = \Theta(1/(mn)^3).
\]
With these choices of parameters $\epsilon,\epsilon',\epsilon_A$ and \cref{lem:generate_eta_odd}, we now have the cost of generating a bit-string sample which with $\Omega(1/mn)$ probability yields $S_{i,1}$ for each odd $i$. We state this result in the following lemma:
\begin{lemma}
    \label{lem:generate_single_sample}
    Under the same assumptions as in \cref{lem:generate_eta_odd},
    there exists a unitary circuit using $\Or(m^{1.5}n^{2.5}\polylog{mn})$ queries to the adjacency list oracles defined in \cref{defn:quantum_adjacency_list_oracle}, $\Or(m\polylog{mn}+\log(\mathsf{N}_{\mathrm{aux}}))$ qubits, and $\Or(m^{2.5}n^{2.5}\polylog{mn}+\polylog{\mathsf{N}_{\mathrm{aux}}mn})$ additional one- or two-qubit gates, such that after running the circuit with the all-0 state as the initial state and measuring some of the qubits, we obtain a bit-string representing a vertex of $\tilde{\SG}$ that is $S_{i,1}$ with probability at least $\Omega(1/(mn))$, for all odd $1\leq i\leq n$.
\end{lemma}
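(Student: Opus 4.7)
The plan is to instantiate \cref{lem:generate_eta_odd} with carefully chosen error parameters so that the resulting approximate state is close enough to $\ket{\eta^{\mathrm{odd}}}$ for a computational-basis measurement to yield each $S_{i,1}$ (with $i$ odd) with probability $\Omega(1/(mn))$. The circuit is therefore the one from \cref{lem:generate_eta_odd} followed by a measurement in the computational basis on register $\beta_2$.

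First, I would fix the target measurement probability. By \cref{cor:spectral_properties} (iii), $|\braket{S_{i,1}|\eta^{\mathrm{odd}}}|^2 = \Omega(1/(mn))$ for odd $i$. Using the standard inequality $|\braket{S_{i,1}|\psi_{\mathrm{target}}}|^2 \geq |\braket{S_{i,1}|\eta^{\mathrm{odd}}}|^2 - 2\|\ket{\psi_{\mathrm{target}}}-\ket{\eta^{\mathrm{odd}}}\|$ (which is \eqref{eq:prob_getting_Si1} combined with \cref{eq:eigenstate_filtering_state_prep_err}), it suffices to guarantee $\|\ket{\psi_{\mathrm{target}}}-\ket{\eta^{\mathrm{odd}}}\| = O(1/(mn))$, since then the measurement probability is at least $|\braket{S_{i,1}|\eta^{\mathrm{odd}}}|^2/2 = \Omega(1/(mn))$.

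Next, I would choose the three error parameters to meet this target. From the bound in \cref{lem:generate_eta_odd}, $\|\ket{\psi_{\mathrm{target}}}-\ket{\eta^{\mathrm{odd}}}\| \leq \epsilon' + O((\epsilon+\varsigma)\sqrt{mn})$, so setting $\epsilon' = \Theta(1/(mn))$ and $\epsilon,\varsigma = \Theta(1/(mn)^{1.5})$ suffices. Since $\alpha = \Theta(1)$ and $\ell = O(mn^2\log(1/\epsilon)) = O(mn^2 \polylog(mn))$, the definition of $\varsigma$ in \eqref{eq:robustness_err} shows that choosing $\epsilon_A = \Theta(1/(mn)^{c})$ for a sufficiently large constant $c$ (specifically $c=3$ up to logarithmic factors, since $\varsigma \sim \ell^2 \epsilon_A \polylog(\cdot)$) makes $\varsigma$ meet its target. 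The main subtlety here is simply checking that $\epsilon_A$ can be driven small enough without blowing up the cost; this is fine because all three parameters appear only polylogarithmically in the final query complexity from \cref{lem:generate_eta_odd}.

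Finally, I would substitute these polynomially-small choices back into the resource bounds of \cref{lem:generate_eta_odd}. The queries become $O(m^{1.5}n^{2.5} \polylog(mn/\epsilon'\epsilon)) = O(m^{1.5}n^{2.5}\polylog(mn))$; the number of qubits becomes $O(m \polylog(mn/\epsilon_A) + \log \mathsf{N}_{\mathrm{aux}}) = O(m \polylog(mn) + \log \mathsf{N}_{\mathrm{aux}})$; and the additional gate count becomes $O(m^{2.5}n^{2.5}\polylog(mn) + \polylog(\mathsf{N}_{\mathrm{aux}} mn))$, matching the claimed bounds. The only place where care is required is tracking that a constant-factor drop in $|\braket{S_{i,1}|\psi_{\mathrm{target}}}|^2$ is consistent across all odd $i$ simultaneously, which follows because \cref{cor:spectral_properties} (iii) provides a uniform $\Omega(1/\sqrt{mn})$ lower bound over all odd $i$, and the state-preparation error $\|\ket{\psi_{\mathrm{target}}}-\ket{\eta^{\mathrm{odd}}}\|$ is independent of $i$.
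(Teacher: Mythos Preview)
Your proposal is correct and follows essentially the same approach as the paper: the paper likewise instantiates \cref{lem:generate_eta_odd} with $\epsilon'=\Theta(1/(mn))$, $\epsilon,\varsigma=\Theta(1/(mn)^{1.5})$, and $\epsilon_A=\Theta(1/(mn)^3)$, then invokes \cref{cor:spectral_properties}~(iii) and \eqref{eq:prob_getting_Si1} to conclude the $\Omega(1/(mn))$ measurement probability, with the resource bounds read off directly from \cref{lem:generate_eta_odd}.
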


With the ability to obtain each $S_{i,1}$ for odd $i$ with $\Omega(1/(mn))$ probability, we are now ready to prove the time complexity of our main algorithm.

\begin{theorem} 
\label{thm:the_algorithm}
Let $\tilde{\SG}$ be the enlarged sunflower graph as defined in \cref{defn:enlarged_sunflower_graph} with $s=S_{1,1}$ known,  $m$ odd, and $n$ an integer multiple of 4. Then with probability at least $2/3$, Algorithm~\ref{alg:mildexpanderfinding} finds an \st path with 
\[
\Or(m^{2.5}n^{3.5}\polylog{mn})
\]
queries to the quantum adjacency list oracle in \cref{defn:quantum_adjacency_list_oracle} and
\[
\Or(mn\log(n))
\]
queries to the oracle for $t$ in \cref{defn:exit_oracle},
using in total $\poly(m,n,\log(\mathsf{N}_{\mathrm{aux}}))$ qubits, $\poly(m,n,\log(\mathsf{N}_{\mathrm{aux}}))$ other primitive gates, and $\Or(mn\log(n))$ runtime for classical post-processing.
 \end{theorem}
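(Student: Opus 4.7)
The plan is to combine \cref{lem:generate_single_sample} with a standard coupon-collector argument to show that $N_s = \Theta(mn\log n)$ iterations of Steps~5--7 suffice to populate $\mathcal{M}$ with every odd-indexed root $s_i = S_{i,1}$, and then to argue that the cycle edges $\{s_i, s_{i+1}\}$ automatically pull every even-indexed root into the neighborhood that forms $\mathcal{G}_{\mathrm{samp}}$. Because $n$ is a multiple of $4$, the index $n/2+1$ is odd, so the target $t = s_{n/2+1}$ appears in $\mathcal{M}$ itself (not merely as a neighbor of something in $\mathcal{M}$), and a Breadth First Search on $\mathcal{G}_{\mathrm{samp}}$ returns the \st path $s_1, s_2, \ldots, s_{n/2+1}$.

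In detail, \cref{lem:generate_single_sample} already bundles the eigenstate-filtering error, the block-encoding error, and the fixed-point amplification error into a single guarantee that one sample returns $S_{i,1}$ with probability at least $c/(mn)$ for every odd $i$, at a cost of $\Or(m^{1.5}n^{2.5}\polylog{mn})$ queries. Taking $N_s = C\,mn\log n$ for a sufficiently large constant $C$, the probability that a fixed odd index never appears among $N_s$ independent samples is at most $(1-c/(mn))^{N_s} \leq n^{-cC}$; a union bound over the $n/2$ odd indices then shows that $\mathcal{M}$ contains every $s_i$ with $i$ odd with probability at least $1 - (n/2)\,n^{-cC} \geq 2/3$.

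Conditioned on this event, Step~8 queries the adjacency list oracle $\Or(d)=\Or(1)$ times per vertex of $\mathcal{M}$, adding $\Or(mn\log n)$ to the quantum query count (absorbed by the leading term). Since the cycle edges in \cref{defn:randomG_build} place $s_{i-1}$ and $s_{i+1}$ in the neighborhood of $s_i$ inside $\tilde{\SG}$, the vertex set of $\mathcal{G}_{\mathrm{samp}}$ contains the entire rim $s_1, s_2, \ldots, s_n$ together with all consecutive cycle edges. Step~9 then locates $t$ using $\Or(|\mathcal{G}_{\mathrm{samp}}|) = \Or(mn\log n)$ evaluations of $f_t$, after which the Breadth First Search of Step~10 outputs the \st path in classical time $\Or(mn\log n)$.

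Summing up: the $N_s = \Or(mn\log n)$ iterations consume $\Or(m^{2.5}n^{3.5}\polylog{mn})$ queries to the quantum adjacency list oracle, while the $f_t$ queries and classical runtime are each $\Or(mn\log n)$; the qubit and gate counts inherit the $\poly(m,n,\log \mathsf{N}_{\mathrm{aux}})$ scaling from \cref{lem:generate_single_sample}. The main obstacle is bookkeeping: one must verify that the parameter choices $\epsilon,\epsilon',\epsilon_A$ inside \cref{lem:generate_single_sample} are small enough that the additive deviations $\epsilon'$ and $(\epsilon+\varsigma)\sqrt{mn}$ in the measurement distribution do not eat into the $\Omega(1/(mn))$ per-coupon probability, so that the coupon-collector tail remains the dominant contribution to the $1/3$ failure budget.
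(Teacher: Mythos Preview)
Your proposal is correct and follows essentially the same approach as the paper's proof: invoke \cref{lem:generate_single_sample} for the per-sample guarantee, apply a coupon-collector/union-bound argument with $N_s=\Theta(mn\log n)$ to collect all odd-indexed roots, pick up the even-indexed roots via the cycle edges in the neighborhood step, and finish with BFS on $\mathcal{G}_{\mathrm{samp}}$. Your closing caveat about tracking $\epsilon,\epsilon',\epsilon_A$ is unnecessary here, since \cref{lem:generate_single_sample} already absorbs those choices into its stated $\Omega(1/(mn))$ probability and $\Or(m^{1.5}n^{2.5}\polylog{mn})$ cost; no further bookkeeping is required at this level.
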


\begin{proof}
By Lemma~\ref{lem:generate_single_sample}, we can run the procedure described above $N_s$ times to generate $N_s$ independent bit-strings, each of which corresponds to a vertex of $\tilde{\SG}$, and is equal to $S_{i,1}$ with probability at least $\mathfrak{p}=\Omega(1/(mn))$ for odd $i$.
From these vertices, we first find all their neighbors in the graph $\tilde{\SG}$. There are at most $dN_s$ of them, and finding all the neighbors take at most $dN_s=\Or(mn\log(n))$ queries to the adjacency list oracle.
From these vertices and their neighbors, we generate a subgraph $\mathcal{G}_{\mathrm{samp}}$ of the original graph $\mathcal{G}$ with them as vertices. 

For any odd $i$, the probability of $S_{i,1}$ not appearing among the samples is $(1-\mathfrak{p})^{N_s}$. By the union bound, the probability of at least one of the even vertices not showing up among the samples is at most
$
    (n/2)(1-\mathfrak{p})^{N_s}.
$
Therefore, we can choose $N_s= \Theta(mn\log(n))$ to ensure that we obtain all $S_{i,1}$ for odd $1\leq i\leq n$ with probability at least $2/3$. Because all vertices $S_{i,1}$ with even $i$ are neighbors of those with odd $i$, the constructed subgraph $\mathcal{G}_{\mathrm{samp}}$  contains the target \st path, i.e., $S_{i,1}$ for all $1\leq i\leq n$, with probability at least $2/3$.

It now remains to find the \st path from the subgraph  $\mathcal{G}_{\mathrm{samp}}$, which contains at most $dN_s=\Or(mn\log(n))$ vertices. Therefore a Breadth First Search can find the \st path in time $\Or(mn\log(n))$.

In total we need to use the unitary circuit in Lemma~\ref{lem:generate_single_sample} $\Or(mn\log(n))$ times, and therefore all cost metrics need to be multiplied by this factor, except for the number of qubits which remain the same.

\end{proof}

\section{The classical lower bound}
\label{sec:the_classical_lower_bound}



In this section, we show that no efficient classical algorithm can find an \st path in the enlarged sunflower graph $\tilde{\SG}=(\tilde{\SV},\SE)$ defined in \cref{defn:enlarged_sunflower_graph} when $\mathsf{N}_{\mathrm{aux}}=\Omega(\mathsf{N}_{\SG}^2)=n^2(d-1)^{2m-2}$. In fact,  we show an even stronger result that no efficient classical algorithm can find the exit $t$ efficiently in this regular sunflower graph.
We follow the lower bound proof in \cite{childs2003ExpSpeedupQW}, which has also been used to prove lower bound in similar contexts \cite{li2023exponential,li2023multidimensional}.    
Below we briefly summarize the high level ideas of the proof in \cite{childs2003ExpSpeedupQW}. 

First, by considering an enlarged graph $\tilde{G}$ instead of the regular sunflower graph $\SG$ defined in \cref{defn:randomG_build}, one can essentially restrict classical algorithm to explore the graph contiguously. This is because if the classical algorithm queries the oracle with a vertex label that has not been previously returned by the oracle, this label will be exponentially unlikely (with $\mathsf{N}_{\SG}/(\mathsf{N}_{\mathrm{aux}}+\mathsf{N}_{\SG})=\Or(n^{-1}(d-1)^{-m+1})$) to correspond to a vertex in the graph $\SG$. This means that a classical algorithm can only explore the graph $\SG$ by generating a connected subgraph of it. Next, one can show that before encountering a cycle in $\SG$, the behavior of the classical algorithm can be described by a \emph{random embedding} $\pi$ of a rooted $(d-1)$-ary tree $\mathcal{T}$ into $\SG$ as defined in \cite{childs2003ExpSpeedupQW}, which we restate with slight modification here:
\begin{definition}[Random embedding]
    \label{defn:random_embedding}
    Let $\mathcal{T}$ be a $(d-1)$-ary tree. A mapping $\pi$ from $\mathcal{T}$ to $\SG$ is a random embedding of $\mathcal{T}$ into $\SG$ if it satisfies the following:
    \begin{enumerate}
        \item The root of $\mathcal{T}$ is mapped to $s$.
        \item Let $v_1,\ldots,v_{d-1}$ be the children of the root, and $a_1,\ldots,a_{d-1}$ be the neighbors of $s$. Then $(\pi(v_1),\ldots,\pi(v_{d-1}))$ is an unbiased random permutation of $(a_1,\ldots,a_{d-1})$.
        \item Let $v$ be an internal vertex in $\mathcal{T}$ other than the root, let $v_1,\ldots,v_{d-1}$ be its children and $u$ its parent, and let $a_1,\ldots,a_{d-1}$ be the neighbors of $\pi(v)$ except for $\pi(u)$. Then $(\pi(v_1),\ldots,\pi(v_{d-1}))$ is an unbiased random permutation of $(a_1,\ldots,a_{d-1})$. 
    \end{enumerate}
    We also say that the embedding $\pi$ is \emph{proper} if it is injective, and say it \emph{exits} if $t\in\pi(\mathcal{T})$.
\end{definition}
It is therefore sufficient to show that a random embedding has to be exponentially sized in order to find a cycle or the exit $t$. For this we have the following lemma:

\begin{lemma}
    Let $\mathcal{T}$ be a rooted $d-1$-ary tree with $q$ vertices, with $q=(d-1)^{cn}$ and $c<1/4$. Assuming $m=n+1$, then a random embedding $\pi$ of this tree into $\SG$ as defined in Definition~\ref{defn:random_embedding} is improper or exits with probability at most $\Or((d-1)^{-(1/2-2c)n})$.
\end{lemma}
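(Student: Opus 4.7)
The plan is to adapt the random-embedding argument of Childs--Cleve--Deotto--Farhi--Gutmann--Spielman to the sunflower geometry, handling the extra complication introduced by the cycle of roots. The starting point is the principle of deferred decisions: I would reveal the random matchings defining $\SG$ lazily as the embedding $\pi$ probes them, so that the first time $\pi$ crosses a new matching edge from a leaf of some $\mathcal{T}_i$, the endpoint is a uniformly random unused leaf of the adjacent tree. Conditioned on the event that the first $k-1$ embedded vertices form a proper non-exiting embedding, the $k$-th image is then determined by independent uniform choices at the matching layer, which is the only randomness the classical algorithm can actually exploit.

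Next, I would split the bad event into (a) $\pi$ is improper and (b) $\pi$ exits. For (b), there are two routes to $t=s_{n/2+1}$. The root-cycle route requires at least $n/2$ consecutive cycle steps; at each root $s_i$ the proportion of neighbors lying on the cycle is $O(1/d)$, contributing at most $d^{-n/2}$ to the exit probability. The leaf route requires $\pi$ to descend the full depth $m=n+1$ of some $\mathcal{T}_i$, cross at least one matching edge, and ascend to $s_{n/2+1}$; under the deferred-randomness model the probability that a given matching crossing lands at a leaf under $\mathcal{T}_{n/2+1}$ is $O(1/n)$, and reaching $s_{n/2+1}$ from that leaf then requires a specific choice at each of the $m-1$ ascending steps, giving a contribution negligible compared to (a).

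For (a), I would use a birthday-style union bound. Each time $\pi$ traverses a new matching edge, the endpoint is uniform among $\Theta((d-1)^{m-1})$ unused leaves on the target side, so it collides with one of the at most $q-1$ previously embedded vertices with probability $O(q/(d-1)^{m-1})$. Summing over at most $q$ such crossings gives $O(q^2/(d-1)^{m-1})$. Collisions inside a single $\mathcal{T}_i$ are impossible because $\mathcal{T}_i$ is a tree, and collisions via a full loop around the root cycle cost at least $d^{-n}$, again negligible.

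The main obstacle will be keeping the bookkeeping of ``what is the first time $\pi$ crosses a fresh matching edge'' clean enough that the birthday bound is really $O(q/(d-1)^{m-1})$ per crossing rather than something worse; in particular, I need to ensure that conditioning on the partial history up to step $k-1$ does not bias the distribution of the new matching endpoint away from uniform, which is exactly what the deferred-decisions coupling buys me. Combining the two cases via a union bound, and using $m=n+1$ together with $q=(d-1)^{cn}$, yields a total bad-event probability of at most
\[
O\!\bigl((d-1)^{2cn-(m-1)}\bigr) + d^{-n/2} \;=\; O\!\bigl((d-1)^{-(1-2c)n}\bigr) + d^{-n/2},
\]
which for $c<1/4$ is comfortably bounded by $O((d-1)^{-(1/2-2c)n})$ as claimed.
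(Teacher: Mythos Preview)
The plan follows the right outline but has a real gap in the improper-embedding analysis. You assert that ``collisions inside a single $\mathcal{T}_i$ are impossible because $\mathcal{T}_i$ is a tree,'' and then only apply the birthday bound at the matching layer. That claim is false in the sense you need it. A single non-backtracking walk cannot self-intersect inside a tree, but two \emph{different} branches of the embedding can: once one branch of $\pi$ has already visited an internal vertex $v$ at level $j$ of $\mathcal{T}_i$, another branch can later enter $\mathcal{T}_i$ at a fresh leaf $w$ via a matching edge and then walk downward inside $\mathcal{T}_i$ until it hits $v$. The key point is that the fraction of leaves $w$ lying under a fixed level-$2$ vertex $v$ is $1/(d-2)$, a constant, so ``$w$ is a descendant of a previously visited low-level vertex'' is not a rare event, and your $O(q/(d-1)^{m-1})$ per-crossing bound does not control it.

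This is precisely why the paper obtains only $O\bigl((d-1)^{-(1/2-2c)n}\bigr)$ rather than your stronger $O\bigl((d-1)^{-(1-2c)n}\bigr)$. The paper splits the improper case further: either the cycle in $\pi(\mathcal{T})$ reaches level $\leq n/2$ (which forces some root-to-leaf path in $\mathcal{T}$ to make $\geq n/2$ consecutive root-ward steps in $\SG$, probability $(d-1)^{-n/2}$ per path), or it stays in the upper half of the trees, in which case the paper coarsens the leaves into height-$n/2$ subtrees rooted at level $n/2+1$ and applies a birthday bound at the \emph{subtree} level (each matching crossing lands in a uniformly random subtree out of $\Theta((d-1)^{n/2})$ many, and two paths ending in the same subtree is the collision proxy). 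Both cases give $O((d-1)^{-n/2})$ per pair $(a,b)$ in $\mathcal{T}$, and the union bound over $\binom{q}{2}$ pairs yields the stated $O((d-1)^{-(1/2-2c)n})$. Your leaf-level birthday argument would need an additional step bounding the probability that the subsequent walk from $w$ descends far enough to meet a previously visited internal vertex; without that step the claimed $O((d-1)^{-(1-2c)n})$ is not established, and in particular your final displayed bound does not follow from what you wrote.
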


\begin{proof}
    First, we note that in order for $\pi$ to exit, there must exist a path in $\mathcal{T}$ leading from root to leaves that, when mapped to $\SG$ through $\pi$, moves right in $\SG$ along the bottom path for $n-1$ consecutive steps, or moves downward from the top part of the graph to the bottom for $m-1$ consecutive steps. Both involve probability at most $\Or((d-1)^{-n})$. Because there are $\binom{q}{2}=\Or(q^2)$ paths in $\mathcal{T}$ the probability of it exiting is therefore $\Or((d-1)^{-(1-2c)n})$.

    Next we consider the probability of the embedding being improper, i.e., there existing vertices $a\neq b$ of $\mathcal{T}$ such that $\pi(a)=\pi(b)$, which also means the presence of a cycle in the subgraph $\pi(\mathcal{T})$ of $\SG$. 
    Note that with the exception of the cycle at the bottom of the graph consisting of $S_{1,1},S_{2,1},\ldots,S_{n,1}$, a cycle must involve topmost level, the only part of the graph aside from the bottom cycle that does not consist of trees.
    We can exclude the bottom cycle from our consideration since finding it already involves finding the exit $t$, which we have established is unlikely in a random embedding. Now we will focus on a cycle that passes through the topmost level.
    We will show that in such a cycle, it is very unlikely for lower levels $1,2\ldots,n/2$ of $\SG$ to be involved in the cycle in $\pi(\mathcal{T})$, because that would require a path in $\mathcal{T}$ that, if we take the direction away from the root, when mapped through $\pi,$ travels downwards  consecutively for distance at least $n/2+1$. Given the tree structure in the levels from $n/2+1$ to $n$, for each path this happens with probability most $(d-1)^{-n/2}$. Because there are $\binom{q}{2}$ paths in $\mathcal{T}$, the probability of having a cycle in $\pi(\mathcal{T})$ that involves a vertex on levels $1,2\ldots,n/2$ is therefore at most
    \begin{equation}
    \label{eq:long_path_downward_probability_upper_bound}
        \binom{q}{2}(d-1)^{-n/2} = \Or((d-1)^{-(1/2-2c)n}).
    \end{equation}
    Hereafter we will only consider the scenario where the cycle in $\pi(\mathcal{T})$ does not involve a vertex on levels $1,2\ldots,n/2$.

    We consider each of the $\binom{q}{2}=\Or(q^2)$ pairs of vertices $a$ and $b$ in $\mathcal{T}$ separately. We will show that it is exponentially unlikely for $\pi(a)=\pi(b)$. We will introduce some notations: let $P$ be the path linking $a$ and $b$, and let $c$ be the vertex on this path that is closest to the root. Let $P_1$ be the path leading from $c$ to $a$, and $P_2$ the path leading from $c$ to $b$.

    We then divide all vertices in the supervertices $S_{i,n/2+1},S_{i,n/2+2}, \ldots, S_{i,n+1}$, i.e., all vertices in $\mathcal{T}_i$ that are at least $n/2+1$-distance from the root of $\mathcal{T}_i$ (defined in Definition~\ref{defn:randomG_build} and illustrated in Figure~\ref{fig:sunflowergraphd=3}), into $(d-2)(d-1)^{n/2}$ complete $d$-ary subtrees of height $n/2$, for $i=1,2,\ldots,n$. 


    Since if $\pi(a)=\pi(b)$ then $\pi(a)$ will be involved in a cycle in $\mathcal{T}$, this tells us that we only need to consider the situation where $\pi(a)$ is among these complete $d$-ary subtrees. Otherwise we will have the situation of a cycle going to lower levels, which we have established is exponentially unlikely. Then we consider two possible scenarios: $c$ being an ancestor of $a$ and $b$, or $c=a$ or $c=b$. We know that one of these two possibilities must be true because $c$ is at least as close to the root as $a$ and $b$. 
    
    In the first scenario where $c$ is an ancestor of $a$ and $b$, the path $\pi(P_1)$ must visit a sequence of these subtrees, which we denote by $S_{k_1},S_{k_2},\ldots S_{k_u}$, and similarly $\pi(P_2)$ visits $S_{l_1},S_{l_2},\ldots S_{l_v}$.  For any choice of $S_{k_1},S_{k_2},\ldots S_{k_{u-1}}$, suppose that $S_{k_{u-1}}$ is in the tree $\mathcal{T}_i$ (as defined in \cref{defn:randomG_build}), the next subtree $S_{k_u}$ will be uniformly randomly chosen among the subtrees that are within the adjacent trees $\mathcal{T}_{i-1\pmod{n}}$ and $\mathcal{T}_{i+1\pmod{n}}$ due to the random connectivity between the trees $\{\mathcal{T}_{i'}\}$ and the random embedding. There are $2(d-2)(d-1)^{n/2}$ such subtrees, and therefore the probability of $S_{k_u}$ being any one of them is $1/(2(d-2)(d-1)^{n/2})$. The same is also true for $S_{l_v}$. Because of the Markovian nature of the random embedding, $S_{k_u}$ and $S_{l_v}$ are independent of each other, and therefore the probability of $S_{k_u}=S_{l_v}$ is at most
    \begin{equation}
    \label{eq:subtrees_meet_probability_upper_bound}
        2(d-2)(d-1)^{n/2}\times \frac{1}{(2(d-2)(d-1)^{n/2})^2}=\frac{1}{2(d-2)(d-1)^{n/2}}.
    \end{equation}

    In the second scenario, without loss of generality we assume $c=a$. Then we denote the subtrees visited on the path from $c$ to $b$ as $S_{k_1},S_{k_2},\ldots S_{k_u}$. Conditional on $S_{k_1},S_{k_2},\ldots S_{k_{u-1}}$, again because the random embedding is Markovian, $S_{k_u}$ can be any one of the subtrees with probability at most $1/(2(d-2)(d-1)^{n/2})$. $S_{k_1}$ is one of these trees, and therefore \eqref{eq:subtrees_meet_probability_upper_bound} is also an upper bound for the probability of $S_{k_u}=S_{k_1}$.

    From the probability upper bound \eqref{eq:subtrees_meet_probability_upper_bound}, we can see that for any pair of $a$ and $b$ the probability of $\pi(a)=\pi(b)$ is at most $\Or((d-1)^{-n/2})$, unless there is a cycle going through levels $1,2,\ldots, n/2$. There are $\binom{q}{2}$ pairs of $a$ and $b$, and therefore the probability of $\pi$ being an improper embedding is at most
    \begin{equation}
        \binom{q}{2}\times \Or((d-1)^{-n/2}) = \Or((d-1)^{-(1/2-2c)n}),
    \end{equation}
    excluding the scenario of a cycle going through lower levels. The probability of such a cycle existing is upper bound by \eqref{eq:long_path_downward_probability_upper_bound}, and therefore the total probability of $\pi$ being an improper embedding is at most $\Or((d-1)^{-(1/2-2c)n})$ accounting for all situations.
\end{proof}

The above reasoning, which follows the same general idea as the classical lower bound proof in \cite{childs2003ExpSpeedupQW}, yields the following classical lower bound as stated in the following theorem:
\begin{theorem}
\label{thm:classical_lower_bound}
    For the enlarged sunflower graph $\SG$ defined in Definition~\ref{defn:enlarged_sunflower_graph}, with $m=n+1$, $\mathsf{N}_{\mathrm{aux}}=\Omega(n^2(d-1)^{2m-2})$, using $q=(d-1)^{cn}$ queries of the adjacency list oracle $O_{\SG}$ where $c<1/4$, the probability of a classical randomized algorithm finding a cycle or finding the vertex $t$ is at most $\Or((d-1)^{-(1/2-2c)n})$. 
\end{theorem}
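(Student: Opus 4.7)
The plan is to follow the classical lower bound template established in \cite{childs2003ExpSpeedupQW} and reused in \cite{li2023exponential,li2023multidimensional}. The proof reduces an arbitrary classical randomized algorithm to a random embedding of a $(d-1)$-ary tree into $\SG$ and then applies the preceding lemma. There are three steps: show that guessing vertex labels is useless because of the padding $\mathsf{N}_{\mathrm{aux}}$; show that any contiguous exploration of $\SG$ is distributionally identical to a random embedding of Definition~\ref{defn:random_embedding}; and combine the two with the lemma.

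First, I would establish the \emph{guessing is useless} step. Because vertex labels in $\tilde{\SG}$ are drawn uniformly from a universe of size $\mathsf{N}_{\SG}+\mathsf{N}_{\mathrm{aux}}$ and $\mathsf{N}_{\mathrm{aux}}=\Omega(\mathsf{N}_{\SG}^2)=\Omega(n^2(d-1)^{2m-2})$, any bit-string not previously returned by $O_{\SG,1}$ or $O_{\SG,2}$ corresponds to a vertex of $\SG$ with probability at most $\mathsf{N}_{\SG}/(\mathsf{N}_{\SG}+\mathsf{N}_{\mathrm{aux}})=\Or(1/\mathsf{N}_{\SG})=\Or(n^{-1}(d-1)^{-m+1})$. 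A union bound over $q=(d-1)^{cn}$ queries shows that the probability of any successful guess is at most $\Or((d-1)^{cn-m+1}/n)=\Or((d-1)^{-(1-c)n}/n)$. Since $c<1/4$ implies $1-c>1/2-2c$, this contribution is dominated by $\Or((d-1)^{-(1/2-2c)n})$ and can be absorbed into the final bound.

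Next, I would invoke the \emph{principle of deferred decisions} to reduce the algorithm to a random embedding. Without loss of generality the algorithm is deterministic (worst case over its internal coins). Conditioning on the event of no successful guess, every query is to a vertex the algorithm has already discovered, so exploration is contiguous starting from $s$. The random matchings linking adjacent leaf supervertices in Definition~\ref{defn:randomG_build}, together with the arbitrary vertex labeling of $\tilde{\SG}$, can be revealed lazily: each time a fresh edge is followed, its other endpoint is sampled uniformly from the candidates consistent with what the algorithm has seen. As long as no collision (improper embedding) has occurred and the exit has not been reached, this laziness is faithful to the original distribution on $\SG$ and matches item 3 of Definition~\ref{defn:random_embedding}. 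Thus the algorithm's transcript is equivalent to a prefix, of size at most $q$, of a random embedding of a $(d-1)$-ary tree $\mathcal{T}$ into $\SG$. Finding a cycle is the event that the embedding becomes improper, and finding $t$ is the event that it exits. Applying the preceding lemma with $q=(d-1)^{cn}$ then bounds the combined probability by $\Or((d-1)^{-(1/2-2c)n})$, which is the claimed estimate.

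The main obstacle is making the deferred decisions reduction watertight given the matching structure of $\SG$: because the leaves between adjacent trees $\mathcal{T}_i,\mathcal{T}_{i+1}$ are linked by $(d-1)/2$ \emph{perfect} matchings, later edges are constrained by earlier ones and the sampled child distribution is only \emph{approximately} uniform once many edges have been fixed. One needs to check that for embedded trees of size at most $q=(d-1)^{cn}\ll (d-2)(d-1)^{m-2}$, the deviation from uniformity is negligible compared to the target bound (e.g.\ via a standard coupling to sampling-with-replacement), so that the lemma's assumption of uniformly random permutations of unseen neighbors is valid. This step has already been carried out for similar graph families in \cite{childs2003ExpSpeedupQW,li2023exponential,li2023multidimensional}, and the same argument applies here with only cosmetic changes, so I would give a short lemma formalizing it and then conclude.
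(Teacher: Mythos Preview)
Your proposal is correct and follows essentially the same approach as the paper: the paper simply states that ``the above reasoning, which follows the same general idea as the classical lower bound proof in \cite{childs2003ExpSpeedupQW}, yields the following classical lower bound,'' i.e., it invokes the preceding random-embedding lemma together with the standard guessing-is-useless plus deferred-decisions reduction from \cite{childs2003ExpSpeedupQW}. Your write-up is in fact more detailed than what the paper provides, and your remark about the perfect-matching constraint is a fair point that the paper does not address explicitly either (it is implicitly absorbed into the appeal to \cite{childs2003ExpSpeedupQW,li2023exponential,li2023multidimensional}).
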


\bibliographystyle{alpha}
\bibliography{cqw}

\appendix

\section{Simulating the sparse access oracle}
\label{sec:block_encoding_construction_from_adjacency_list}



In this section we will describe how to simulate the sparse access oracle using the adjacency list oracle, and thereby prove \cref{lem:adj-simu-sparse}.

\begin{proof}[Proof of \cref{lem:adj-simu-sparse}]

Recall from \cref{defn:sparse_access_oracle} that the oracles we want to simulate are
\begin{equation}
    \label{eq:sparse_access_oracle}
    O^Q_{\mathrm{loc}}:\ket{v,k}\mapsto \ket{v,O_{G,1}(v,k)},\quad O^Q_{\mathrm{val}}:\ket{v,v',0}\mapsto\ket{v,v',A_{v,v'}},
\end{equation}
where have used the fact that $r_{vk}$ in \cref{defn:sparse_access_oracle} is nothing other than $O_{G,1}(v,k)$ from \cref{defn:classical_adjacency_list_oracle}.
Note that \cite[Lemma~48]{gilyen2018QSingValTransfArXiv} requires two separate oracles for accessing the matrix elements by row and by column, but here we only need one $O^Q_{\mathrm{loc}}$ because $A$ is a real symmetric matrix. $O^Q_{\mathrm{val}}$ is used to access the value of the matrix entry.

Note that $O^{Q}_{G,2}$ can be directly used as $O^Q_{\mathrm{val}}$.
We will then construct $O^Q_{\mathrm{loc}}$ from the oracle $O^{Q}_{G,1}$. This essentially means uncomputing the index $k$. We will need to use the following fact: for every $v\in V$, the function $O_{G,1}(v,\cdot):k\mapsto O_{G,1}(v,k)$ is injective. This can be easily verified by checking all combinations of $k,k'$ to see that $O_{G,1}(v,k)=O_{G,1}(v,k')$ implies $k=k'$.

As a first step, we construct a circuit $U^{[1]}$ such that 
\[
U^{[1]}\ket{v,O_{G,1}(v,k),c} = \ket{v,O_{G,1}(v,k),c\oplus k}.
\]
In this construction we will require a circuit that performs
\[
U^{[2]}\ket{x_0,x_1,\cdots,x_{d-1},c} = \ket{x_0,x_1,\cdots,x_{d-1},c\oplus k},
\]
where $k=\min\{j\in\{0,1,\cdots,d-1\}:x_j=0\}$, i.e., $k$ is the index of the first $x_j$ to be $0$. This circuit implements a classical invertible function that can be efficiently computed, and therefore can be implemented using $\Or(\poly(d))$ elementary gates. With $U^{[2]}$, we can now construct $U^{[1]}$ as follows:
\[
\begin{aligned}
    &\ket{v}_{\gamma}\ket{O_{G,1}(v,k)}_{\delta}\bigotimes_{j=0}^{d-1}(\ket{0}_{\alpha_j}\ket{0}_{\beta{j}})\ket{c}_{\epsilon} \\
    &\mapsto \ket{v}_{\gamma}\ket{O_{G,1}(v,k)}_{\delta}\bigotimes_{j=0}^{d-1}(\ket{j}_{\alpha_j}\ket{0}_{\beta{j}}) \ket{c}_{\epsilon} \\
    &\xmapsto{O^{Q}_{G,1} \text{ on } \gamma,\alpha_j,\beta_j} \ket{v}_{\gamma}\ket{O_{G,1}(v,k)}_{\delta}\bigotimes_{j=0}^{d-1}(\ket{j}_{\alpha_j}\ket{O_{G,1}(v,j)}_{\beta{j}})\ket{c}_{\epsilon}  \\
    &\xmapsto{\mathrm{CNOT} \text{ on } \delta,\beta_j} \ket{v}_{\gamma}\ket{O_{G,1}(v,k)}_{\delta}\bigotimes_{j=0}^{d-1}(\ket{j}_{\alpha_j}\ket{O_{G,1}(v,k)\oplus O_{G,1}(v,j)}_{\beta{j}})\ket{c}_{\epsilon}  \\
    &\xmapsto{U^{[2]} \text{ on all } \beta_j, \epsilon} \ket{v}_{\gamma}\ket{O_{G,1}(v,k)}_{\delta}\bigotimes_{j=0}^{d-1}(\ket{j}_{\alpha_j}\ket{O_{G,1}(v,k)\oplus O_{G,1}(v,j)}_{\beta{j}})\ket{c\oplus k}_{\epsilon} \\
    &\xmapsto{\mathrm{CNOT} \text{ on } \delta,\beta_j} \ket{v}_{\gamma}\ket{O_{G,1}(v,k)}_{\delta}\bigotimes_{j=0}^{d-1}(\ket{j}_{\alpha_j}\ket{O_{G,1}(v,j)})\ket{c\oplus k}_{\epsilon} \\
    &\xmapsto{(O^{Q}_G)^\dag \text{ on } \gamma,\alpha_j,\beta_j} \ket{v}_{\gamma}\ket{O_{G,1}(v,k)}_{\delta}\bigotimes_{j=0}^{d-1}(\ket{j}_{\alpha_j}\ket{0}_{\beta{j}})\ket{c\oplus k}_{\epsilon}  \\
    &\mapsto \ket{v}_{\gamma}\ket{O_{G,1}(v,k)}_{\delta}\bigotimes_{j=0}^{d-1}(\ket{0}_{\alpha_j}\ket{0}_{\beta{j}})\ket{c\oplus k}_{\epsilon}
\end{aligned}
\]
In the above, $U^{[2]}$ can find the value $k$ because $O_{G,1}(v,k)\oplus O_{G,1}(v,j)=0$ only when $j=k$ due to the injectivity of $O_{G,1}(v,\cdot)$ that we have mentioned. Discarding the ancilla qubits in the $\alpha_j$ and $\beta_j$ registers, which have all been returned to the 0 state, we then have the desired unitary $U^{[1]}$.

Next, we will use $U^{[1]}$ to construct $O^Q_{\mathrm{loc}}$ as defined in \eqref{eq:sparse_access_oracle}.
\[
\begin{aligned}
    \ket{v,k,0} \xmapsto{O^{Q}_{G,1}} \ket{v,k,O_{G,1}(v,k)} \xmapsto{\mathrm{SWAP}} \ket{v,O_{G,1}(v,k),k} \xmapsto{U^{[1]}} \ket{v,O_{G,1}(v,k),0}.
\end{aligned}
\]
Therefore we have the desired $O^Q_{\mathrm{loc}}$ after discarding the ancilla qubits. In the construction we have used $d+1$ queries to $O^{Q}_{G,1}$ and $d$ queries to its inverse.
\end{proof}

\section{Expansion properties of a random bipartite graph} \label{sec:apppendmildbipar}

We will first present a technical lemma that we need to use.
\begin{lemma}[ ]\label{lem:logbino} Let $1 \leq s\leq \frac{2}{3}N$ and $\delta=1/(2\log N)$ 
    Let $p_s= \frac{(N-s)!\left((1+\delta)s)!\right)^{2}}{s!(N-\delta s)! \left((\delta s)!\right)^{3} }$, then 
    \[
   \log p_s \lesssim \log (3/2)(\delta-1)s.
    \]
\end{lemma}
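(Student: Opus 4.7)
The plan is to apply Stirling's formula to each of the six factorials in the definition of $p_s$, observe that the linear-in-$n$ contributions cancel exactly, and reduce the remaining $n\log n$ terms to a statement about the binary entropy function restricted to $\lambda := s/N \in (0, 2/3]$.

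First I would write $\log(n!) = n\log n - n + O(\log n)$ and substitute into the expression for $\log p_s$. The coefficients of the $-n$ contributions sum to
\[
(N-s) + 2(1+\delta)s - s - (N-\delta s) - 3\delta s = 0,
\]
so those linear terms cancel. After substituting $\lambda = s/N$ and collecting the remaining $n\log n$ terms, a direct calculation gives
\[
\log p_s = N\Psi(\lambda,\delta) + 2(1+\delta)s\log(1+\delta) + 3\delta s\log(1/\delta) + O(\log N),
\]
where $\Psi(\lambda,\delta) := (1-\delta)\lambda\log\lambda + (1-\lambda)\log(1-\lambda) - (1-\delta\lambda)\log(1-\delta\lambda)$. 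At $\delta = 0$ this is $-H(\lambda)$, the negative binary entropy (in nats); a first-order expansion in $\delta$ gives $\Psi(\lambda,\delta) = -H(\lambda) + \delta\lambda(1 - \log\lambda) + O((\delta\lambda)^2)$, so $N\Psi(\lambda,\delta) = -NH(\lambda) + O\bigl(\delta s \log(1/\lambda)\bigr)$.

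The decisive step will be the entropy inequality $H(\lambda)/\lambda \ge \log(3\sqrt{3}/2) = \log(3/2) + \tfrac{1}{2}\log 3$ for every $\lambda \in (0, 2/3]$. To prove this, I would compute $\bigl(H(\lambda)/\lambda\bigr)' = \log(1-\lambda)/\lambda^2 < 0$, so $H(\lambda)/\lambda$ is strictly decreasing on $(0,1)$ and therefore attains its minimum over $(0, 2/3]$ at the right endpoint $\lambda = 2/3$, with value $\log(3\sqrt{3}/2)$. Hence $-NH(\lambda) \le -s\log(3/2) - \tfrac{s}{2}\log 3$, and the surplus $-\tfrac{s}{2}\log 3$ strictly beats (in magnitude) both the positive correction $\delta s\log(3/2)$ arising from rewriting $\log(3/2)(\delta-1)s = -(1-\delta)s\log(3/2)$ and the lower-order $O(\delta s\log(1/\delta))$ and $O(\log N)$ contributions, provided $\delta = 1/(2\log N)$ and $N$ is large.

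The main obstacle will be tracking the $\delta$-dependent correction uniformly in $\lambda$. The tightest case is $\lambda$ near $2/3$, where the slack in the entropy inequality is smallest, but there the correction $\delta\lambda(1-\log\lambda)$ is also modest and the two effects balance in our favor. The delicate bookkeeping is rather in the regime of small $\lambda$, where $1-\log\lambda$ can be as large as $\log N$ and the correction reaches $O(\delta s \log N) = O(s)$; here one uses that $-NH(\lambda)$ itself is already of order $-s\log(1/\lambda)$, providing ample room. Verifying that the net margin $\tfrac{1}{2}\log 3 - \delta\log(3/2) - o(1)$ stays positive uniformly over $1 \le s \le 2N/3$ for $\delta = 1/(2\log N)$ then completes the argument.
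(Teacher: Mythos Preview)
The paper does not actually prove this lemma; it omits the proof and remarks that it ``follows almost exactly the analysis of the proof in \cite[Theorem 4.1.1]{kowalski2019introduction}.'' Your plan---Stirling on each factorial, observing that the linear contributions cancel, rewriting the $n\log n$ terms via $\lambda=s/N$ as an entropy-type expression, and then using the monotonicity of $H(\lambda)/\lambda$ (via $(H(\lambda)/\lambda)'=\log(1-\lambda)/\lambda^2<0$) to pin the worst case at $\lambda=2/3$---is exactly the standard route that Kowalski's argument takes, and your bookkeeping of the $\delta$-dependent corrections (in particular the $3\delta s\log(1/\delta)=O(s\log\log N/\log N)$ term and the small-$\lambda$ regime where the $-s\log(1/\lambda)$ from $-NH(\lambda)$ absorbs the $O(\delta s\log(1/\lambda))$ correction) is correct.
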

We omit the proof of this lemma because it follows almost exactly the analysis of the proof in \cite[Theorem 4.1.1]{kowalski2019introduction}. 

Next, we show $D$-regular random bipartite graph is a mild expander graph as stated in Lemma~\ref{lem:biparmildexp}.  We restate the lemma here:

\begin{lemma-non}[Lemma~\ref{lem:biparmildexp} in Section~\ref{sec:randomgraph}]
    Let $L$ and $R$ be two sets of vertices with $|L|=|R|=N$. Link $L$ and $R$ through $D\geq 3$ random perfect matchings. Denote the resulting graph by $G_B = (V_B, E_B)$, and let $\chi=2/3$, $\delta =1/(2\log N)$. Then $G_B$ has the following expansion properties with probability $1-\Theta(1/N^{2\log (3/2)(1-\delta)})$:
    \begin{itemize}
        \item[(i)] For any subset $L'\subseteq L$ and $|L'|\leq \chi N$, we have $|\N(L')\backslash L'| \geq (1+\delta)|L'|$, where $\N(L')$ denotes the neighborhood of $L'$ as defined in Definition~\ref{defn:neighborhood}.
        \item[(ii)] For any subset $T\subseteq L\cup R$ and $|T| \leq  N$, we have  $|\N(T) \backslash T| \geq \frac{\delta}{2} |T|$. 
    \end{itemize}
    In other words, with probability $1-\Theta(1/N^{2\log (3/2)(1-\delta)})$, this bipartite regular graph is a mild expander graph.
\end{lemma-non}

 \begin{proof}
     To prove (i), it suffices to show that the probability of there existing $L'\subset L$ such that $|L'|\leq \chi N$ and $|\N(L')\backslash L'| < (1+\delta)|L'|$ is small. For simplicity and slight abuse of the notation, in the rest of the proof, we use $\delta|L|$ to represent $\lfloor \delta |L|\rfloor $. 

     In one of the random perfect matchings, let the neighbors of $L'$ in $R$ be $I_1$, then $|I_1| = |L'|$. Therefore $|\N(L')\backslash L'| \geq |L'|$.
     If $|\N(L')\backslash L'| \leq (1+\delta)|L'|$ with the $D$ random perfect matchings, it is necessary that at most $\delta |L'|$ of the values of the other $D-1$ random perfect matchings are outside the set $I_1$. Each random perfect match is chosen independently with probability $1/N!$. For a specific set $I_{\delta}$ of size $\delta|L'|$ , the probability that $ \N(L')\backslash L' \subseteq I_1\cup I_{\delta}$ is bounded as follows:
    \[
    \begin{aligned}
        Pr[ \N(L')\backslash L' \subseteq I_1\cup I_{\delta}] &\leq \left(\frac{((1+\delta)|L'|)((1+\delta)|L'|-1)\cdots ((1+\delta)|L'|-|L'|+1)!}{N!}\right)^{D-1} \\
        &=\left(\frac{((1+\delta)|L'|)!}{(\delta|L'|)!N!}\right)^{D-1}.
    \end{aligned}
     \]

Note that this probability decreases with $D$.
For the choice of the additional set of $I_{\delta}$ in $R$ of size $\delta |L'|$, there are in total $\binom{N}{ \delta |L'| }$ options. Therefore,

     \begin{align*}
    \Pr[\text{There exists $L' \subseteq L$ with $|\N(L')\backslash L'| < (1+\delta)|L'|$ }]  &\leq \sum_{L'\subset L, |L'|\leq \chi N} \binom{N}{\delta|L'|} \Pr[\N(L') \subseteq I_1\cup I_{\delta}]  \\
    &= \sum_{s=1}^{\chi N}\binom{N}{s}\binom{N}{\delta s} \left(\frac{(N- s)!((1+\delta)s)!}{(\delta s)! N! }\right)^{D-1}\\
    &= \sum_{s=1}^{\chi N} \left( \frac{(N-s)!}{N!}\right)^{D-3} \frac{(N-s)!\left((1+\delta)s)!\right)^{D-1}}{s!(N-\delta s)! \left((\delta s)!\right)^{D} }\\
    & \leq \sum_{s=1}^{\chi N} \frac{(N-s)!\left((1+\delta)s)!\right)^{2}}{s!(N-\delta s)! \left((\delta s)!\right)^{3} }.
    \end{align*}

 To upper bound the probability, we will bound the value of each term $p_s= \frac{(N-s)!\left((1+\delta)s)!\right)^{2}}{s!(N-\delta s)! \left((\delta s)!\right)^{3} }$. Let $\delta = 1/ (2\log N)$, we consider the following two cases of $s$, 
\begin{enumerate}
    \item  $1 \leq  s\leq 2 \log N$, we have $p_s = \frac{(N-s)!s!}{N!} = \frac{s (s-1)\cdots 2\cdot 1}{N (N-1)\cdots (N-s+1)} $ 
    and 
    \[\sum_{s=1}^{2\log N} p_s= \Theta(1/N).\]
    \item $2 \log N \leq  s\leq \chi N$, using the Stirling Formula $\log (k!)=k\log k - k+ \frac{1}{2}\log (2\pi k)+ O(1/k)$ and following Lemma \ref{lem:logbino}, we know that $\log p_s \lesssim \log (3/2)(\delta-1)s$. That is $p_s \approx 2^{-\log (3/2)(1-\delta)s} = a ^s$, where $ a=2^{-\log (3/2)(1-\delta)} <1 $ is a constant. 
    Thus  \[ \sum_{s=2\log N}^{ \chi N} a^s= a^{2\log N}\left(\frac{1-a^{\chi N- 2\log N+1}}{1-a}\right)=\Theta(N^{2\log a})= \Theta(1/N^{2\log (3/2)(1-\delta)}) .\]
\end{enumerate}
Combine the two cases, we have 
\[
\Pr[\text{There exists $L' \subseteq L$ with $|\N(L')\backslash L'| < (1+\delta)|L'|$ }] =\Theta(1/N^{2\log (3/2)(1-\delta)}).
\]
In other words, with probability at least $1-\Theta(1/N^{2\log (3/2)(1-\delta)})$, for any subset $L'\subset L$ and $|L'|\leq \chi N$, we have $|\N(L')\backslash L'| \geq (1+\delta)|L'|$.


Next we prove (ii), that is, for any subset $T\subseteq L\cup R$ with $|T|\leq N$, with high probability, we have $\N(T)\geq \delta |T|/2$. 
This provides us with the expansion property of an arbitrary subgraph of $G_B$ of size at most $N$.
Let $T=L' \cup R'$ with $L'=T\cap L$ and $R'=T\cap R$. 
Without loss of generality, assume that $|L'|\geq |R'|$, then $|L'|\geq  |T|/2$. 
\begin{itemize}
    \item If $\chi N \leq |L'|\leq  N$, by the injection property of a perfect matching and $|T|\leq N$, we know that $|\N(L')\backslash L'|\geq |L'|\geq \chi N$ and $|R'|\leq (1-\chi)N$. Therefore, there are at least $|L'|-|R'|\geq (2\chi-1)N$ neighbors of $L'$ out of $T$, that is $|\N(T)\backslash T|\geq (2\chi-1)N$.
    \item If $|L'|\leq \chi N$, we know that $|\N(L')\backslash L'|\geq (1+\delta)|L'|$.
    In addition to the vertices in the set $L'$, there are at most $|R'|$ neighbors of $L'$ that are in $T$. Therefore $L'$ has at least $|L'|-|R'|\geq \delta|L'|\geq \frac{\delta}{2}|T|$ neighbors that are not contained  in $T$. As a result, for any subset $T\subseteq L \cup R$ and $|T|\leq  N$, we have$|\N(T) \backslash T|\geq \frac{\delta}{2} |T|$. In other words, for any subset $T\subseteq L \cup R$, we have 
    \[
    |\N(T) \backslash T|\geq \frac{\delta}{2}  |T|.
    \]
\end{itemize}
Thus, for any subset $T\subseteq L\cup R$ with $|T|\leq N$, with probability $1-1/\Theta(N^{2\log (3/2)(1-\delta)})$, we have $|\N(T)\backslash T|\geq \delta |T|/2$.
\end{proof}

\section{Robust subspace eigenstate filtering}
\label{sec:robust_subspace_eigenstate_filtering}

We will use the quantum singular value transformation (QSVT) technique \cite{gilyen2018QSingValTransf,low2016HamSimQSignProc}, and the minimax filtering polynomial proposed in \cite{lin2019OptimalQEigenstateFiltering} to implement the projection operator to the $0$-eigenspace of the effective Hamiltonian. 
The filtering polynomial takes the form 
\begin{equation}
    \label{eq:minimax_filtering_polynomial_appendix}
    R_{\ell}(x;\delta)=\frac{T_{\ell}\left(-1+2\frac{x^{2}-\delta^{2}}{1-\delta^{2}}\right)}{T_{\ell}\left(-1+2\frac{-\delta^{2}}{1-\delta^{2}}\right)},
\end{equation}
where $T_\ell(x)$ is the $\ell$th Chebysehv polynomial of the first kind, and the degree of the polynomial is therefore $\ell$. This polynomial peaks at $x=0$ and is close to $0$ for $\delta\leq |x|\leq 1$. It is minimax in the sense that the deviation from $0$ for $\delta\leq |x|\leq 1$ is minimal among all polynomials of degree $\ell$ that take value $1$ at $x=0$.

The properties of $R_\ell(x;\delta)$ are stated in \cite[Lemma~2]{lin2019OptimalQEigenstateFiltering}, which we restate here:
\begin{lemma}[Lemma~2 of \cite{lin2019OptimalQEigenstateFiltering}]
\label{lem:minimax_poly}
Let $R_{\ell}(x;\delta)$ be as defined in \eqref{eq:minimax_filtering_polynomial_appendix}. It satisfies the following:
\begin{itemize}
    \item[(i)] $R_{\ell}(x;\delta)$ solves the minimax problem
\[
\underset{p(x)\in\mathbb{P}_{2\ell}[x],p(0)=1}{\mathrm{minimize}}  \max_{x\in\mathcal{D}_{\delta}}|p(x)|.
\]
\item[(ii)] $|R_{\ell}(x;\delta)|\leq2e^{-\sqrt{2}\ell\delta}$ for all $x\in\mathcal{D}_{\delta}$
and $0<\delta\leq1/\sqrt{12}$. Also $R_{\ell}(0;\delta)=1$.
\item[(iii)] $|R_{\ell}(x;\delta)|\leq1$ for all $|x|\leq1$.
\end{itemize}
\end{lemma}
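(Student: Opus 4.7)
The plan is to reduce all three properties to standard facts about Chebyshev polynomials of the first kind via the substitution $y=y(x):=-1+2\tfrac{x^{2}-\delta^{2}}{1-\delta^{2}}$. This substitution maps $x\in[-1,-\delta]\cup[\delta,1]$ bijectively onto $y\in[-1,1]$, and it maps $x=0$ to $y_{0}:=-\tfrac{1+\delta^{2}}{1-\delta^{2}}<-1$. Since $R_{\ell}(x;\delta)$ is a polynomial in $x^{2}$ of degree $2\ell$, it suffices to analyze it as $T_{\ell}(y)/T_{\ell}(y_{0})$, and the proofs of (i), (ii), (iii) all reduce to classical properties of $T_{\ell}$ on the intervals $[-1,1]$ and $(-\infty,-1]$.

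First, for (iii) I would argue as follows. When $\delta\le|x|\le 1$, the image $y(x)$ lies in $[-1,1]$, and hence $|T_{\ell}(y(x))|\le 1$; when $|x|\le\delta$, $y(x)$ lies in $[y_{0},-1]$, and since $|T_{\ell}|$ is monotone increasing on $(-\infty,-1]$ as $y$ decreases, we get $|T_{\ell}(y(x))|\le |T_{\ell}(y_{0})|$. In either case $|R_{\ell}(x;\delta)|\le 1$, provided $|T_{\ell}(y_{0})|\ge 1$, which is automatic since $|y_{0}|\ge 1$. Next, for (ii) I would use the identity $T_{\ell}(\cosh t)=\cosh(\ell t)$ extended to negative arguments via $T_{\ell}(-u)=(-1)^{\ell}T_{\ell}(u)$. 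Setting $\cosh t = -y_{0} = \tfrac{1+\delta^{2}}{1-\delta^{2}}$ and solving $e^{t}+e^{-t}=2(1+\delta^{2})/(1-\delta^{2})$ via the quadratic $u^{2}-\tfrac{2(1+\delta^{2})}{1-\delta^{2}}u+1=0$ with $u=e^{t}$ gives the clean closed form $e^{t}=\tfrac{1+\delta}{1-\delta}$, i.e.\ $t=2\tanh^{-1}(\delta)$. Using $\tanh^{-1}(\delta)\ge\delta$ for $\delta\ge 0$ yields $|T_{\ell}(y_{0})|=\cosh(\ell t)\ge \tfrac{1}{2}e^{2\ell\delta}\ge \tfrac{1}{2}e^{\sqrt{2}\ell\delta}$, which combined with $|T_{\ell}(y(x))|\le|T_{\ell}(y_{0})|$ on $\mathcal{D}_{\delta}$ (by the argument above and $|T_{\ell}|\le 1$ on $[-1,1]$) gives the desired bound $|R_{\ell}(x;\delta)|\le 2e^{-\sqrt{2}\ell\delta}$. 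The evaluation $R_{\ell}(0;\delta)=1$ is immediate from the definition.

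For (i), the minimax claim, I would appeal to the classical minimax characterization of $T_{\ell}$: among polynomials $q$ of degree $\ell$ with a prescribed value $q(y_{0})$ at a fixed point $y_{0}\notin[-1,1]$, the scaled Chebyshev polynomial $y\mapsto T_{\ell}(y)/T_{\ell}(y_{0})$ minimizes $\max_{y\in[-1,1]}|q(y)/q(y_{0})|$. Any even polynomial $p(x)\in\mathbb{P}_{2\ell}[x]$ with $p(0)=1$ can be written as $p(x)=q(y(x))$ for a unique $q\in\mathbb{P}_{\ell}[y]$ with $q(y_{0})=1$; conversely any such $q$ yields a valid $p$. Handling the general (not necessarily even) case requires an extra symmetry argument: by replacing $p(x)$ with $\tfrac{1}{2}(p(x)+p(-x))$, the minimax objective over $\mathcal{D}_{\delta}$ (which is symmetric about $0$) cannot increase, and the constraint $p(0)=1$ is preserved, so it suffices to optimize over even polynomials. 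The bijection with $q$'s then transfers the minimax property of $T_{\ell}$ to $R_{\ell}$.

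The one place requiring genuine care, rather than routine manipulation, is the quantitative estimate $\tanh^{-1}(\delta)\ge\delta/\sqrt{2}$ (or $\ge\delta$) used to convert the exact expression $2\tanh^{-1}(\delta)$ into the exponent $\sqrt{2}\delta$ appearing in the statement; and verifying that the restriction $0<\delta\le 1/\sqrt{12}$ is exactly what makes the bound on $|T_{\ell}(y_{0})|$ in (ii) tight enough. Aside from this, every step is a direct computation with Chebyshev polynomials and elementary inequalities; the main conceptual content is the change of variable $x\mapsto y(x)$ that linearizes the problem and pulls the ``forbidden'' point $x=0$ out of $[-1,1]$ to the point $y_{0}<-1$ where Chebyshev polynomials grow exponentially.
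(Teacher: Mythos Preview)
The paper does not supply its own proof of this lemma; it merely restates \cite[Lemma~2]{lin2019OptimalQEigenstateFiltering} and cites it. Your outline is correct and is essentially the standard argument one finds in that reference: the change of variable $y=-1+2(x^{2}-\delta^{2})/(1-\delta^{2})$ reduces everything to classical Chebyshev facts, the symmetrization $p\mapsto\tfrac{1}{2}(p(x)+p(-x))$ handles the not-necessarily-even case in (i), and the closed form $e^{t}=(1+\delta)/(1-\delta)$ gives the exact growth rate of $|T_{\ell}(y_{0})|$.

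One minor remark: your argument in (ii) actually yields a sharper bound than the stated one. Since $\tanh^{-1}(\delta)\ge\delta$ for all $\delta\in[0,1)$, you obtain $|T_{\ell}(y_{0})|\ge\tfrac{1}{2}e^{2\ell\delta}$ and hence $|R_{\ell}(x;\delta)|\le 2e^{-2\ell\delta}$ on $\mathcal{D}_{\delta}$, with no restriction on $\delta$ beyond $0<\delta<1$. The constant $\sqrt{2}$ in the exponent and the condition $\delta\le 1/\sqrt{12}$ are artifacts of a cruder estimate used in the original source (which bounds $T_{\ell}(y_{0})$ via a different inequality); they are not needed in the route you took, so you should not expect to ``verify'' that $1/\sqrt{12}$ is what makes things work---it simply isn't binding in your approach.
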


A fact that we will use later is that, by Markov brothers' inequality,
\begin{equation}
    \label{eq:derivative_bound}
    \max_{-1\leq x\leq 1} \left|\frac{\mathrm{d}}{\mathrm{d} x}R_{\ell}(x;\delta)\right|\leq 4\ell^2\max_{-1\leq x\leq 1}|R_{\ell}(x;\delta)|\leq 4\ell^2,
\end{equation}
where we have used the fact that the degree of $R_{\ell}(x;\delta)$ is $2\ell$.

We will use the above polynomial to construct the projection operator. 

\begin{lemma}
    \label{lem:subspace_eigenstate_filtering_exact}
    Let $A$ be a Hermitian matrix acting on the Hilbert space $\mathcal{H}$, with $\|A\|\leq \alpha$. Let $\mathcal{S}$ be an invariant subspace of $A$. Let $H$ be the restriction of $A$ to $\mathcal{S}$. We assume that $0$ is an eigenvalue of $H$ (can be degenerate), and is separated from the rest of the spectrum of $H$ by a gap at least $\Delta$. Then
    \[
    \|R_{\ell}(A/\alpha;\delta)\Pi_{\mathcal{S}}-\Pi_{0}\Pi_{\mathcal{S}}\|\leq 2e^{-\sqrt{2}\ell\delta},
    \]
    where $R_{\ell}(x;\delta)$ is defined in \eqref{eq:minimax_filtering_polynomial_appendix}, $\delta=\min\{\Delta/\alpha,1\sqrt{12}\}$, $\Pi_{0}:\mathcal{S}\to\mathcal{S}$ is the projection operator into the $0$-eigenspace of $H$, and $\Pi_{\mathcal{S}}:\mathcal{H}\to\mathcal{S}$ is the projection operator into the invariant subspace $\mathcal{S}$.
\end{lemma}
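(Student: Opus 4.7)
The plan is to reduce the statement about $A$ on the full space $\mathcal{H}$ to a statement about $H$ on the invariant subspace $\mathcal{S}$, and then apply the spectral decomposition of $H$ together with the pointwise bounds on $R_\ell$ from \cref{lem:minimax_poly}.

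First I would show that $R_\ell(A/\alpha;\delta)$ commutes with $\Pi_{\mathcal{S}}$. Because $\mathcal{S}$ is an invariant subspace of the Hermitian operator $A$, so is $\mathcal{S}^\perp$, hence $A\Pi_{\mathcal{S}}=\Pi_{\mathcal{S}}A\Pi_{\mathcal{S}}=\Pi_{\mathcal{S}}A$. By induction this gives $A^k\Pi_{\mathcal{S}}=\Pi_{\mathcal{S}}A^k$ for every $k$, and by linearity $p(A)\Pi_{\mathcal{S}}=\Pi_{\mathcal{S}}p(A)\Pi_{\mathcal{S}}$ for any polynomial $p$. Applied to $R_\ell$ and combined with \cref{lem:effective_ham_polynomial} (or a direct argument on $\mathcal{S}$), this yields $R_\ell(A/\alpha;\delta)\Pi_{\mathcal{S}}=R_\ell(H/\alpha;\delta)\Pi_{\mathcal{S}}$, where $R_\ell(H/\alpha;\delta)$ is understood as an operator on $\mathcal{S}$. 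Likewise $\Pi_0\Pi_{\mathcal{S}}=\Pi_0$, so the quantity we need to bound is the operator norm of $\bigl(R_\ell(H/\alpha;\delta)-\Pi_0\bigr)\Pi_{\mathcal{S}}$, and it suffices to bound $\|R_\ell(H/\alpha;\delta)-\Pi_0\|$ as an operator on $\mathcal{S}$.

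Next I would diagonalize $H$. Write $H=\sum_k\lambda_k\ket{v_k}\bra{v_k}$ with orthonormal eigenvectors $\ket{v_k}\in\mathcal{S}$. Partition the indices into $J_0=\{k:\lambda_k=0\}$ and $J_1=\{k:\lambda_k\neq 0\}$; by assumption $|\lambda_k|\geq\Delta$ for $k\in J_1$, and by $\|A\|\leq\alpha$ we have $|\lambda_k/\alpha|\leq 1$ for all $k$. Then
\begin{equation*}
R_\ell(H/\alpha;\delta)-\Pi_0=\sum_{k\in J_0}\bigl(R_\ell(0;\delta)-1\bigr)\ket{v_k}\bra{v_k}+\sum_{k\in J_1}R_\ell(\lambda_k/\alpha;\delta)\ket{v_k}\bra{v_k}.
\end{equation*}
The first sum vanishes because $R_\ell(0;\delta)=1$ by \cref{lem:minimax_poly} (ii). For the second sum, the choice $\delta=\min\{\Delta/\alpha,1/\sqrt{12}\}$ guarantees both that $\delta\leq 1/\sqrt{12}$ (so that part (ii) of \cref{lem:minimax_poly} is applicable) and that $|\lambda_k/\alpha|\geq\Delta/\alpha\geq\delta$, so $\lambda_k/\alpha\in\mathcal{D}_\delta$ for each $k\in J_1$. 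Hence $|R_\ell(\lambda_k/\alpha;\delta)|\leq 2e^{-\sqrt{2}\ell\delta}$ for all such $k$.

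Finally, since the expression above is a Hermitian operator in diagonal form with eigenvalues of magnitude at most $2e^{-\sqrt{2}\ell\delta}$, its spectral norm is at most $2e^{-\sqrt{2}\ell\delta}$. Combining this with the reduction in the first step and $\|\Pi_{\mathcal{S}}\|\leq 1$ yields $\|R_\ell(A/\alpha;\delta)\Pi_{\mathcal{S}}-\Pi_0\Pi_{\mathcal{S}}\|\leq 2e^{-\sqrt{2}\ell\delta}$. There is no serious obstacle here: all the work has been done by \cref{lem:minimax_poly}, and the only point requiring any care is checking that the case $\Delta/\alpha>1/\sqrt{12}$ is still handled by the truncated value $\delta=1/\sqrt{12}$, which is immediate from the monotone definition of $\mathcal{D}_\delta$.
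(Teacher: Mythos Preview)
Your proposal is correct and follows essentially the same approach as the paper: reduce from $A$ to $H$ via the invariance of $\mathcal{S}$, then bound $\|R_\ell(H/\alpha;\delta)-\Pi_0\|$ eigenvalue by eigenvalue using \cref{lem:minimax_poly}. Your explicit spectral decomposition and your remark on the truncation $\delta=\min\{\Delta/\alpha,1/\sqrt{12}\}$ are slightly more detailed than the paper's version, but the argument is the same.
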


\begin{proof}
    We first note that because $\mathcal{S}$ is an invariant subspace of $A$, and therefore is an invariant subspace of $A^k$. As a result, we have
    \[
    A^k\Pi_{\mathcal{S}} = H^k \Pi_{\mathcal{S}},
    \]
    where we have used the fact that the image of $\Pi_\mathcal{S}$ is $\mathcal{S}$ and that $H$ is the restriction of $A$ to $\mathcal{S}$.
    By linearity, this extends to all polynomials of $A$: for any polynomial $p(x)$, 
    \[
    p(A)\Pi_{\mathcal{S}} = p(H).
    \]
    Because $R_\ell(A/\alpha;\delta)$ is a polynomial of $A$, we have
    \[
    R_\ell(A/\alpha;\delta)\Pi_{\mathcal{S}} =   R_\ell(H/\alpha;\delta).
    \]
    Therefore 
    \[
    R_{\ell}(A/\alpha;\delta)\Pi_{\mathcal{S}}-\Pi_{0}\Pi_{\mathcal{S}} = (R_\ell(H/\alpha;\delta) - \Pi_0)\Pi_{\mathcal{S}}.
    \]
    because $\|\Pi_{\mathcal{S}}\|=1$ it suffices to prove that 
    \begin{equation}
        \label{eq:proof_goal_filtering_subspace}
        \|R_\ell(H/\alpha;\delta)-\Pi_0\|\leq 2e^{-\sqrt{2}\ell\delta},
    \end{equation}
    which can be done by examining each eigenstate of $H$, given that both $R_\ell(H/\alpha;\delta)$ and $\Pi_0$ are diagonal in the eigenbasis of $H$. For an eigenstate $\ket{\Psi}$ of $H$ and its corresponding eigenvalue $\lambda$, if $\lambda=0$, then
    \[
    (R_\ell(H/\alpha;\delta)-\Pi_0)\ket{\Psi} = R_\ell(0;\delta)\ket{\Psi} - \ket{\Psi} = 0.
    \]
    If $\lambda\neq 0$, then by the assumption of the spectral gap, we have $|\lambda|\geq \Delta/\alpha\geq \delta$. We therefore have
    \[
    \|(R_\ell(H/\alpha;\delta)-\Pi_0)\ket{\Psi}\|=\|(R_\ell(\lambda/\alpha;\delta)\ket{\Psi}\| = |R_\ell(\lambda/\alpha;\delta)|\leq 2e^{-\sqrt{2}\ell\delta},
    \]
    where we have used \eqref{lem:minimax_poly} (ii) and the fact that $\delta\leq 1/\sqrt{12}$. Therefore we have proved \eqref{eq:proof_goal_filtering_subspace}. 
\end{proof}

We will then combine the above lemmas with QSVT to provide a robust implementation of the $0$-eigenspace projection operator of $\tilde{H}$, as stated in Theorem~\ref{thm:robust_subspace_eigenstate_filtering}, which we restate here. 

\begin{theorem}[Robust subspace eigenstate filtering]
    Let $A$ be a Hermitian matrix acting on the Hilbert space $\mathcal{H}$, with $\|A\|\leq \alpha$. Let $\mathcal{S}$ be an invariant subspace of $A$. Let $H$ be the restriction of $A$ to $\mathcal{S}$. We assume that $0$ is an eigenvalue of $H$ (can be degenerate), and is separated from the rest of the spectrum of $H$ by a gap at least $\Delta$.

    We also assume that $A$ can be accessed through its $(\alpha,\mathfrak{m},\epsilon_A)$-block encoding $U_A$ acting on $\beta_1,\beta_2$ as defined in \cref{defn:block_encoding}.
    Then there exists a unitary circuit $\mathcal{V}_{\mathrm{circ}}$ on registers $\alpha,\beta_1,\beta_2$ such that
    \begin{equation}
    \label{eq:robust_subspace_eigenstate_filtering}
        \|(\bra{0}_{\alpha\beta_1}\otimes I_{\beta_2})\mathcal{V}_{\mathrm{circ}}(\ket{0}_{\alpha\beta_1}\otimes \Pi_{\mathcal{S}})-\Pi_{0}\Pi_{\mathcal{S}}\|\leq \epsilon + \varsigma,
    \end{equation}
    where
    \begin{equation}
        \label{eq:robustness_err_appendix}
        \varsigma = \frac{16\ell^2\epsilon_A}{\alpha}\left[\log\left(\frac{2\alpha}{\epsilon_A}+1\right)+1\right]^2,
    \end{equation}
    and it uses $2\ell=\Or((\alpha/\Delta)\log(1/\epsilon))$ queries to (control-) $U_A$ and its inverse, as well as $\Or(\ell \mathfrak{m})$ other single- or two-qubit gates. In the above $\Pi_{0}:\mathcal{S}\to\mathcal{S}$ is the projection operator into the $0$-eigenspace of $H$, and $\Pi_{\mathcal{S}}:\mathcal{H}\to\mathcal{S}$ is the projection operator into the invariant subspace $\mathcal{S}$.
\end{theorem}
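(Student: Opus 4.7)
The plan is to apply the quantum singular value transformation (QSVT) framework of \cite{gilyen2018QSingValTransf} to the block encoding $U_A$ with the minimax filtering polynomial $R_\ell(x;\delta)$ of \eqref{eq:minimax_filtering_polynomial_appendix}, taking $\delta = \min\{\Delta/\alpha, 1/\sqrt{12}\}$ and choosing $\ell = \Or((\alpha/\Delta)\log(1/\epsilon))$ so that $2e^{-\sqrt{2}\ell\delta} \le \epsilon$. Since $R_\ell$ is an even polynomial of degree $2\ell$ with $|R_\ell(x;\delta)| \le 1$ on $[-1,1]$ by Lemma \ref{lem:minimax_poly}, QSVT produces a unitary circuit $\mathcal{V}_{\mathrm{circ}}$ using $2\ell$ queries to (controlled-) $U_A$ and $U_A^\dagger$ and $\Or(\ell\mathfrak{m})$ other gates that block-encodes a polynomial of a matrix close to $A$. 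This yields immediately the query and gate counts claimed in the theorem.

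First I would dispose of the idealized case in which $U_A$ is an \emph{exact} $(\alpha,\mathfrak{m},0)$-block encoding of $A$. In that case the image of $\mathcal{V}_{\mathrm{circ}}$ under post-selection onto the all-zero state in the $\alpha\beta_1$ registers is exactly $R_\ell(A/\alpha;\delta)$, and Lemma \ref{lem:subspace_eigenstate_filtering_exact} immediately gives
\[
\|R_\ell(A/\alpha;\delta)\Pi_{\mathcal{S}} - \Pi_0\Pi_{\mathcal{S}}\| \le 2e^{-\sqrt{2}\ell\delta} \le \epsilon,
\]
which accounts for the $\epsilon$ term in \eqref{eq:robust_subspace_eigenstate_filtering}. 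This step is essentially Lemma \ref{lem:subspace_eigenstate_filtering_exact} rephrased in the language of block encodings.

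The robustness term $\varsigma$ must capture the cost of using an only-approximate block encoding. I would use the standard fact that any $(\alpha,\mathfrak{m},\epsilon_A)$-block encoding of $A$ is an exact $(\alpha,\mathfrak{m},0)$-block encoding of some Hermitian matrix $\tilde{A}$ with $\|A-\tilde{A}\| \le \epsilon_A$. Then QSVT applied to $U_A$ is an exact block encoding of $R_\ell(\tilde{A}/\alpha;\delta)$, and by the triangle inequality it remains to bound $\|R_\ell(A/\alpha;\delta) - R_\ell(\tilde{A}/\alpha;\delta)\|$ by $\varsigma$ and add it to the idealized bound, with the $\Pi_{\mathcal{S}}$ factor only improving the estimate.

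The main obstacle is exactly this perturbation bound, which is where the $\ell^2\epsilon_A/\alpha$ factor and the two logarithms in $\varsigma$ arise. The naive approach, combining the Markov-brothers derivative estimate \eqref{eq:derivative_bound} with $\|A/\alpha - \tilde{A}/\alpha\| \le \epsilon_A/\alpha$, gives a Lipschitz constant $4\ell^2$ but is only valid on $[-1,1]$; however $\|\tilde{A}/\alpha\|$ can overshoot $1$ by as much as $\epsilon_A/\alpha$, and $R_\ell$ grows exponentially in $\ell$ just outside $[-1,1]$ because of the Chebyshev structure in its numerator. To avoid this blow-up I would expand $R_\ell(x;\delta) = \sum_{k=0}^{2\ell} c_k T_k(x)$ in the Chebyshev basis, noting that $\sum_k |c_k|$ and $\sum_k k|c_k|$ are controllable for the minimax polynomial, and use the termwise Lipschitz bound $\|T_k(A/\alpha) - T_k(\tilde{A}/\alpha)\| \le k\,\epsilon_A/\alpha$ on $[-1,1]$ combined with a truncation of the spectrum of $\tilde{A}$ into $[-1,1]$ at cost $\Or(\epsilon_A/\alpha)$. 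The two $\log(2\alpha/\epsilon_A+1)+1$ factors in $\varsigma$ come, respectively, from the bound on the Chebyshev coefficients of the rational-like $R_\ell$ and from selecting the truncation index for the Chebyshev series. Putting these estimates together with the idealized bound from Lemma \ref{lem:subspace_eigenstate_filtering_exact} completes the proof.
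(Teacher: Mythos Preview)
Your overall strategy matches the paper's: introduce the exactly block-encoded $\tilde{A}$, note that QSVT on $U_A$ implements a polynomial transform of $\tilde{A}/\alpha$ exactly, bound the polynomial perturbation $\|R_\ell(A/\alpha;\delta)-R_\ell^{\mathrm{SV}}(\tilde{A}/\alpha;\delta)\|$ by $\varsigma$, and then invoke Lemma~\ref{lem:subspace_eigenstate_filtering_exact} for the remaining $\epsilon$. Two points deserve correction.

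First, the paper does not derive $\varsigma$ via a Chebyshev expansion as you sketch. It simply feeds the Markov-brothers Lipschitz constant $4\ell^2$ from \eqref{eq:derivative_bound} into \cite[Corollary~21]{gilyen2018QSingValTransfArXiv}, the off-the-shelf robustness bound for singular value transformation, which directly outputs the expression \eqref{eq:robustness_err_appendix}; the two logarithmic factors are inherited verbatim from that corollary rather than from coefficient or truncation estimates on $R_\ell$. Your proposed argument (expand $R_\ell=\sum_k c_k T_k$, use $\|T_k(A/\alpha)-T_k(\tilde{A}/\alpha)\|\le k\epsilon_A/\alpha$, then truncate the spectrum of $\tilde{A}$) does not obviously reproduce the precise form of $\varsigma$: the termwise Chebyshev Lipschitz bound suffers the same blow-up outside $[-1,1]$ you are trying to avoid, and your explanation of where each $\log$ factor comes from is speculative. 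If you want to avoid the citation you would essentially have to reprove that corollary.

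Second, a minor but relevant technical point: the matrix $\tilde{A}$ actually encoded by $U_A$ need not be Hermitian, so QSVT yields the \emph{singular value} transform $R_\ell^{\mathrm{SV}}(\tilde{A}/\alpha;\delta)$, not the eigenvalue transform. The paper is careful about this distinction, using that $R_\ell^{\mathrm{SV}}(A/\alpha;\delta)=R_\ell(A/\alpha;\delta)$ for the Hermitian $A$ and then applying the SV-robustness corollary to pass from $A$ to $\tilde{A}$. Your statement that $\tilde{A}$ is Hermitian is not justified by the block-encoding definition.
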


\begin{proof}
    Similar to Lemma~\ref{lem:subspace_eigenstate_filtering_exact}, we let $\delta=\min\{\Delta/\alpha,1\sqrt{12}\}$.
    First, we denote $\tilde{A} = \alpha(\bra{0}_{\beta_1}\otimes I_{\beta_2})U_A(\ket{0}_{\beta_1}\otimes I_{\beta_2})$, then according to \cref{defn:block_encoding} we have
    \[
    \|\tilde{A}-A\|\leq \epsilon_A.
    \]
    Moreover, $U_A$ is now a $(\alpha,\mathfrak{m},0)$-block encoding of $\tilde{A}$. 
    Therefore we can implement a singular value transformation $R^{\mathrm{SV}}_{\ell}(\tilde{A}/\alpha;\delta)$ using quantum singular value transformation \cite[Theorem~2]{gilyen2018QSingValTransf}.\footnote{For an even function $f$, the singular value transformation of $A$ with singular value decomposition $A=U\Sigma V^\dag$ is defined to be $f^{\mathrm{SV}}(A)=V f(\Sigma)V^\dag$ \cite{gilyen2018QSingValTransfArXiv}.} More precisely, we can construct a circuit $\mathcal{V}_{\mathrm{circ}}$ such that
    \begin{equation}
        \label{eq:exact_block_encoding_filter_poly}
        (\bra{0}_{\alpha\beta_1}\otimes I_{\beta_2})\mathcal{V}_{\mathrm{circ}}(\ket{0}_{\alpha\beta_1}\otimes I_{\beta_2}) = R^{\mathrm{SV}}_{\ell}(\tilde{A}/\alpha;\delta).
    \end{equation}
    The same theorem also tells us that this circuit uses the block encoding $U_A$ $d=2\ell$ times, which is the degree of the polynomial $R_{\ell}(\cdot;\delta)$.\footnote{The readers may find \cite[Theorem~2]{gilyen2018QSingValTransf} ambiguous in the parameters used in the block encoding. We remark that in \cite{gilyen2018QSingValTransf}, they use ``block encoding of $A$'' to mean an $(1,m,0)$-block encoding in the context of our work. See \cite[Eq.~(1)]{gilyen2018QSingValTransf}.} 

    Note that instead of the singular value transformation of $\tilde{A}$, we actually want to implement an eigenvalue transformation of $A$. We observe that because $A$ is Hermitian, its singular value transformation coincides with its eigenvalue transformation, i.e., $R^{\mathrm{SV}}_{\ell}(A/\alpha;\delta)=R_{\ell}(A/\alpha;\delta)$. Moreover, $R^{\mathrm{SV}}_{\ell}(A/\alpha;\delta)$ is not far from $R^{\mathrm{SV}}_{\ell}(\tilde{A}/\alpha;\delta)$, which we will show next.

    By \eqref{eq:derivative_bound} and the intermediate value theorem, we have
    \[
    |R_{\ell}(x;\delta)-R_{\ell}(y;\delta)|\leq 4\ell^2|x-y|
    \]
    for any $x,y\in[-1,1]$. Since $R_{\ell}(\cdot;\delta)$ is an even polynomial, from \cite[Corollary 21]{gilyen2018QSingValTransfArXiv} we know that
    \[
    \|R_{\ell}(A/\alpha;\delta)-R^{\mathrm{SV}}_{\ell}(\tilde{A}/\alpha;\delta)\|=\|R^{\mathrm{SV}}_{\ell}(A/\alpha;\delta)-R^{\mathrm{SV}}_{\ell}(\tilde{A}/\alpha;\delta)\|\leq \varsigma,
    \]
    for $\varsigma$ given in \eqref{eq:robustness_err}.
    Combined with \eqref{eq:exact_block_encoding_filter_poly}, we have
    \begin{equation}
    \label{eq:approx_block_encoding_filter_poly}
        \|(\bra{0}_{\alpha\beta_1}\otimes I_{\beta_2}){\mathcal{V}}_{\mathrm{circ}}(\ket{0}_{\alpha\beta_1}\otimes I_{\beta_2})-R_{\ell}(A/\alpha;\delta)\|\leq  \varsigma.
    \end{equation}
    
    By \cref{lem:subspace_eigenstate_filtering_exact}, we have
    \[
    \|R_{\ell}(A/\alpha;\delta)\Pi_{\mathcal{S}} - \Pi_{0}\Pi_{\mathcal{S}}\|\leq 2e^{-\sqrt{2}\ell \delta}.
    \]
    Combining the above inequality with \eqref{eq:approx_block_encoding_filter_poly} via the triangle inequality, we then have
    \begin{equation}
        \|(\bra{0}_{\alpha\beta_1}\otimes I_{\beta_2}){\mathcal{V}}_{\mathrm{circ}}(\ket{0}_{\alpha\beta_1}\otimes I_{\beta_2})\Pi_{\mathcal{S}}-\Pi_{0}\Pi_{\mathcal{S}}\|\leq  \varsigma+2e^{-\sqrt{2}\ell \delta}.
    \end{equation}
    In order to make $2e^{-\sqrt{2}\ell \delta}\leq \epsilon$, it suffices to choose
    \[
    \ell = \Or\left(\frac{1}{\delta}\log(1/\epsilon)\right) = \Or\left(\frac{\alpha}{\Delta}\log(1/\epsilon)\right).
    \]
    We therefore have \eqref{eq:robust_subspace_eigenstate_filtering}.
\end{proof}

\section{Spectrum estimates}
\label{sec:spectrum_estimates}

We will use the following lemma to upper bound the matrix spectral norm.

\begin{lemma}
\label{lem:bound_spectral_norm_using_inf_norm}
    Let $A = (A_{ij})_{n\times n}$ be a Hermitian matrix. Then
    \[
    \|A\|\leq \max_{\ket{x}:\|\ket{x}\|_\infty\leq 1} \|A\ket{x}\|_{\infty} = \max_{1\leq i\leq n} \sum_{j=1}^n |A_{ij}|,
    \]
    where $\|\cdot\|_{\infty}$ denotes the infinity norm. 
\end{lemma}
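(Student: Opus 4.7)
The plan is to prove the two statements separately: first the equality in the middle, which is the standard expression for the operator norm induced by the $\ell_\infty$ vector norm, and then the spectral norm bound on the left, which uses Hermiticity.

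First I would handle the equality $\max_{\|x\|_\infty \leq 1}\|Ax\|_\infty = \max_i \sum_j |A_{ij}|$. For the $\leq$ direction, for any $x$ with $\|x\|_\infty\leq 1$, the triangle inequality gives $|(Ax)_i| = \bigl|\sum_j A_{ij} x_j\bigr| \leq \sum_j |A_{ij}|\,|x_j| \leq \sum_j |A_{ij}|$, so taking a maximum over $i$ yields $\|Ax\|_\infty \leq \max_i \sum_j |A_{ij}|$. For the reverse direction, let $i^\ast$ achieve the maximum row sum and choose $x_j = \overline{A_{i^\ast j}}/|A_{i^\ast j}|$ (setting $x_j=0$ when the entry vanishes); then $\|x\|_\infty\leq 1$ and $(Ax)_{i^\ast} = \sum_j |A_{i^\ast j}|$, saturating the bound.

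Second, I would prove $\|A\| \leq \max_i \sum_j |A_{ij}|$ using the fact that, because $A$ is Hermitian, $\|A\|$ equals its spectral radius. Let $\lambda$ be an eigenvalue of $A$ with a nonzero eigenvector $v$, and let $i^\ast$ be an index where $|v_{i^\ast}| = \|v\|_\infty$. From the eigenvalue equation $\lambda v_{i^\ast} = \sum_j A_{i^\ast j} v_j$, the triangle inequality gives
\[
|\lambda|\,|v_{i^\ast}| \leq \sum_j |A_{i^\ast j}|\,|v_j| \leq |v_{i^\ast}| \sum_j |A_{i^\ast j}| \leq |v_{i^\ast}| \cdot \max_i \sum_j |A_{ij}|.
\]
Dividing by $|v_{i^\ast}|>0$ and taking a maximum over the (real) eigenvalues of $A$ yields the desired bound.

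There is no real obstacle here: both parts are elementary, and the result is essentially a packaging of the standard facts that the spectral radius is dominated by any induced matrix norm and that the $\ell_\infty$-induced norm equals the maximum absolute row sum. I would present it in the order above so the chain of inequalities/equalities in the statement is read left-to-right, with Hermiticity used only in the first (spectral norm) step.
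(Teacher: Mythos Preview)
Your proof is correct and follows essentially the same approach as the paper: both use that Hermiticity makes $\|A\|$ equal to the spectral radius and then bound the top eigenvalue via the $\ell_\infty$-induced norm. The only cosmetic difference is that the paper normalizes the top eigenvector to $\|\phi\|_\infty=1$ and reads off $\|A\phi\|_\infty=\|A\|$ directly, whereas you unpack the eigenvalue equation at the maximizing index; the content is the same.
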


This result in fact holds for any general matrix. To prove that we only need to dilate the matrix $A$ to be $\begin{pmatrix}
    0 & A \\
    A^\dag & 0
\end{pmatrix}$. We will not discuss the detail because we are only going to use the Hermitian version.

\begin{proof}
    Because $A$ is Hermitian, either $\|A\|$ or $-\|A\|$ must be an eigenvalue of $A$. Therefore there exists $\ket{\phi}$ such that $A\ket{\phi}=\pm \|A\|\ket{\phi}$ and $\|\ket{\phi}\|_{\infty}=1$. 
    Consequently $\|A\ket{\phi}\|_\infty=\|A\|$, and this proves the  inequality. The equality can be easily checked and we omit the proof.
\end{proof}

We will then prove \cref{lem:spectral_gap_D1} in the main text, which we restate below:


\begin{lemma-non}
    Let \begin{equation}
\label{eq:defn_D1}
    D_1 = 
    \begin{pmatrix}
         0     & t_1  &        &            &     \\
        t_1  & 0     & t_2   &            &     \\
              & t_2  & 0      & \ddots     &     \\
              &       & \ddots & \ddots     & t_{m-1}     \\
              &       &        & t_{m-1}  & 0
    \end{pmatrix}_{ m\times m}
\end{equation} as defined in \eqref{eq:defn_D1},
    in which $t_1=\sqrt{d-2}$, $t_2=\cdots =t_{m-1}=\sqrt{d-1}$.
    Then $D_1$ has a non-degenerate 0-eigenstate $\ket{\Psi}=(\Psi_1,\Psi_2,\cdots,\Psi_m)^\top$ where
    \[
    \Psi_j = \prod_{k=1}^{(j-1)/2}\left(-\frac{t_{2k-1}}{t_{2k}}\right)\Psi_1 = (-1)^{(j-1)/2}\sqrt{\frac{d-2}{d-1}}\Psi_1,\quad \text{for all odd }j\geq 2,
    \]
    and $\Psi_j=0$ for even $j$.
    Moreover, 0 is separated from the rest of the spectrum of $D_1$ by a gap of at least $2\sqrt{d-2}/(m-1)$.
\end{lemma-non}

\begin{proof}
    First, we prove that 0 is an eigenvalue of $D_1$. The can be done by verifying $D_1\ket{\Psi}=0$ for the $\ket{\Psi}$ given above. 

    We then prove the non-degeneracy of the eigenvalue 0 and the spectral gap. We do this through the eigenvalue interlacing theorem, which tells us that the eigenvalues of $D_1$ must interlace those of $D_1'$, where $D_1'$ is the $(m-1)\times (m-1)$ sub-matrix of $D_1$ on the upper-left corner, i.e.,
    \begin{equation}
        D_1' = 
        \begin{pmatrix}
             0     & t_1  &        &            &     \\
            t_1  & 0     & t_2   &            &     \\
                  & t_2  & 0      & \ddots     &     \\
                  &       & \ddots & \ddots     & t_{m-2}     \\
                  &       &        & t_{m-2}  & 0
        \end{pmatrix}_{ (m-1)\times (m-1)}.
    \end{equation}
    We will therefore first study the spectrum of $D_1'$.

    We observe that $D_1'$ is in fact an invertible matrix. Consider the linear system $D_1'\ket{x}=\ket{y}$, where $\ket{x}=(x_1,\cdots,x_{m-1})^\top$ and $\ket{y}=(y_1,\cdots,y_{m-1})^\top$, then we can compute the solution $\ket{x}$ using the following recursion:
    \begin{equation}
        \begin{aligned}
            x_{2k+2} &= \frac{y_{2k+1}}{t_{2k+1}} - \frac{t_{2k}}{t_{2k+1}}x_{2k},\quad \forall k\geq 1, \\
            x_2 &= \frac{y_1}{t_1}, \\
            x_{m-2k} &= \frac{y_{m-2k+1}}{t_{m-2k}} - \frac{t_{m-2k+1}}{t_{m-2k}}x_{m-2k+2}, \quad \forall k\geq 1 \\
            x_{m-2} &= \frac{y_{m-1}}{t_{m-2}}.
        \end{aligned}
    \end{equation}
    Given the values of $t_j$, we observe that 
    \[
    \frac{t_{2k}}{t_{2k+1}}=1,\quad \frac{t_{m-2k+1}}{t_{m-2k}} = \begin{cases}
        1,&\text{ if }k<(m-1)/2,\\
        \sqrt{\frac{d-1}{d-2}},&\text{ if }k=(m-1)/2.\\
    \end{cases}
    \]
    Therefore, for even entries we have
    \[
    |x_{2k+2}|\leq \left|\frac{y_{2k+1}}{t_{2k+1}}\right| + |x_{2k}|\leq \frac{(k+1)\|\ket{y}\|_{\infty}}{\sqrt{d-2}},
    \]
    For odd entries we have
    \[
    |x_{m-2k}|\leq \left|\frac{y_{m-2k+1}}{t_{m-2k}}\right|+|x_{m-2k+2}|\leq \frac{k\|\ket{y}\|_{\infty}}{\sqrt{d-1}},
    \]
    for $k<(m-1)/2$. For $k=(m-1)/2$, we have
    \[
    |x_1|\leq \left|\frac{y_{2}}{t_{1}}\right|+\sqrt{\frac{d-1}{d-2}}|x_{3}|\leq \frac{\|\ket{y}\|_{\infty}}{\sqrt{d-2}}+\frac{\|\ket{y}\|_{\infty}(m-3)/2}{\sqrt{d-2}} = \frac{\|\ket{y}\|_{\infty}(m-1)/2}{\sqrt{d-2}}.
    \]
    Therefore we have
    \[
    |x_j|\leq \frac{(m-1)\|\ket{y}\|_{\infty}}{2\sqrt{d-2}}
    \]
    for all $j$.
    From this we can see that if $\|\ket{y}\|_\infty\leq 1$, then 
    \[
    \|(D_1')^{-1}\ket{y}\|_\infty = \max_j |x_j|\leq \frac{(m-1)}{2\sqrt{d-2}}.
    \]
    Therefore by \cref{lem:bound_spectral_norm_using_inf_norm} we have $\|(D_1')^{-1}\|\leq (m-1)/(2\sqrt{d-2})$. As a result the eigenvalues of $D_1'$ must be bounded away from 0 by at least $2\sqrt{d-2}/(m-1)$.

    Because by the eigenvalue interlacing theorem the eigenvalues of $D_1$ interlace those of $D_1'$, if $0$ is a degenerate eigenvalue of $D_1$, then there must exists an eigenvalue of $D_1'$ between two 0's, i.e., this eigenvalue of $D_1'$ must also be 0. This is impossible because we have just shown that $D_1'$ is invertible. This proves the non-degeneracy of 0 as an eigenvalue of $D_1$. If there is an eigenvalue $\lambda$ of $D_1$ such that $|\lambda|<2\sqrt{d-2}/(m-1)$, then there must exist an eigenvalue $\lambda'$ of $D_1'$ between 0 and $\lambda$, and therefore $|\lambda'|<2\sqrt{d-2}/(m-1)$. This is again impossible because we have just shown that all eigenvalues of $D_1'$ must be bounded away from 0 by at least $2\sqrt{d-2}/(m-1)$. Therefore all non-zero eigenvalues of $D_1$ must be bounded away from 0 by at least $2\sqrt{d-2}/(m-1)$.
\end{proof}

\begin{lemma}[Inverse of a nonsingular tridiagonal matrix {\cite[Theorem 2.1]{da2001explicit}}] \label{lem:inverseH} 
Let $H_1(a,b)$ be as defined in \eqref{eq:H1ab}.
Let $\sigma$, $\delta$ be the two $m$ dimensional vectors defined as follows:
\begin{enumerate}
    \item $\sigma_m=b$, $\sigma_i=-t_i^2/\sigma_{i+1}$ for $i=m-1, \ldots, 2$, $\sigma_1=a-t_1^2/\sigma_2$.
    \item $\delta_1=a$, $\delta_i=-t_{i-1}^2/\delta_{i-1}$ for $i=2, \ldots, m-1$, $\delta_m=b-t_{m-1}^2/\sigma_{m-1}$.
\end{enumerate}

Then the matrix element of the inverse of $H_1(a,b))$ is 

\begin{equation} \label{eq:inverseHentry}
    (H_1^{-1}(a,b))_{i,j} = \left\{\begin{array}{ll}
  (-1)^{i+j} t_i\cdots t_{j-1} \frac{\sigma_{j+1}\cdots \sigma_m}{\delta_i\cdots\delta_{m}} & \mbox{if }  i\leq j \\
 (-1)^{i+j} t_j\cdots t_{i-1} \frac{\sigma_{i+1}\cdots \sigma_m}{\delta_j\cdots\delta_{m}}  & \mbox{if }  i>j \\
\end{array}\right.,  \end{equation} 

with the convention that the empty product equals 1.\\


\end{lemma}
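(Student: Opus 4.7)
My plan is to establish formula~\eqref{eq:inverseHentry} by combining Cramer's rule with a determinantal interpretation of the auxiliary sequences $\sigma$ and $\delta$, exploiting the tridiagonal structure of $H_1(a,b)$ to evaluate the relevant cofactors explicitly.

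First, I would identify $\sigma$ and $\delta$ as ratios of principal submatrix determinants. Let $\Theta_k$ denote the determinant of the trailing principal submatrix of $H_1(a,b)$ on rows and columns $\{k,\ldots,m\}$, with the convention $\Theta_{m+1}=1$, and let $\Phi_k$ denote the determinant of the leading principal submatrix on rows and columns $\{1,\ldots,k\}$, with $\Phi_0=1$. Laplace-expanding $\Theta_k$ along its first row and using the tridiagonal sparsity pattern yields the three-term recursion $\Theta_k = (H_1)_{kk}\Theta_{k+1} - t_k^2 \Theta_{k+2}$, and an analogous identity $\Phi_k = (H_1)_{kk}\Phi_{k-1} - t_{k-1}^2 \Phi_{k-2}$ holds by expanding $\Phi_k$ along its last row. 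Dividing through and defining $\sigma_k := \Theta_k/\Theta_{k+1}$ and $\delta_k := \Phi_k/\Phi_{k-1}$ reproduces exactly the recursions in the lemma, with boundary values $\sigma_m = b$ and $\delta_1 = a$. Telescoping then gives $\det H_1(a,b) = \Phi_m = \delta_1\delta_2\cdots\delta_m$ and $\Theta_{j+1} = \sigma_{j+1}\sigma_{j+2}\cdots\sigma_m$.

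Next, I would apply Cramer's rule in the form $(H_1^{-1})_{ij} = (-1)^{i+j} M_{ji}/\det H_1$, where $M_{ji}$ is the determinant of the minor obtained by deleting row $j$ and column $i$. For $i \leq j$, I would show that this minor in its natural row/column ordering has a $3\times 3$ block lower triangular structure. The diagonal blocks are: the leading tridiagonal submatrix on rows/columns $\{1,\ldots,i-1\}$ with determinant $\Phi_{i-1}$; a middle $(j-i)\times(j-i)$ block consisting of rows $\{i,\ldots,j-1\}$ and columns $\{i+1,\ldots,j\}$ of $H_1$ which, by tridiagonality, is lower triangular with diagonal entries exactly $t_i,t_{i+1},\ldots,t_{j-1}$; and the trailing tridiagonal submatrix on rows/columns $\{j+1,\ldots,m\}$ with determinant $\Theta_{j+1}$. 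The three strictly above-diagonal blocks vanish because every entry there has column index exceeding row index by at least two, while the three strictly below-diagonal blocks may contain isolated corner entries but do not affect the determinant. Multiplying the three diagonal block determinants yields $M_{ji} = \Phi_{i-1}\cdot t_i t_{i+1}\cdots t_{j-1}\cdot \Theta_{j+1}$; substituting the ratio identifications $\Phi_m/\Phi_{i-1} = \delta_i\cdots\delta_m$ and $\Theta_{j+1} = \sigma_{j+1}\cdots\sigma_m$ then delivers~\eqref{eq:inverseHentry} for $i\leq j$. The $i>j$ case follows immediately from the symmetry $H_1 = H_1^\top$.

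The main obstacle will be the combinatorial bookkeeping in verifying the block decomposition of the minor, particularly confirming that the middle block is lower triangular with the correct $t$-values along its diagonal. Concretely, at block position $(p,q)$ in this $(j-i)\times(j-i)$ block, the corresponding original entry of $H_1$ sits at row $i+p-1$ and column $i+q$; tridiagonality forces this to be nonzero only when $p \in \{q,q+1,q+2\}$, and the diagonal case $p=q$ picks up the superdiagonal entry $t_{i+q-1}$ whereas the subdiagonal case $p=q+1$ lands on an interior zero diagonal entry of $H_1$. Hence the block is genuinely lower triangular with the claimed diagonal, and once this step is settled together with careful sign tracking in Cramer's rule, the remaining algebra is routine.
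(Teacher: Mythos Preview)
Your proof is correct. The paper does not actually prove this lemma: it is quoted from \cite[Theorem~2.1]{da2001explicit} and used as a black box. Your argument via Cramer's rule---identifying $\sigma_k=\Theta_k/\Theta_{k+1}$ and $\delta_k=\Phi_k/\Phi_{k-1}$ as ratios of trailing and leading principal minor determinants, and then reading off the $(j,i)$-minor as a block lower-triangular determinant with middle block having diagonal $t_i,\ldots,t_{j-1}$---is precisely the standard derivation of such formulas and is essentially what one finds in the cited reference. The block-structure verification you outline is accurate.

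One incidental observation: your determinantal recursion naturally yields $\delta_m = b - t_{m-1}^2/\delta_{m-1}$, whereas the lemma as stated in the paper has $\delta_m = b - t_{m-1}^2/\sigma_{m-1}$. The latter appears to be a typographical slip; your version is the one consistent with $\delta_1\cdots\delta_m=\Phi_m=\det H_1(a,b)$, which is what the formula \eqref{eq:inverseHentry} actually requires.
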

\begin{corollary}\label{cora:maxinverselement}
Let $t_1=\sqrt{d-2}$, $t_2=t_3=\dots=t_{m-1}=\sqrt{d-1}$, $\gamma=\frac{d-1}{2}$, for integer $d\geq 3$, and let $m$ be an odd integer. Let $\mu_l = 2\cos(2\pi l/n)$ with $l=0,1\ldots, n-1$ satisfying $\mu_l\neq 0$. Let $a=\mu_l$ and $b=\gamma\mu_l$. Then we have 
\begin{equation}
  \max_{i,j}  (H_1^{-1}(a,b))_{i,j}= O(1/|a|^2)= O(n^2).    \end{equation}
\end{corollary}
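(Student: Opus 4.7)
The plan is to apply Lemma~\ref{lem:inverseH} after computing the auxiliary sequences $\sigma$ and $\delta$ explicitly for the specific parameters $t_1=\sqrt{d-2}$, $t_k=\sqrt{d-1}$ for $2\leq k\leq m-1$, $a=\mu_l$, $b=\gamma\mu_l$, and then to bound the ratio in \eqref{eq:inverseHentry} by reading off the orders of magnitude in $|\mu_l|$.

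First I would compute the $\sigma$ recursion. Starting from $\sigma_m=\gamma\mu_l=\tfrac{d-1}{2}\mu_l$ and applying $\sigma_i=-(d-1)/\sigma_{i+1}$ for $2\leq i\leq m-1$, a straightforward induction (using that $m$ is odd) produces the alternating sequence $\sigma_i=\tfrac{d-1}{2}\mu_l$ for odd $i$ and $\sigma_i=-2/\mu_l$ for even $i$, followed by $\sigma_1=\mu_l-(d-2)/\sigma_2=\tfrac{d}{2}\mu_l$. An analogous induction (reading the boundary condition as $\delta_m=b-t_{m-1}^2/\delta_{m-1}$) starting from $\delta_1=\mu_l$ yields $|\delta_i|=\Theta(|\mu_l|)$ for odd $i$ and $|\delta_i|=\Theta(1/|\mu_l|)$ for even $i$, with implicit constants depending only on $d$. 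In particular, every $\sigma_i$ and $\delta_i$ is nonzero, so \eqref{eq:inverseHentry} is well-defined.

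Plugging into \eqref{eq:inverseHentry}, each entry $(H_1^{-1}(a,b))_{i,j}$ factors as a product of $t$-values (a $d$-dependent constant independent of $\mu_l$) times a ratio of $\sigma$-- and $\delta$--products, in which each odd-indexed factor carries a $\Theta(|\mu_l|)$ and each even-indexed factor a $\Theta(1/|\mu_l|)$. The index range of the denominator exceeds that of the numerator by exactly $[\min(i,j),\max(i,j)]$, which contains $|i-j|+1$ consecutive integers; a direct count of odd versus even entries in such a range shows that the net exponent of $|\mu_l|$ in the entry lies in $\{-2,-1,0,1,2\}$. The worst case, in which two extra even-indexed $\delta$-factors survive in the denominator, therefore yields $|(H_1^{-1}(a,b))_{i,j}|=O(1/|\mu_l|^2)$ uniformly in $i,j$. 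Finally, $\mu_l=2\cos(2\pi l/n)$ together with $\mu_l\neq 0$ implies $|\mu_l|\geq 2|\sin(2\pi/n)|=\Omega(1/n)$, so $1/|\mu_l|^2=O(n^2)$.

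The main (modest) obstacle is the bookkeeping for the boundary values $\delta_1$, $\delta_m$, $\sigma_1$ and for the edge cases where $j=m$ (so the $\sigma$-product is empty) or $i=1$ (so $\delta_1=\mu_l$ enters the denominator). A case-by-case check confirms that each of these situations gives at most a $1/|\mu_l|$ factor, safely within the claimed $1/|\mu_l|^2$ bound, and that the $t$-factor products never accumulate any growth in $\mu_l$. Everything else reduces to the parity count above.
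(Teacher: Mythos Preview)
Your plan is the same as the paper's---compute $\sigma$ and $\delta$ explicitly and then read off the size of the ratio in \eqref{eq:inverseHentry}---and your computation of the two sequences is correct. The gap is in the bound itself: your parity argument only tracks the exponent of $|\mu_l|$, and that is not enough to conclude the uniform bound $O(1/|\mu_l|^2)$.

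Concretely, the $t$-product $t_i\cdots t_{j-1}$ is \emph{not} ``a $d$-dependent constant'': it equals $\Theta\big((d-1)^{(j-i)/2}\big)$ and can be exponentially large in $m$. Likewise, each $\sigma_k$ and $\delta_k$ carries a $d$-dependent prefactor (e.g.\ $\frac{d-1}{2}$, $\frac{d-1}{d-2}$, $d-2$, $2$), and there are $\Theta(m)$ of them in the product. Your parity count shows the net power of $|\mu_l|$ lies in a bounded set, but says nothing about the product of these $\Theta_d(1)$ prefactors, which a priori could blow up like $c^m$. The cancellation that saves you is the pairing
\[
\sigma_k\sigma_{k+1}=-t_k^2,\qquad \delta_k\delta_{k+1}=-t_k^2\quad(2\le k\le m-2),
\]
coming directly from the recursions. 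This is precisely what the paper exploits: it splits the entry as
\[
\frac{t_i\cdots t_{j-1}}{\delta_i\cdots\delta_{j-1}}\cdot\frac{1}{\delta_j}\cdot\frac{\sigma_{j+1}\cdots\sigma_m}{\delta_{j+1}\cdots\delta_m},
\]
uses $\delta_k\delta_{k+1}=-t_k^2$ to reduce the first factor to at most a single leftover $t_k/\delta_k$, and uses $\sigma_k\sigma_{k+1}=\delta_k\delta_{k+1}$ to reduce the third factor to at most a single leftover $\sigma_k/\delta_k$. After this pairing there are only $O(1)$ surviving factors, each of size $O(1/|a|)$ or $O(1)$, and the bound follows. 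Your write-up needs either this pairing step or an equivalent verification that the $(d-1)^{\Theta(m)}$ from the $t$-product is exactly absorbed by the extra $\delta$-factors; the parity count alone does not do it.
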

\begin{proof} 

Since $t_1=\sqrt{d-2}$, $t_2=t_3=\ldots=t_{m-1}=\sqrt{d-1}$, $\gamma=\frac{d-1}{2}$ and $m$ is an odd integer, we have $t_1t_{2}=\sqrt{(d-2)(d-1)}$,  $t_it_{i+1}=(d-1)$ for $ 2\leq i \leq m-1$, $\sigma_1=a\left(1+\frac{\gamma(d-1)}{d-2}\right)$, $|\sigma_i\sigma_{i+1}|=(d-1)$ for $i=2,\ldots, m-1$ and $\delta_1=a$, $|\delta_i\delta_{i+1}|=(d-1)$ for $2\leq i\leq m-2$, $\delta_m=a(1+\frac{\gamma(d-1)}{d-2})$. We can then compute the two $m$ dimensional vectors $\sigma,\delta$ in \cref{lem:inverseH} to be:
\begin{enumerate} 
    \item $\sigma_{i}=-\frac{(d-1)}{b}, \sigma_{i+1}=b,$ for $i$ to be an even integer from $m-1$ to $2$, $\sigma_1=a+ \frac{b(d-1)}{(d-2)}$.
    \item $\delta_1=a$, $\delta_{i}=-(d-2)/a$,  $\delta_{i+1}=\frac{a(d-1)}{(d-2)}$ for $i$ to be an even integer from $2$ to $m-1$, $\delta_m=b+ \frac{a(d-1)}{(d-2)}$.
\end{enumerate}

Using the results above we will compute upper bounds for the matrix entries $(H_1^{-1}(a,b))_{i,j}$. Note that because $H_1(a,b)$ is Hermitian, we only need to consider the case of $i\leq j$. By \cref{lem:inverseH} we have
\begin{equation}
    \label{eq:product_into_three_parts}
    |(H_1^{-1}(a,b))_{i,j}| = \frac{t_i t_{i+1}\cdots t_{j-1}}{\delta_i \delta_{i+1}\cdots \delta_{j-1}}\cdot \frac{1}{\delta_j}\cdot \frac{\sigma_{j+1} \sigma_{j+2}\cdots \sigma_{m}}{\delta_{j+1} \delta_{j+2}\cdots \delta_{m}}.
\end{equation}
We will next deal with the three parts on the right-hand side separately.

For the first part $\frac{t_i t_{i+1}\cdots t_{j-1}}{\delta_i \delta_{i+1}\cdots \delta_{j-1}}$, we consider two different cases. If $j-i$ is even, then
\begin{equation}
\begin{aligned}
    \left|\frac{t_i t_{i+1}\cdots t_{j-1}}{\delta_i \delta_{i+1}\cdots \delta_{j-1}}\right| &= \left|\frac{(t_it_{i+1})(t_{i+2}t_{i+3})\cdots(t_{j-2}t_{j-1})}{(\delta_i\delta_{i+1})(\delta_{i+2}\delta_{i+3})\cdots(\delta_{j-2}\delta_{j-1})}\right| \\
    &=\left|\frac{t_it_{i+1}}{t_i^2}\frac{t_{i+2}t_{i+3}}{t_{i+2}^2}\cdots\frac{t_{j-2}t_{j-1}}{t_{j-2}^2}\right| \\
    &=\left|\frac{t_{i+1}}{t_i}\frac{t_{i+3}}{t_{i+2}}\cdots\frac{t_{j-1}}{t_{j-2}}\right| \\
    &= \begin{cases}
        \sqrt{\frac{d-1}{d-2}} & \text{ if } i=1, \\
        1 & \text{ if } i\geq 2.
    \end{cases}
\end{aligned}
\end{equation}
Therefore 
\begin{equation}
\label{eq:part_one_i_minus_j_even}
     \left|\frac{t_i t_{i+1}\cdots t_{j-1}}{\delta_i \delta_{i+1}\cdots \delta_{j-1}}\right|\leq \sqrt{\frac{d-1}{d-2}},
\end{equation}
when $j-i$ is even.

When $j-i$ is odd, then 
\[
 \left|\frac{t_i t_{i+1}\cdots t_{j-1}}{\delta_i \delta_{i+1}\cdots \delta_{j-1}}\right|=  \left|\frac{t_i t_{i+1}\cdots t_{j-2}}{\delta_i \delta_{i+1}\cdots \delta_{j-2}}\right|\left|\frac{t_{j-1}}{\delta_{j-1}}\right|\leq \sqrt{\frac{d-1}{d-2}}\left|\frac{t_{j-1}}{\delta_{j-1}}\right|,
\]
where we have used \eqref{eq:part_one_i_minus_j_even}. Note that
\begin{equation}
    \left|\frac{t_{j-1}}{\delta_{j-1}}\right|= 
    \begin{cases}
        \frac{a\sqrt{d-1}}{d-2},&\text{ if } j\text{ is odd,} \\ 
        \frac{\sqrt{d-2}}{a},&\text{ if } j=2, \\
        \frac{d-2}{a\sqrt{d-1}},&\text{ otherwise.}
    \end{cases}
\end{equation}
Therefore
\begin{equation}
    \left|\frac{t_i t_{i+1}\cdots t_{j-1}}{\delta_i \delta_{i+1}\cdots \delta_{j-1}}\right|\leq \max\left\{\frac{a (d-2)}{(d-2)^{3/2}},\frac{\sqrt{d-1}}{a}\right\},
\end{equation}
when $j-i$ is odd. Combining the above inequality with \eqref{eq:part_one_i_minus_j_even}, we have
\begin{equation}
\label{eq:part_one_final}
    \left|\frac{t_i t_{i+1}\cdots t_{j-1}}{\delta_i \delta_{i+1}\cdots \delta_{j-1}}\right| = \Or(1/a),
\end{equation}
for all pairs of $i\leq j$.

For the third part $\frac{\sigma_{j+1} \sigma_{j+2}\cdots \sigma_{m}}{\delta_{j+1} \delta_{j+2}\cdots \delta_{m}}$ in \eqref{eq:product_into_three_parts}, when $m-j$ is even, which implies that $j$ is odd, we use the fact that $\sigma_l\sigma_{l+1}=\delta_l\delta_{l+1}$ for $l\geq 2$ to show that 
\[
\frac{\sigma_{j+1} \sigma_{j+2}\cdots \sigma_{m}}{\delta_{j+1} \delta_{j+2}\cdots \delta_{m}} = 1.
\]
When $m-j$ is odd, which implies that $j$ is even, 
\[
\frac{\sigma_{j+1} \sigma_{j+2}\cdots \sigma_{m}}{\delta_{j+1} \delta_{j+2}\cdots \delta_{m}} = \frac{\sigma_{j+1}}{\delta_{j+1}}\frac{\sigma_{j+2} \sigma_{j+2}\cdots \sigma_{m}}{\delta_{j+2} \delta_{j+2}\cdots \delta_{m}} = \frac{\sigma_{j+1}}{\delta_{j+1}}.
\]
Note that
\[
\left|\frac{\sigma_{j+1}}{\delta_{j+1}}\right|=
\begin{cases}
    \frac{b(d-2)}{a(d-1)}=\frac{\gamma(d-2)}{d-1},&\text{ if }j<m-1,\\
    \frac{b(d-2)}{b(d-2)+a(d-a)}\leq 1,&\text{ if }j=m-1.
\end{cases}
\]
Therefore $\left|\frac{\sigma_{j+1}}{\delta_{j+1}}\right|=\Or(1)$. Consequently 
\begin{equation}
    \label{eq:part_three_final}
    \left|\frac{\sigma_{j+1} \sigma_{j+2}\cdots \sigma_{m}}{\delta_{j+1} \delta_{j+2}\cdots \delta_{m}}\right| = \Or(1).
\end{equation}

For the second part in \eqref{eq:product_into_three_parts}, we readily have $1/|\delta_j|=\Or(1/a)$. Combining this with \eqref{eq:part_one_final} and \eqref{eq:part_three_final}, we have through \eqref{eq:product_into_three_parts},
\begin{equation}
    |(H_1^{-1}(a,b))_{i,j}| = \Or(1/|a|^2).
\end{equation}
Note that because $a=2\cos(2l\pi/n)$, the smallest possible $|a|$ is 
\[
|a| = 2\sin(2\pi/n)=\Omega(1/n).
\]
Therefore we have the result as stated in this corollary.
\end{proof}

We will then prove the \cref{lem:inverseH_spectral_norm} in the main text, which we restate here:

\begin{lemma-non}
    Let $H_1(a,b)$ be as defined in \eqref{eq:H1ab}, with odd $m$, $t_1=\sqrt{d-2}$, $t_2=t_3=\dots=t_{m-1}=\sqrt{d-1}$, $\gamma=\frac{d-1}{2}$, for integer $d\geq 3$. Let $\mu_l = 2\cos(2\pi l/n)$ with $l=0,1\ldots, n-1$ satisfying $\mu_l\neq 0$. Let $a=\mu_l$ and $b=\gamma\mu_l$. Then $\|H_1^{-1}(a,b)\|=\Or(mn^2)$.
\end{lemma-non}

\begin{proof}
    This lemma is directly proved by combining \cref{cora:maxinverselement} with \cref{lem:bound_spectral_norm_using_inf_norm}.
\end{proof}

\end{document}